\newtheorem{theorem}{Theorem}
\newtheorem{lemma}{Lemma}
\newtheorem{fact}{Fact}
\theoremstyle{definition}
\newtheorem{definition}{Definition}
\newcommand{\expected}[1]{\mathop{{}\mathbb{E}}\left[{#1}\right]}
\newcommand{\proba}[1]{\Pr\left[\,{#1}\,\right]}
\DeclareMathOperator*{\argmin}{arg\,min}
\newif\ifappendix
\begin{document}

\title{Limits of Sequential Local Algorithms on the Random $k$-XORSAT Problem}
\author{Kingsley Yung \thanks{The Chinese University of Hong Kong. Email: \texttt{koyung@cse.cuhk.edu.hk}.}}
\date{}
\maketitle

\begin{abstract}
    The random $k$-XORSAT problem is a random constraint satisfaction problem of $n$ Boolean variables and $m=rn$ clauses, which a random instance can be expressed as a $G\mathbb{F}(2)$ linear system of the form $Ax=b$, where $A$ is a random $m \times n$ matrix with $k$ ones per row, and $b$ is a random vector.
    It is known that there exist two distinct thresholds $r_{core}(k) < r_{sat}(k)$ such that as $n \rightarrow \infty$ for $r < r_{sat}(k)$ the random instance has solutions with high probability, while for $r_{core} < r < r_{sat}(k)$ the solution space shatters into an exponential number of clusters.
    Sequential local algorithms are a natural class of algorithms which assign values to variables one by one iteratively.
    In each iteration, the algorithm runs some heuristics, called local rules, to decide the value assigned, based on the local neighborhood of the selected variables under the factor graph representation of the instance.

    We prove that for any $r > r_{core}(k)$ the sequential local algorithms with certain local rules fail to solve the random $k$-XORSAT with high probability.
    They include (1) the algorithm using the Unit Clause Propagation as local rule for $k \ge 9$, and (2) the algorithms using any local rule that can calculate the exact marginal probabilities of variables in instances with factor graphs that are trees, for $k\ge 13$.
    The well-known Belief Propagation and Survey Propagation are included in (2).
    Meanwhile, the best known linear-time algorithm succeeds with high probability for $r < r_{core}(k)$.
    Our results support the intuition that $r_{core}(k)$ is the sharp threshold for the existence of a linear-time algorithm for random $k$-XORSAT.

    Our approach is to apply the Overlap Gap Property OGP framework to the sub-instance induced by the core of the instance, instead of the whole instance.
    By doing so, the sequential local algorithms can be ruled out at density as low as $r_{core}(k)$, since the sub-instance exhibits OGP at much lower clause density, compared with the whole instance.

\end{abstract}

\section{Introduction}

\subsection{Background}

The $k$-XORSAT problem is a Boolean constraint satisfaction problem and closely related to the well-known $k$-SAT problem.
An instance $\Phi$ of the $k$-XORSAT problem consists of $m$ clauses in $n$ Boolean variables.
Each clause is a Boolean linear equation of $k$ variables of the form $x_{j_1} \oplus x_{j_2} \oplus \cdots \oplus x_{j_k} = b_j$, where $\oplus$ is the modulo-2 addition.
By convention, when we say an XORSAT instance, without the prefix "$k$", we mean the same except we do not require the clauses to have exactly $k$ variables.
An \textbf{assignment} $\sigma$ to the $n$ variables is a mapping from the set $\{x_i: i \in [n]\}$ of all $n$ variables to the set $\{0,1\}$ of the two Boolean values.
By abusing the notation, we can write it as the Boolean vector $\sigma = (\sigma(x_1), \sigma(x_2), \cdots, \sigma(x_n)) \in \{0,1\}^n$ containing the assigned values.
The \textbf{distance} $d(\sigma,\sigma')$ between any two assignments $\sigma$ and $\sigma'$ is defined to be the Hamming distance $d(\sigma,\sigma') = \sum_{i=1}^{n} \mathbbm{1}(\sigma(x_i) \neq \sigma'(x_i))$.
A clause is \textbf{satisfied by} an assignment if the equation of the clause holds when the variables are replaced by the corresponding assigned values, and an instance of the $k$-XORSAT problem is \textbf{satisfied by} an assignment if all its clauses are satisfied by the assignment.
An instance is \textbf{satisfiable} if it has at least one satisfying assignment, or \textbf{unsatisfiable} if it does not have any satisfying assignment.
The assignment $\sigma$ that satisfies the instance $\Phi$ is called a \textbf{solution} for the instance.
The set of all satisfying solutions for the instance $\Phi$ is called the \textbf{solution space} of the instance, denoted by $\mathcal{S}(\Phi)$.
We are interested in the complexity of finding a solution.

Since each clause is just a Boolean linear equation, an instance $\Phi$ can be viewed as a Boolean linear system $Ax=b$, where $A \in \{0,1\}^{m \times n}$ is an $m \times n$ Boolean matrix and $b \in \{0,1\}^n$ is a vector of length $n$.
Note that each row in $A$ contains exactly $k$ ones, since each clause has exactly $k$ variables.
We can see that finding solutions for a $k$-XORSAT instance is equivalent to solving a Boolean linear system, and the solution space $\mathcal{S}(\Phi)$ is an affine space inside $\{0,1\}^n$.
By abusing the notation, we can simply write $\Phi=(A,b)$, and the terms "clause" and "equation" are interchangeable here.

We are particularly interested in random instances of the $k$-XORSAT problem.
In a random instance $\mathbf{\Phi}$, each clause is drawn over all $2\binom{n}{k}$ possibilities, independently.
In particular, the left-hand side of the equation is the modulo-2 sum of $k$ variables chosen uniformly from $\binom{n}{k}$ possibilities, and the right-hand side is either 0 or 1 with even probabilities.
Therefore, a random instance $\mathbf{\Phi}$ of the $k$-XORSAT problem is drawn uniformly from the ensemble $\mathbf{\Phi}_k(n,m)$ of all possible instances of the $k$-XORSAT problem with $n$ variables and $m$ clauses, each clause containing exactly $k$ variables, and we denote this by $\mathbf{\Phi} \sim \mathbf{\Phi}_k(n,m)$.
We focus on the regime in which the number of variables $n$ goes to the infinity and the number of clauses $m$ is proportional to the number of variables $n$, that is, $m = rn$, where $r$ is a constant independent of $n$ and called the \emph{clause density}.

Since a $k$-XORSAT instance can be represented by a system of linear equations in $G\mathbb{F}(2)$, given an instance, some standard linear algebra methods such as the Gaussian elimination can determine whether the instance is satisfiable, find a solution, and even count the total number of solutions, in polynomial time.
However, beyond this particular algebraic structure, some variants of the $k$-XORSAT problem is hard to solve.
Achlioptas and Molloy mentioned in their paper \cite{achlioptasSolutionSpaceGeometry2015} that random instances of the $k$-XORSAT problem seems to be extremely difficult for both generic CSP solvers and SAT solvers, which do not take the algebraic structure into account.
Guidetti and Young \cite{guidettiComplexitySeveralConstraintsatisfaction2011} suggested that the random $k$-XORSAT is the most difficult for random walk type algorithms such as WalkSAT, among many random CSPs.
The difficulty of solving random $k$-XORSAT instances becomes more apparent when we only consider linear-time algorithms as efficient algorithms, since we do not have linear-time algebraic method to solve a linear system in general.

Many studies suggest that the difficulties of solving random CSPs are related to the \emph{phase transition} of the solution spaces when the clause density $r$ grows.
(We will have more detailed discussion in Section \ref{sec:phase-transition-of-random-k-xorsat}.)
Pittel and Sorkin \cite{pittelSatisfiabilityThresholdXORSAT2016} obtained the sharp \emph{satisfiability threshold} $r_{sat}(k)$ of random $k$-XORSAT, for general $k \ge 3$:
The random $k$-XORSAT instance is satisfiable w.h.p. when $r < r_{sat}(k)$, and it is unsatisfiable w.h.p. when $r > r_{sat}(k)$.
(We say an event $\mathcal{E}_n$, depending on a number $n$, occurs \textbf{with high probability}, or shortened to \textbf{w.h.p.}, if the probability of the event $\mathcal{E}_n$ occurring converges to 1 as $n$ goes to the infinity, that  is, $\lim_{n \rightarrow \infty}\proba{\mathcal{E}_n} = 1$.)
Furthermore, Ibrahimi, Kanoria, Kraning and Montanari \cite{ibrahimiSetSolutionsRandom2012} obtained the sharp \emph{clustering threshold} $r_{core}(k)$, which is less than $r_{sat}(k)$, of random $k$-XORSAT for $k \ge 3$.
When $r < r_{core}(k)$, w.h.p. the solution space of a random $k$-XORSAT instance is "well-connected".
When $r_{core}(k) < r < r_{sat}(k)$, w.h.p. the solution space of a random $k$-XORSAT instance shatters into an exponential number of "well-separated clusters".
In \cite{ibrahimiSetSolutionsRandom2012}, They also provided a linear-time algorithm that can solve a random $k$-XORSAT instance w.h.p. for $r < r_{core}(k)$.
On the other hand, no algorithm is known to be able to find a solution for a random $k$-XORSAT instance with non-vanishing probability in linear time, for $r_{core}(k) < r < r_{sat}(k)$ in which solutions exist with high probability.

In this work, we consider a natural class of algorithms, called \emph{sequential local algorithms}.
A sequential algorithm selects an unassigned variable randomly and assigns a value to it, iteratively, until every variable has an assigned value.
In each iteration of the algorithm, to decide the assigned value, the algorithm runs some heuristic called \emph{local rules} which return a value $p\in[0,1]$, and decide the assigned value to be either 0 or 1 randomly, according to the Bernoulli distribution with parameter $p$.
Ideally, if in each iteration the local rule can calculate the exact marginal probability of the selected variable over a randomly chosen solution for the instance conditioned on fixing all previously selected variables to their assigned values, the algorithm should be able to find a solution.
However, we restrict the ability of the local rules by only providing \emph{local} structure to the local rules.
To explain the meaning of "local", we first construct a graphical representation for the $k$-XORSAT instances: the \emph{factor graph}.
The \emph{factor graph} $G$ of a $k$-XORSAT instance $\Phi$ is constructed in the following way: (1) each variable is represented by a \emph{variable node}; (2) each equation is represented by an \emph{equation node}; (3) add an undirected edge $(v,e)$ if the variable $v$ is involved in the equation $e$.
Note that since there is a one-to-one correspondence between variables (equations) and variable nodes (equation nodes), in this paper, the terms \emph{variables (equations)} and \emph{variable nodes (equation nodes)} are interchangeable.
The \emph{distance} between any two nodes is the number of edges in the shortest path connecting the two nodes.
For any integer $R \ge 0$, the \emph{local neighborhood} $B_G(v,R)$ with radius $R$ of a node $v$ is the subgraph of $G$ induced by all nodes with distances less than or equal to $R$ from the node $v$.
By "local" in the name of the algorithms, it means the local rules only takes the local neighborhood of the selected variable, of radius $R$, as its input.

The actual implementation of a sequential local algorithm depends on the choice for the local rules.
To emphasize the choices for the local rules of the algorithms, the sequential local algorithm with the given local rule $\tau$ is called the \emph{$\tau$-decimation algorithm} \texttt{DEC\textsubscript{$\tau$}}.
The formal definitions of the sequential local algorithms, as well as the $\tau$-decimation algorithms, will be given in Section \ref{sec:sequential-local-algorithms}.
Note that if the local rule $\tau$ takes constant time, then the $\tau$-decimation algorithm also takes linear time.

We introduce a notion of \emph{freeness} to the sequential local algorithms.
For any iteration in which the local rule returns $1/2$, we call it a \emph{free step}.
Intuitively, a free step means the local rule cannot obtain useful information from the local structure and let the algorithms make a random guess for the assigned value.
We say a $\tau$-decimation algorithm is \emph{$\delta$-free} if w.h.p. it has at least $\delta n$ free steps.
Moreover, we say a $\tau$-decimation algorithm is \emph{strictly $\delta$-free} if it is $\delta'$-free for some $\delta'>\delta$.
If the $\tau$-decimation algorithm is $\delta$-free with large $\delta>0$, it means the algorithm makes a lot of random guess on the assigned values, and it is likely that the local rule $\tau$ cannot extract useful information from the local structure to guide the algorithm.
This leads to our contribution described in the next section.

\subsection{Main contribution}

The main contribution of this work consists of two parts.
The first part is to show that as $n \rightarrow \infty$ if the $\tau$-decimation algorithm is strictly $2\mu(k,r)$-free then w.h.p. it fails to find a solution for the random $k$-XORSAT instance, when the clause density $r$ is beyond the clustering threshold $r_{core}(k)$ but below the satisfiability threshold $r_{sat}(k)$.
This can be formally written as Theorem \ref{thm:main-theorem}.
The value $\mu(k,r)$ given in Theorem \ref{thm:main-theorem} is an upper bound of the number of variables removed from the instance in order to obtain the sub-instance called \emph{core instance}, which is crucial in our proof.
We will discuss its meaning in detail in Section \ref{sec:overlap-gap-property}.

\begin{theorem}
    \label{thm:main-theorem}
    For any $k \ge 3$ and $r \in (r_{core}(k),r_{sat}(k))$, if the $\tau$-decimation algorithm \textnormal{\texttt{DEC\textsubscript{$\tau$}}} is strictly $2\mu(k,r)$-free on the random $k$-XORSAT instance $\mathbf{\Phi} \sim \mathbf{\Phi}_k(n,rn)$,
    then w.h.p. the output assignment $\textnormal{\texttt{DEC\textsubscript{$\tau$}}}(\mathbf{\Phi})$ from the algorithm $\textnormal{\texttt{DEC\textsubscript{$\tau$}}}$ on input $\mathbf{\Phi}$ is not in the solution space $\mathcal{S}(\mathbf{\Phi})$, that is,
    \begin{align*}
        \lim_{n \rightarrow \infty} \proba{\textnormal{\texttt{DEC\textsubscript{$\tau$}}}(\mathbf{\Phi}) \in \mathcal{S}(\mathbf{\Phi})} = 0,
    \end{align*}
    where
    $Q$ is the largest solution of the fixed point equation $Q = 1 - \exp(-krQ^{k-1})$, and
    $\mu(k, r)$ is the real-valued function given by
        $\mu(k, r) = \exp(-krQ^{k-1}) + krQ^{k-1} \exp(-krQ^{k-1})$.
\end{theorem}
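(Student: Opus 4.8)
The plan is to apply the Overlap Gap Property (OGP) framework, but crucially to the \emph{core instance} rather than the whole instance. First I would recall that a random $k$-XORSAT instance $\mathbf{\Phi}$ has a well-studied \emph{2-core} of its factor graph: the maximal subgraph in which every variable node has degree at least $2$, obtained by iteratively peeling off variables that appear in at most one (remaining) equation. Call the resulting sub-instance the core instance $\mathbf{\Phi}_{core}$. A standard fact (via the peeling/leaf-removal process and its analysis by differential equations, as in Ibrahimi--Kanoria--Kraning--Montanari and Pittel--Sorkin) is that the expected number of variables \emph{removed} during this peeling concentrates, and $\mu(k,r)$ as defined in the statement is precisely (an upper bound on) the limiting fraction of removed variables — here $Q$ is the survival probability from the branching-process analysis of leaf removal, satisfying $Q = 1 - \exp(-krQ^{k-1})$, and $\mu(k,r) = (1 + krQ^{k-1})\exp(-krQ^{k-1})$ is the probability that a given variable is removed. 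So the core instance has roughly $(1-\mu(k,r))n$ variables, and any solution of $\mathbf{\Phi}$ restricts to a solution of $\mathbf{\Phi}_{core}$; conversely, every solution of the core extends (uniquely, given the peeling order) to a solution of $\mathbf{\Phi}$.

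The second ingredient is that the core instance exhibits OGP at \emph{every} density $r > r_{core}(k)$: there exist $0 < \nu_1 < \nu_2 < 1$ such that w.h.p. no two solutions $\sigma,\sigma'$ of $\mathbf{\Phi}_{core}$ have normalized Hamming distance strictly between $\nu_1$ and $\nu_2$ — equivalently, any two solutions are either very close or very far on the core variables. This is where the clustering threshold enters: above $r_{core}(k)$ the core is non-empty and dense enough that a first-moment / union-bound computation over pairs of solutions at a prescribed distance $d$ in the core shows the expected number of such pairs is exponentially small for $d/n_{core}$ in an interval bounded away from $0$ and from the typical pairwise distance. I would cite or adapt the clustering results of \cite{ibrahimiSetSolutionsRandom2012} (and the connectivity/shattering picture) to get this OGP statement for the core with explicit constants depending only on $k$ and $r$.

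The third and central step is the interpolation/stability argument that turns OGP into an impossibility result for sequential local algorithms. The key structural property of a $\tau$-decimation algorithm with a radius-$R$ local rule is that its output is, in a suitable sense, a Lipschitz function of the randomness: resampling the value chosen in a single free step, or re-randomizing a bounded number of the i.i.d. coin flips, changes only $o(n)$ coordinates of the output w.h.p., because local neighborhoods of radius $R$ in the sparse random factor graph have size $O(\mathrm{polylog}\, n)$ and are w.h.p. tree-like, so perturbations propagate only a bounded distance. Now run the algorithm and consider the $\delta n$ free steps with $\delta > 2\mu(k,r)$; restrict attention to the (at least $(\delta - \mu(k,r))n \ge (\mu(k,r)+\varepsilon)n$) free steps that fall on \emph{core} variables — there are more than $\mu(k,r)n \ge$ the slack needed because only $\mu(k,r)n$ variables are outside the core. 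By re-randomizing these core free-step coins one at a time (or interpolating between two independent runs that agree outside the free steps), I build a path of candidate solutions whose consecutive elements differ by $o(n)$ in the core coordinates, while the endpoints differ by $\Theta(n)$ in the core (the free steps are genuine random guesses, so two runs disagree on a constant fraction of them, hence on $\Theta(n)$ core variables). If the algorithm succeeded with non-vanishing probability, this path would contain, w.h.p., a solution of $\mathbf{\Phi}_{core}$ at core-distance inside the forbidden interval $(\nu_1,\nu_2)$ — contradicting OGP. Hence the success probability is $o(1)$.

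The main obstacle I expect is making the stability/Lipschitz claim fully rigorous for \emph{sequential} (adaptive) local algorithms: unlike a one-shot factor-of-i.i.d.\ map, the decimation algorithm reveals the factor graph and its coins in an order that depends on past choices, so a single perturbation can in principle change \emph{which} variable is selected next and cascade. The fix is to condition on the random permutation giving the processing order and on the factor graph, exploit that the dependence of variable $v$'s assigned value on a far-away coin is mediated only through the radius-$R$ ball $B_G(v,R)$ (which is w.h.p. disjoint from the perturbed region for all but $o(n)$ variables), and carefully bound the number of variables whose ball is touched — this is the analogue of the "long-range independence" arguments used in prior OGP lower bounds against local algorithms, and the bulk of the technical work, together with verifying the core-OGP first-moment bound with the constants matching the threshold $r_{core}(k)$, goes here.
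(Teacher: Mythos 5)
Your proposal follows essentially the same route as the paper: pass to the 2-core sub-instance, use that the core exhibits OGP for every $r > r_{core}(k)$ (the paper's Lemma~\ref{lemma:ogp_of_core}), build an interpolating sequence of outputs by resampling the internal coins $\mathbf{U}_1,\dots,\mathbf{U}_n$ one at a time, show consecutive outputs are $o(n)$-close via a stability lemma, show the endpoints are $\Theta(n)$ apart on the core via the freeness assumption, and conclude by forcing an intermediate output into the forbidden gap. Your accounting of the threshold is in fact slightly sharper than the paper's: the paper bounds $\mathbb{E}[d(\sigma_0,\sigma_n)]\ge(\delta/2)n$ over \emph{all} free steps and then subtracts the $\mu n$ non-core coordinates (hence the $2\mu$), whereas counting only the at least $(\delta-\mu)n$ free steps that land on core variables (each contributing $1/2$ in expectation) would need only $\delta>\mu$; either way the qualitative conclusion is the same.

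One point of precision: the stability claim is not that ``perturbations propagate only a bounded distance'' because radius-$R$ balls are small. Flipping $\mathbf{U}_i$ changes $\sigma(x_i)$, which alters the residual right-hand sides seen in later iterations, and this can cascade to variables arbitrarily far from $x_i$. The paper controls this with Gamarnik--Sudan's notion of \emph{influence range}: a chain of variables, each within graph distance $R$ of the next and with strictly decreasing ordering values $\mathbf{Z}$. What one proves (Lemma~\ref{lemma:influence-range-size}) is that w.h.p.\ every such chain has total size $\le n^{1/6}$ --- a sublinear bound on the affected set, not a bounded propagation radius. Also note that by keeping the ordering vector $\mathbf{Z}$ fixed throughout the interpolation and resampling only $\mathbf{U}$, the variable-selection order is deterministic, which eliminates the ``which variable is chosen next'' cascading you flagged; the only cascading is through the residual instance, which is exactly what the influence-range lemma bounds. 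With these two clarifications your outline matches the paper's proof.
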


\noindent \textbf{Note.\;} This theorem can also be applied to the \emph{non-sequential local algorithms} in which the algorithms run their local rules and decide the assigned value for each variable, \emph{in parallel}, without depending on the other assigned values.
We will briefly discuss the reason at the end of Section \ref{sec:technique} where we discuss the proof technique we used.

\medskip

The best known linear algorithm of finding a solution for the random $k$-XORSAT instance succeeds w.h.p., for $k \ge 3$ and $r < r_{core}(k)$ \cite{ibrahimiSetSolutionsRandom2012}.
That means these sequential local algorithms do not outperform the best known linear algorithm.
Note that $r_{core}(k)$ is where the best known linear algorithm succeeds up to, and where the sequential local algorithms starts failing.
These support the intuition that $r_{core}(k)$ is the sharp threshold of the existence of a linear-time algorithm for random $k$-XORSAT problem.

The second part of our contribution is to verify that the "freeness" condition in Theorem \ref{thm:main-theorem} is satisfied by the $\tau$-decimation algorithm with certain local rules $\tau$.
One of them is the simplest local rule, the Unit Clause Propagation \texttt{UC}, which tries to satisfy the unit clause on the selected variable if exists, or make random guess otherwise.
By using the Wormald's method of differential equations to count the number of free steps run by \texttt{UC}-decimation algorithm \texttt{DEC\textsubscript{UC}}, we can show that it is strictly $2\mu(k,r)$-free on the random $k$-XORSAT instance $\mathbf{\Phi}$ for $k \ge 9$, which leads to the following theorem.
\begin{theorem}
    \label{thm:main-theorem-unit-clause}
    For any $k \ge 9$, $r \in (r_{core}(k),r_{sat}(k))$,
    given a random $k$-XORSAT instance $\mathbf{\Phi} \sim \mathbf{\Phi}_k(n,rn)$,
    we denote by \textnormal{\texttt{DEC\textsubscript{UC}}}$(\mathbf{\Phi})$ the output assignment from the \textnormal{\texttt{UC}}-decimation algorithm \textnormal{\texttt{DEC\textsubscript{UC}}} on input $\mathbf{\Phi}$.
    Then, we have
    $\lim_{n \rightarrow \infty} \proba{\textnormal{\texttt{DEC\textsubscript{UC}}}(\mathbf{\Phi}) \in \mathcal{S}(\mathbf{\Phi})} = 0.$
\end{theorem}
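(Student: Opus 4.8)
The plan is to reduce to Theorem~\ref{thm:main-theorem}: by that theorem it suffices to show that for every $k\ge 9$ and $r\in(r_{core}(k),r_{sat}(k))$ the algorithm \textnormal{\texttt{DEC\textsubscript{UC}}} is strictly $2\mu(k,r)$-free, i.e.\ that for some constant $\delta'=\delta'(k,r)>2\mu(k,r)$ it performs at least $\delta' n$ free steps on $\mathbf{\Phi}\sim\mathbf{\Phi}_k(n,rn)$ w.h.p. A step of \textnormal{\texttt{DEC\textsubscript{UC}}} is free exactly when \texttt{UC} returns no forcing value for the processed variable --- essentially, when that variable lies in no residual unit clause at the moment it is selected --- so the task is to lower bound the number of such steps.

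To do this I would apply Wormald's method of differential equations to the execution of \textnormal{\texttt{DEC\textsubscript{UC}}}. Since at each step the processed variable is chosen uniformly among those still unassigned and \texttt{UC}'s output does not influence which variable comes next, the residual clause-length profile evolves as under a uniformly random processing order: writing $V_j(t)$ for the number of residual clauses with exactly $j$ unassigned variables after $t$ steps, the scaled quantities $V_j(xn)/n$ converge w.h.p., uniformly in $x=t/n$, to the solution $c_j(x)=r\binom{k}{j}(1-x)^j x^{k-j}$ of the associated ODE system. Tracking in addition the cumulative number of free steps, whose one-step expected increment is the probability that the selected variable carries no residual unit clause --- asymptotically a function of $V_1(t)/(n-t)$ via the usual configuration-model estimate, and, for the propagating version of \texttt{UC}, also of the induced unit-clause cascade whose criticality is governed by $2c_2(x)/(1-x)=rk(k-1)(1-x)x^{k-2}$ --- and verifying Wormald's boundedness and Lipschitz hypotheses (plus, in the propagating case, the standard scaling-window argument for when a supercritical cascade takes off), one obtains that w.h.p.\ \textnormal{\texttt{DEC\textsubscript{UC}}} performs $(\delta_{\mathtt{UC}}(k,r)-o(1))\,n$ free steps, where $\delta_{\mathtt{UC}}(k,r)$ is an explicit integral functional of the above trajectory.

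It then remains to prove $\delta_{\mathtt{UC}}(k,r)>2\mu(k,r)$ for all $k\ge 9$ and $r\in(r_{core}(k),r_{sat}(k))$; any $\delta'$ strictly between the two sides then completes the proof. I would use that both sides are monotone in $r$: $\delta_{\mathtt{UC}}(k,r)$ is non-increasing (more clauses produce unit clauses sooner, hence fewer free steps), and $\mu(k,r)=e^{-\lambda}(1+\lambda)$ with $\lambda=krQ^{k-1}$ is non-increasing as well, since $Q$ --- and hence $\lambda$ --- increases with $r$ while $e^{-\lambda}(1+\lambda)$ decreases in $\lambda$. This reduces the claim to controlling the difference near the endpoints of the interval, where $\mu$ is largest as $r\to r_{core}(k)^+$ (there $\lambda=\ln\frac1{1-Q_c}$ for $Q_c$ the double root of $Q=1-e^{-krQ^{k-1}}$, so $2\mu=2(1-Q_c)\bigl(1+\ln\tfrac1{1-Q_c}\bigr)$) and $\delta_{\mathtt{UC}}$ is smallest as $r\to r_{sat}(k)^-$. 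For large $k$ the inequality is comfortable, since clauses stay long for essentially the whole run, so $\delta_{\mathtt{UC}}(k,r)\to 1$ uniformly while $2\mu(k,r)\to 0$; for the finitely many remaining values $9\le k\le k_0$ one verifies it by directly computing $\delta_{\mathtt{UC}}$ and $\mu$ over the interval, with explicit error control. The main obstacle I anticipate is precisely this verification down to $k=9$: the free-step estimate $\delta_{\mathtt{UC}}(k,r)$ must be extracted tightly enough --- handling the unit-propagation cascade carefully and not squandering the margin in Wormald's error terms --- that it provably beats $2\mu(k,r)$ at every $r$ in the range, whereas for $k\le 8$ the two quantities are too close for this route to conclude.
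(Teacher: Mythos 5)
Your proposal follows essentially the same route as the paper: reduce to Theorem~\ref{thm:main-theorem} by showing \textnormal{\texttt{DEC\textsubscript{UC}}} is strictly $2\mu(k,r)$-free, derive the free-step fraction via Wormald's differential-equations method (your trajectory $c_j(x)=r\binom{k}{j}(1-x)^jx^{k-j}$ is exactly the paper's $y_j$), and compare the resulting quantity against $2\mu(k,r)$. One clarification: the paper's \texttt{UC} local rule (Algorithm 2) is \emph{not} a propagating cascade --- it only inspects whether the selected variable currently sits in a unit clause --- so the branching-process/scaling-window machinery you hedge about is unnecessary, and the free-step fraction reduces to the closed form $w_1(k,r)=\frac{(kr)^{1/(1-k)}}{k-1}\gamma\bigl(\frac{1}{k-1},kr\bigr)$; the paper then replaces your monotonicity-plus-endpoint-plus-numerics plan with explicit $r$-uniform bounds $\mu(k,r)<\mu_u(k)$ and $w_1(k,r)\ge w_1^*(k)$, both monotone in $k$, so that the entire claim for all $k\ge 9$ reduces to the single check $2\mu_u(9)\le 0.3420<0.3575\le w_1^*(9)$.
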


In each iteration, the role of the local rules is to calculate the marginal probability of the selected variable in the instance conditioned on fixing all previously selected variables to their assigned values.
Belief Propagation \texttt{BP} and Survey Propagation \texttt{SP} are surprisingly good at approximating marginal probabilities of variables over randomly chosen solutions in many random constraint satisfaction problems empirically \cite{mezardAnalyticAlgorithmicSolution2002, gomesSatisfiedPhysics2002, braunsteinSurveyPropagationAlgorithm2005, ricci-tersenghiCavityMethodDecimated2009, krocMessagepassingLocalHeuristics2009}.
In particular, it is well-known that they can calculate the exact marginal probabilities of variables when the underlying factor graph is a tree, which is proved analytically.
If Belief Propagation \texttt{BP} and Survey Propagation \texttt{SP} are used as the local rule $\tau$, it is natural to expect that the $\tau$-decimation algorithm can find a solution.
However, we prove that even the local rule $\tau$ can give the exact marginal probabilities of variables over a randomly chosen solution for any instance whose factor graph is a tree, the $\tau$-decimation algorithm still cannot find a solution w.h.p. for $k \ge 13$.
We know that w.h.p. the local neighborhood of the factor graph of the random $k$-XORSAT instance is a tree.
Therefore, running \texttt{BP} and \texttt{SP} on the local neighborhood actually gives the exact marginal probabilities of the selected variables, with respect to the sub-instance induced by the local neighborhood.
This implies that both \texttt{BP}-decimation algorithm \texttt{DEC\textsubscript{BP}} and \texttt{SP}-decimation algorithm \texttt{DEC\textsubscript{SP}} fail to find a solution w.h.p. for $k \ge 13$.
\begin{theorem}
    \label{thm:main-theorem-exact-marginal}
    For any $k \ge 13$, $r \in (r_{core}(k),r_{sat}(k))$,
    given a random $k$-XORSAT instance $\mathbf{\Phi} \sim \mathbf{\Phi}_k(n,rn)$,
    denote by \textnormal{\texttt{DEC\textsubscript{$\tau$}}}$(\mathbf{\Phi})$ the output assignment from the \textnormal{\texttt{$\tau$}}-decimation algorithm \textnormal{\texttt{DEC\textsubscript{UC}}} on input $\mathbf{\Phi}$.
    Assume the local rule $\tau$ outputs the exact marginal probability of a selected variable for any instance whose factor graph is a tree.
    Then, we have
    $\lim_{n \rightarrow \infty} \proba{\textnormal{\texttt{DEC\textsubscript{$\tau$}}}(\mathbf{\Phi}) \in \mathcal{S}(\mathbf{\Phi})} = 0.$
\end{theorem}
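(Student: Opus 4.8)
The plan is to reduce Theorem~\ref{thm:main-theorem-exact-marginal} to Theorem~\ref{thm:main-theorem} by showing that any local rule $\tau$ which computes exact marginals on tree factor graphs is strictly $2\mu(k,r)$-free on $\mathbf{\Phi}\sim\mathbf{\Phi}_k(n,rn)$. The conceptual point is that a variable's marginal over the solution space of a \emph{tree} XORSAT instance is always exactly $1/2$: a connected XORSAT instance whose factor graph is a tree has full-rank (indeed the equations are linearly independent, since the bipartite incidence structure has no cycle), so its solution space is an affine subspace of dimension $n'-m'>0$ on which every single variable is unbiased. Hence on any tree-shaped local neighborhood the rule $\tau$ must return $1/2$, i.e.\ every iteration whose radius-$R$ neighborhood is a tree is a free step. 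So the task becomes: lower-bound the number of iterations of $\texttt{DEC}_\tau$ in which the explored local neighborhood of the currently selected variable is a tree, and show this count exceeds $2\mu(k,r)\,n$ w.h.p.\ for $k\ge 13$.

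The key steps, in order, are as follows. First, make precise that $\tau$ returns $1/2$ on tree inputs (the rank/dimension argument above), so ``tree neighborhood'' $\Rightarrow$ ``free step.'' Second, control the evolution of the factor graph under the decimation process: after $t$ variables have been assigned and the instance simplified, the residual factor graph is still locally tree-like in the relevant sense, because the initial graph $\mathbf{\Phi}_k(n,rn)$ has only $O(1)$ short cycles and decimation (removing a variable node and its incident equations / shortening equations) only deletes structure, so it cannot create new short cycles. One shows that w.h.p.\ all but $o(n)$ of the variable nodes have radius-$R$ neighborhood a tree \emph{throughout} the run — e.g.\ via a first-moment bound on the number of variables ever lying within distance $R$ of a cycle, uniformly over the (at most $n$) time steps, using a union bound over $t$. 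Third, combine: the number of free steps is at least $n - o(n)$ minus the (negligible) number of non-tree steps, so $\texttt{DEC}_\tau$ is $\delta$-free for every $\delta<1$, in particular strictly $2\mu(k,r)$-free provided $2\mu(k,r)<1$. Fourth, verify the arithmetic inequality $2\mu(k,r)<1$ for all $k\ge 13$ and $r\in(r_{core}(k),r_{sat}(k))$ using the closed form $\mu(k,r)=(1+krQ^{k-1})\exp(-krQ^{k-1})$ with $Q$ the largest root of $Q=1-\exp(-krQ^{k-1})$; here one bounds $krQ^{k-1}$ away from the range where $(1+x)e^{-x}\ge 1/2$, monotonicity in $r$ reducing it to checking the endpoint $r=r_{core}(k)$, and the $k\ge 13$ hypothesis is exactly what makes the estimate go through (for smaller $k$ the bound $r_{core}(k)$ is too small and $\mu$ too large). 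Finally, invoke Theorem~\ref{thm:main-theorem} to conclude $\Pr[\texttt{DEC}_\tau(\mathbf{\Phi})\in\mathcal{S}(\mathbf{\Phi})]\to 0$, and note that since w.h.p.\ the radius-$R$ neighborhoods in $\mathbf{\Phi}_k(n,rn)$ are trees, $\texttt{BP}$ and $\texttt{SP}$ run locally return these exact tree marginals, so the theorem applies to $\texttt{DEC}_{\texttt{BP}}$ and $\texttt{DEC}_{\texttt{SP}}$.

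I expect the main obstacle to be the second step: proving that the residual factor graph stays locally tree-like \emph{simultaneously for all $\Theta(n)$ time steps of the decimation}, not merely at the start. The subtlety is that which variable is ``selected'' at step $t$, and hence which neighborhood matters, depends on the random execution so far, so one needs either a worst-case structural statement (``at most $o(n)$ variables are ever within distance $R$ of a short cycle in any residual graph obtained by deleting nodes from $\mathbf{\Phi}$'') — which follows because deletions don't create cycles, so it suffices to bound variables within distance $R$ of a cycle in the \emph{original} graph — or a careful coupling argument. I would take the former route: bound, by a standard first-moment / configuration-model computation, the expected number of variables in $\mathbf{\Phi}_k(n,rn)$ whose radius-$R$ ball contains a cycle by $O(1)$ (or at worst $o(n)$), and observe this property is monotone under the edge/vertex deletions performed by $\texttt{DEC}_\tau$; everything else is either bookkeeping or the elementary linear-algebra fact about tree instances.
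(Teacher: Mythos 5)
Your reduction to Theorem~\ref{thm:main-theorem} is the right skeleton, and your ``second step'' (tree-likeness is monotone under deletions, so it persists throughout the run) is a fine way to handle the locality issue — but the first step, which is the one doing all the work, is incorrect.

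The claim that a local rule computing exact marginals on trees must return $1/2$ whenever the neighborhood $B_{\Phi_t}(x^*,R)$ is a tree is false, because the residual instance $\Phi_t$ for $t>0$ contains \emph{unit clauses}. When a variable is decimated, its clauses are shortened; a clause that becomes degree $1$ is a unit equation forcing its remaining variable. Such degree-$1$ equation nodes do appear inside the local neighborhood, and they (or a chain of degree-$2$ equations terminating in one) force the root variable's marginal to be $0$ or $1$, not $1/2$. Concretely, the tree $c_2 \,-\, x^* \,-\, c_1 \,-\, y$ with $c_1: x^*\oplus y = 0$ and $c_2: x^* = 1$ has exactly one solution, with $\Pr[x^*=1]=1$. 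Your supporting linear-algebra argument is also off: $n'-m'>0$ need not hold on a tree (a path alternating variable/equation has $n'=m'$), linear independence fails if two unit clauses share a variable, and even when the dimension is positive individual variables can be frozen (e.g.\ $x_1=1$, $x_1\oplus x_2\oplus x_3=0$ gives dimension $1$ but $\Pr[x_1=1]=1$). So ``tree neighborhood $\Rightarrow$ free step'' does not hold, and $\texttt{DEC}_\tau$ is emphatically not $(1-o(1))$-free.

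The consequence is that your step four checks the wrong inequality. What the paper establishes (Lemma~\ref{lemma:bp-free} via the recursion $S_0(x)=1$, $S_l(x)=\exp(-kr[(1-x)(1-S_{l-1}(x))+x]^{k-1})$) is that the free-step probability at normalized time $x$ is only \emph{at least} $S_R(x)$, yielding freeness $w_e(k,r)=\int_0^1 S_R(x)\,dx < 1$. The $k\ge 13$ hypothesis is then needed to verify $2\mu(k,r) < w_e(k,r)$ (via the closed-form bounds $\mu_u(k)$ and $w_e^*(k,r)$ in Lemmas~\ref{lemma:upper_bound_of_diameter} and \ref{lemma:lower-bound-w-b}), which is a much tighter comparison than $2\mu(k,r)<1$; checking only $2\mu<1$ would give you a weaker and in fact vacuous conclusion once the freeness is corrected. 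To repair the argument you need to recursively propagate, generation by generation, the probability that a subtree's root is free given the degree profile of $\Phi_t$ at time $t$ (accounting for degree-$1$ equation nodes that ``kill'' freeness), integrate over $t$, and only then compare against $2\mu(k,r)$.
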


To prove Theorem \ref{thm:main-theorem-unit-clause} and Theorem \ref{thm:main-theorem-exact-marginal}, we only need to calculate the number of free steps in \texttt{DEC\textsubscript{UC}} and the number of free steps in \texttt{DEC\textsubscript{$\tau$}} with the assumption on $\tau$ described in Theorem \ref{thm:main-theorem-exact-marginal}.
If the number of free steps is strictly greater than $2\mu(k,r)n$ with high probability, we know that they are strictly $2\mu(k,r)$-free, and the results follow immediately by applying Theorem \ref{thm:main-theorem}.
Similarly, to obtain the same results for other $\tau$-decimation algorithms, all we need to do is to calculate the number of free steps for those algorithms.
If they are strictly $2\mu(k,r)$-free, we can obtain the same results by applying Theorem \ref{thm:main-theorem}.
Note that, due to certain limitations of our calculation, our results are limited to $k \ge 9$ in Theorem \ref{thm:main-theorem-unit-clause} and $k \ge 13$ in Theorem \ref{thm:main-theorem-exact-marginal}.
We believe that the results hold for general $k \ge 3$, and can be proved by improving some subtle calculation in our argument.

Although we only show a few of implementations of the sequential local algorithms, we believe that the results are general across many sequential local algorithms with different local rules due to Theorem \ref{thm:main-theorem-exact-marginal}.
In the framework of sequential local algorithms, the role of the local rules is to approximate the marginal probabilities of the selected variables over a random solution for the instance induced by the local neighborhood centered at the selected variables.
Therefore, we believe that for any local rule that can make a good approximation on the marginals, it shall give similar results as Theorem \ref{thm:main-theorem-exact-marginal}.
(Note that a more general definition of "$\delta$-free" may be useful, for example, we can say a $\tau$-decimation algorithm is \emph{$(\delta,\epsilon)$-free} if we have $|p-1/2|<\epsilon$, where $p$ is the value returned by the local rule, for at least $\delta n$ iterations.)

It is worth to mention the differences between these implementations of the sequential local algorithms and their well-known variants.
Firstly, the \texttt{UC}-decimation algorithm is slightly different from the well-known Unit Clause algorithm.
Under the framework of sequential local algorithm, the variables are selected in a random order.
However, in the well-studied Unit Clause algorithm the variables in unit clauses are selected first \cite{achlioptasTwocoloringRandomHypergraphs2002}.
The difference in the variable order could be crucial to the effectiveness of the Unit Clause algorithm.
Secondly, the \texttt{BP}-decimation algorithm and the \texttt{SP}-decimation algorithm are slightly different from the Belief Propagation Guided Decimation Algorithm \cite{coja-oghlanBeliefPropagationGuided2017} and Survey Propagation Decimation Algorithm \cite{braunsteinSurveyPropagationLocal2004, braunsteinSurveyPropagationAlgorithm2005, manevaNewLookSurvey2007}.
In the framework of sequential local algorithms, we only provide the local neighborhood to \texttt{BP} and \texttt{SP}.
It is equivalent to bounding the number of messages passed in each decimation step by the constant $R \ge 0$ in BP-guided Decimation Algorithm and SP-guided Decimation Algorithm.
It is in contrast to many empirical studies of BP-guided Decimation Algorithm and SP-guided Decimation Algorithm which allow the message passing iteration to continue until it converges.


Moreover, this work provides a new variant of the \emph{overlap gap property method}, which was originally introduced by Gamarnik and Sudan \cite{gamarnikPerformanceSequentialLocal2017}.
Instead of considering the overlap gap property of the whole instance, we utilize that property of a sub-instance of the random $k$-XORSAT instance.
In particular, the proof of this work is inspired by \cite{gamarnikPerformanceSequentialLocal2017}, which uses the overlap gap property method to rule out the class of balanced sequential local algorithms from being able to solve random NAE-$k$-SAT problem when the clause density is close to the satisfiability threshold.
Instead of directly applying the method on the whole instance, we focus on the sub-instance induced by the 2-core of the factor graph of the instance.
This modification help us obtain tight bounds of algorithmic threshold unlike \cite{gamarnikPerformanceSequentialLocal2017}.
If we apply the original overlap gap property method and use the first moment method to obtain the property, we are able to show that the sequential local algorithms fail to solve the random $k$-XORSAT problem when the clause density is lower than a certain threshold $r_1(k)$.
However, that threshold $r_1(k)$ is much higher than the $r_{core}(k)$.
It only tells us that the algorithms fails when the density is very close to the satisfiability threshold $r_{sat}(k)$.
With our modification, we can lower that threshold to exactly $r_{core}(k)$, namely, the algorithms fail in finding a solution when the clause density is as low as the clustering threshold.
This opens a new possibility to improve other results which use the overlap gap property method on other random constraint satisfaction problems.

\subsection{Phase transition of random $k$-XORSAT}
\label{sec:phase-transition-of-random-k-xorsat}

Many random ensembles of constraint satisfaction problems CSPs such as random $k$-SAT and random NAE-$k$-SAT are closely related to the random $k$-XORSAT.
For example, the well-known random $k$-SAT is analogous to the random $k$-XORSAT, in the sense that we can obtain a $k$-XORSAT instance from a $k$-SAT instance by replacing OR operators with XOR operators.
We are particularly interested in the existences of some sharp thresholds on the clause density $r$ in which the behaviors of a random instance changes sharply when the clause density $r$ grows and passes through those thresholds.
The following three thresholds are particularly of interest.
\begin{enumerate}
    \item The \emph{satisfiability threshold} separates the regime where w.h.p. the random instance is satisfiable and the regime where w.h.p. it is unsatisfiable.
    \item The \emph{clustering threshold} separates the regime where w.h.p. the solution space can be partitioned into well-separated subsets, each containing an exponential small fraction of solutions, and the regime where w.h.p. the solution space cannot be partitioned in this way.
    \item The \emph{algorithmic threshold} separates the regime where we have an efficient algorithm that can find a solution for a satisfiable random instance with non-vanishing probability, and the regime where no such algorithm exists.
\end{enumerate}
Many random constraint satisfaction problems such as random $k$-SAT, random NAE-$k$-SAT and random graph coloring share the following phenomena related to these thresholds \cite{achlioptasAlgorithmicBarriersPhase2008}.
\begin{itemize}
    \item There is an upper bound of the (conjectured) satisfiability threshold.
    \item There is a lower bound of the (conjectured) satisfiability threshold, from the non-constructive proof, and the lower bound is essentially tight.
    \item There are some polynomial time algorithms that can find a solution when the density is relatively low, but no polynomial time algorithm is known to succeed when the density is close to the satisfiability threshold. This leads to a conjectured algorithmic threshold, which is asymptotically below the (conjectured) satisfiability threshold.
    \item The clustering phenomenon takes place when the density is greater than a (conjectured) clustering threshold, and this threshold is close to or even asymptotically equal to the algorithmic threshold.
\end{itemize}

It is worth to mention that not every random constraint satisfaction problems share this set of phenomena.
The most notable example is \emph{symmetric binary perceptron} (SBP).
Its satisfiability threshold $\alpha_{sat}(k)>0$ was established by \cite{perkinsFrozen1RSBStructure2021, abbeProofContiguityConjecture2022}.
They also showed that SBP exhibits clustering property and almost all clusters are singletons, for clause density $\alpha > 0$.
On the other hand, \cite{abbeBinaryPerceptronEfficient2022} gave a polynomial-time algorithm that can find solutions w.h.p., for low clause density.
Therefore, there is a regime of low clause density in which SBP exhibits clustering property, and it is solvable in polynomial time, simultaneously.
Its clustering phenomenon does not cause the hardness.

Being analogous to the random $k$-SAT problem, the random $k$-XORSAT problem shares those phenomena with other random constraint satisfaction problems.
However, the story is slightly different in the random $k$-XORSAT problem.
Since the $k$-XORSAT instances are equivalent to Boolean linear systems, their solution spaces are simply some affine subspaces in the Hamming hypercube $\{0,1\}^n$.
Because of their algebraic structures, we are able to obtain the existence and the ($n$-independent) value of the satisfiability threshold of the random $k$-XORSAT problem.
Dubois and Mandler \cite{dubois3XORSATThreshold2002} proved that there exists an $n$-independent satisfiability threshold $r_{sat}(k)$ for $k=3$, and determined the exact value of the sharp threshold by the second moment argument.
Pittel and Sorkin \cite{pittelSatisfiabilityThresholdXORSAT2016} further extended it for general $k \ge 3$.
An independent work on cuckoo hashing from \cite{dietzfelbingerTightThresholdsCuckoo2010} also included an argument of the $k$-XORSAT satisfiability threshold for general $k \ge 3$.
Those proofs consider the 2-core of the hypergraph associated with the random $k$-XORSAT instance.
(In graph theory, a \emph{$k$-core} of a (hyper)graph is a maximal subgraph in which all vertices have degree at least $k$.)
Based on the 2-core, we can construct a sub-instance, called \emph{2-core instance} or simply \emph{core instance} of the original instance.
One can prove that the original instance is satisfiable if and only if core instance is satisfiable.
\cite{coccoRigorousDecimationBasedConstruction2003, mezardTwoSolutionsDiluted2003} studied the core instance and determined the satisfiability threshold of the core instance, which can be converted to the satisfiability threshold of the random $k$-XORSAT instance.

Mézard, Ricci-Tersenghi and Zecchina \cite{mezardTwoSolutionsDiluted2003} started the study of the clustering phenomenon of random $k$-XORSAT, and linked it to the existence of the non-empty 2-core instance.
From \cite{pittelSuddenEmergenceGiant1996, molloyCoresRandomHypergraphs2005, kimPoissonCloningModel2004}, we know the non-empty 2-core of random hypergraphs suddenly emerges at a critical edge density $r_{core}(k)$.
After that, Ibrahimi, Kanoria, Kraning and Montanari \cite{ibrahimiSetSolutionsRandom2012}, and Achlioptas and Molloy \cite{achlioptasSolutionSpaceGeometry2015} independently proved that there exists the clustering threshold $r_{clt}(k)$, which is equal to $r_{core}(k)$ and smaller than $r_{sat}(k)$, such that w.h.p. the solution space is a connected component for density $r < r_{core}(k)$, and w.h.p. the solution space shatters into exponentially many $\Theta(n)$-separated components for density $r > r_{core}(k)$, provided that we consider the solution space as a graph in which we add an edge between two solutions if their Hamming distance is $O(\log n)$.

As we mentioned before, the random $k$-XORSAT instance can be written as a random Boolean linear system, so it can be solved in polynomial time by using linear algebra method, regardless the clause density.
For example, Gaussian elimination can solve it in $O(n^3)$ time.
Since we do not have linear time algebraic method to solve linear system, we can still study the algorithmic phase transition if we only consider linear-time algorithms as efficient algorithms.
In the proofs in \cite{ibrahimiSetSolutionsRandom2012}, they provided an algorithm that can find a solution in linear time when $r < r_{core}(k)$, which implies that $r_{core}(k)$ is a lower bound of the (linear-time version of) algorithmic threshold $r_{alg}(k)$ of the random $k$-XORSAT problem.
We conjecture that no algorithm can solve the random $k$-XORSAT problem in linear time with non-vanishing probability when $r > r_{core}(k)$, which implies $r_{core}(k)$ is an upper bound of $r_{alg}(k)$ and thus $r_{alg}(k)=r_{core}(k)$.
This would lead to the intimate relation between the failure of linear time algorithms on random $k$-XORSAT and the clustering phenomenon of its solution space.

\subsection{Sequential local algorithms}
\label{sec:sequential-local-algorithms}

Sequential local algorithms are a class of algorithms parametrized by a local rule $\tau$ that specifies how values should be assigned to variables based on the "neighborhoods" of the variables.
Given a local rule $\tau$, the sequential local algorithm can be written as the following \emph{$\tau$-decimation algorithm}.

Given a fixed even number $R \ge 0$, we denote by $\mathcal{I}_R$ the set of all instances in which each of those instances has exactly one of its variables selected as \emph{root}, and all nodes in its factor graph have distances from the root variable node at most $R$.
A \textbf{local rule} is defined to be a function $\tau:\mathcal{I}_R \rightarrow [0,1] \in \mathbb{R}$, mapping from $\mathcal{I}_R$ to the interval $[0,1]$.
Given an instance $\Phi$, since the local neighborhood $B_\Phi(x^*,R)$ of a variable node $x^*$ represents a sub-instance of $\Phi$ induced by all nodes having distance at most $R$ from the root variable node $x^*$, we have $B_\Phi(x^*,R) \in \mathcal{I}_R$ and $\tau(B_\Phi(x^*,R))$ is well-defined.
Then, the $\tau$-decimation algorithm can be expressed as the followings.

\begin{algorithm}[H]
    \caption{$\tau$-decimation algorithm}
    \begin{algorithmic}[1]
        \State Input: \begin{varwidth}[t]{\linewidth}
            an instance of the k-XORSAT problem $\Phi$, \\
            an even number $R \ge 0$, and \\
            a local rule $\tau:\mathcal{I}_R \rightarrow [0,1]$.
        \end{varwidth}
        \State Set $\Phi_0 = \Phi$.
        \For {$t=0,...,n-1$}
            \State Select an unassigned variable $x^*$ from $\Phi_{t}$, uniformly at random.
            \State Set $\sigma(x^*)=
                \begin{cases}
                    1 \text{\quad with probability\;} \tau(B_{\Phi_t}(x^*,R)) \\
                    0 \text{\quad with probability\;} 1-\tau(B_{\Phi_t}(x^*,R))
                \end{cases}$
            \State \begin{varwidth}[t]{\linewidth}
                Obtain $\Phi_{t+1}$ from $\Phi_{t}$ by: \\
                (i) remove $x^*$; \\
                (ii) for any clause having $x^*$ before (i), add $\sigma(x^*)$ to its right-hand-side value; \\
                (iii) remove all clauses that no longer contain any variable.
            \end{varwidth}
        \EndFor
        \State Output: the assignment $\sigma$.
    \end{algorithmic}
\end{algorithm}

For any $t \in [n]$, if the value $\tau(B_{\Phi_t}(x^*,R))$ given by the local rule $\tau$ in the $t$-th iteration is $1/2$, then we call that iteration a \textbf{free step}.
In a free step, the $\tau$-decimation algorithm simply assigns a uniformly random Boolean value to the selected variable.
On the contrary, if the value $\tau(B_{\Phi_t}(x^*,R))$ given by the local rule $\tau$ in the $t$-th iteration is either 0 or 1, then we call that iteration a \textbf{forced step}.
In a forced step, the $\tau$-decimation algorithm is forced to assign a particular value to the selected variable according to the value $\tau(B_{\Phi_t}(x^*,R))$.
To simplify our discussion, we introduce the following definitions for those $\tau$-decimation algorithms having certain numbers of free steps.
\begin{definition}
    For any $\delta \in [0,1]$, we say a $\tau$-decimation algorithm \texttt{DEC\textsubscript{$\tau$}} is \textbf{$\delta$-free} on the random $k$-XORSAT instance $\mathbf{\Phi} \sim \mathbf{\Phi}_k(n,rn)$ if
    w.h.p the $\tau$-decimation algorithm $\texttt{DEC}_\tau$ on input $\mathbf{\Phi}$ has at least $\delta n$ free steps.
\end{definition}

\begin{definition}
    For any $\delta \in [0,1]$, we say a $\tau$-decimation algorithm \texttt{DEC\textsubscript{$\tau$}} is \textbf{strictly $\delta$-free} on the random $k$-XORSAT instance $\mathbf{\Phi} \sim \mathbf{\Phi}_k(n,rn)$ if
    there exists $\delta' > \delta$ such that
    the $\tau$-decimation algorithm \texttt{DEC\textsubscript{$\tau$}} is $\delta'$-free on $\mathbf{\Phi}$.
\end{definition}

There are many choices for the local rules $\tau$.
The simplest one is the Unit Clause Propagation \texttt{UC}.
In each iteration, after selecting the unassigned variable $x^*$, \texttt{UC} checks whether there exists a unit clause (clause with one variable) on the variable $x^*$.
If yes, then \texttt{UC} sets $\tau(B_{\Phi_t}(x^*,R))$ to be the right-hand-side value of the unit clause, which can force the decimation algorithm to pick the suitable value to satisfy that clause.
In this case, this iteration is a forced step.
(If there are multiple unit clauses on the selected variable $x^*$, then only consider the one with the lowest index.)
If there is no unit clause on the selected variable $x^*$, then \texttt{UC} sets $\tau(B_{\Phi_t}(x^*,R))$ to $1/2$, which let the algorithm choose the assigned value randomly.
In this case, this iteration is a free step.
\begin{algorithm}[H]
    \caption{Unit Clause Propagation \texttt{UC}}
    \begin{algorithmic}[1]
        \State Input: \begin{varwidth}[t]{\linewidth}
            the selected variable $x^*$, and
            its local neighborhood $B_{\Phi_t}(x^*,2)$
        \end{varwidth}
        \If {there exists any unit clause on the variable $x^*$}
            \State Pick the unit clause $c$ on the variable $x^*$ (with the lowest index if having multiple such clauses).
            \State Output: the right-hand-side value of the clause $c$.
        \Else
            \State Output: $1/2$.
        \EndIf
    \end{algorithmic}
\end{algorithm}

\subsection{Message passing algorithms}

A new challenger to break the algorithmic threshold came out from statistical mechanics.
In experiments \cite{mezardAnalyticAlgorithmicSolution2002, gomesSatisfiedPhysics2002, braunsteinSurveyPropagationAlgorithm2005, ricci-tersenghiCavityMethodDecimated2009, krocMessagepassingLocalHeuristics2009}, the \emph{message passing algorithms} demonstrated their high efficiency on finding solutions of random $k$-SAT problem with the densities close to the satisfiability threshold.
Those algorithms include \emph{Belief Propagation Guided Decimation Algorithm} and \emph{Survey Propagation Guided Decimation Algorithm}, which are based on the insightful but non-rigorous \emph{cavity method} from statistical mechanics \cite{braunsteinSurveyPropagationAlgorithm2005, ricci-tersenghiCavityMethodDecimated2009}.
Unfortunately, several analyses showed that they do not outperform the best known algorithms for some problems.
Coja-Oghlan \cite{coja-oghlanBeliefPropagationGuided2017} showed that BP-guided Decimation fails to find solutions for random $k$-SAT w.h.p. for density above $\rho_0 2^k/k$ for a universal constant $\rho_0 > 0$, and thus does not outperform the best known algorithm from \cite{coja-oghlanBetterAlgorithmRandom2010}.
Hetterich \cite{hetterichAnalysingSurveyPropagation2016} also gave the same conclusion for SP-guided Decimation by showing that it fails w.h.p. for density above $(1+o_k(1))2^k \ln k / k$.

For random NAE-$k$-SAT, Gamarnik and Sudan \cite{gamarnikPerformanceSequentialLocal2017} showed that the balanced sequential local algorithms fail to find solutions for density above $(1+o_k(1)) 2^{k-1} \ln^2 k/k$ for sufficiently large $k$.
This means the algorithms do not outperform the best known algorithm, Unit Clause algorithm, which can find solutions w.h.p. for density up to $\rho\,2^{k-1}/k$ for some universal constant $\rho>0$ for sufficiently large $k$ \cite{achlioptasTwocoloringRandomHypergraphs2002}.
The framework of balanced sequential local algorithms also covers BP-guided Decimation and SP-guided Decimation with the number of message passing iterations is bounded by $O((\ln \ln n)^{O(1)})$.

In our work, we obtain an analogous result.
In Theorem \ref{thm:main-theorem}, we show that w.h.p. strictly $2\mu(k,r)$-free sequential local algorithms fails to solve the random $k$-XORSAT problem when the clause density exceeds the clustering threshold.
Then, in Theorem \ref{thm:main-theorem-exact-marginal}, we show that any sequential local algorithm with local rule that can compute the exact marginals are strictly $2\mu(k,r)$-free and thus fails to find a solution for random $k$-XORSAT problem.
This theorem covers the sequential local algorithms with Belief Propagation \texttt{BP} and Survey Propagation \texttt{SP} as local rules.

\subsection{Technique}
\label{sec:technique}

The works from \cite{achlioptasAlgorithmicBarriersPhase2008, achlioptasSolutionspaceGeometryRandom2011} demonstrated the clustering phenomenon for several random CSPs, and conjectured that it could be an obstruction of solving those problems.
\cite{gamarnikLimitsLocalAlgorithms2017, gamarnikPerformanceSequentialLocal2017} and subsequent works leveraged a different notion of clustering, named \emph{overlap gap property} (OGP) by \cite{gamarnikFindingLargeSubmatrix2018}, to link the clustering phenomenon to the hardness rigorously.
Gamarnik gave a detailed survey on it \cite{gamarnikOverlapGapProperty2021}.

This paper focuses on the vanilla version of the OGP.
Given an instance $\Phi$ of the constraint satisfaction problem, we say it exhibits the overlap gap property with values $0 \le v_1 < v_2$ if every two solutions $\sigma$ and $\sigma'$ satisfy either $d(\sigma,\sigma') \le v_1$ or $d(\sigma,\sigma') \ge v_2$, where $d$ is a metric on its solution space.
(We assume $d$ is the Hamming distance throughout this paper.)
Intuitively, it means every pair of solutions are either close to each other, or far from each other, and thus the solution space of the instance exhibits a topological discontinuity based on the proximity.

Now we illustrate how does the overlap gap property method.
(Some details of the overlap gap property method is slightly different if we consider different variants of the OGP, but the overall idea of \emph{topological barrier} stays the same.)
Assume we have an algorithm $\mathcal{A}$ that takes a random instance $\mathbf{\Phi}$ as input and outputs an assignment $\sigma$ for the instance.
The output $\sigma$ can be viewed as a random variable which depends on both the random instance $\mathbf{\Phi}$ and some internal random variables, represented by a random internal vector $\mathbf{I}$, of the algorithm.
Let $\Phi$ and $I$ be realizations of $\mathbf{\Phi}$ and $\mathbf{I}$ respectively, and denote the output assignment as $\sigma_0=\sigma_{\Phi,I}$.
We then re-randomize the components of the internal vector $I$ \emph{one-by-one}.
After each re-randomizing a component of $I$, we run the algorithm again to generate a new output assignment.
Then, we obtain a sequence of assignments $\sigma_0, \sigma_1, \cdots, \sigma_T$ for the instance $\Phi$, where $T$ is the number of components of the random vector $I$ that we have re-randomized.
Next, we show that the algorithm is \emph{insensitive to its input} in the sense that when one of the components of the random internal vector $\mathbf{I}$ is re-randomized, the output assignment almost remains unchanged,
In particular, $d(\sigma_{i}, \sigma_{i+1})$ is smaller than $v_2 - v_1$.
We also show that the algorithm has certain \emph{freeness} in the sense that when all components in the random internal are re-randomized the output assignment is expected to change a lot.
In particular, $\expected{d(\sigma_0, \sigma_T)}$ should be larger than $v_2$.
These two properties together imply that the sequence of assignments cannot "jump" over the overlap gap, while two ends of the sequence probably lie in different clusters.
Therefore, there should be an assignment $\sigma_{T_0}$ that falls in the gap, namely, there exists $T_0>0$ such that $v_1 \le d(\sigma_0,\sigma_{T_0}) \le v_2$.
If the probability that the algorithm successfully finds a solution is greater than some small value $s_n$ slowly converging to 0, then there could be a very small probability that both $\sigma_0$ and $\sigma_{T_0}$ are solutions of the instance $\Phi$, with $v_1 \le d(\sigma_0,\sigma_{T_0}) \le v_2$.
Even though this probability is very small, it still has the chance to violates the OGP of the instance.
Then, by contradiction, we could conclude that the probability that the algorithm succeeds in finding a solution is smaller than $s_n=o(1)$, namely, w.h.p. the algorithm fails in finding a solution.

Instead of considering the overlap gap property of the entire instance $\mathbf{\Phi}$, we move our focus to the overlap gap property of a sub-instance of $\mathbf{\Phi}$.
Indeed, the sub-instance we consider is the \emph{2-core instance} $\mathbf{\Phi_c}$ induced by the 2-core of the factor graph representation of the random instance $\mathbf{\Phi}$.
In \cite{ibrahimiSetSolutionsRandom2012, achlioptasSolutionSpaceGeometry2015}, they proved that the 2-core instance $\mathbf{\Phi_c}$ exhibits the overlap gap property with $v'_1=o(n)$ and $v'_2=\epsilon_k n$ for some constant $\epsilon_k>0$ for clause density $r_{core}(k) < r < r_{sat}(k)$.
We remove all variables not in the core instance from the sequence of assignments $\sigma_0, \sigma_1, \cdots, \sigma_T$ we obtained above, then it becomes a sequence of assignments $\sigma'_0, \sigma'_1, \cdots, \sigma'_T$ for the core instance.
We also prove that the algorithm is \emph{insensitive to its input} with respect to the core instance in the sense that $d(\sigma'_{i},\sigma'_{i+1}) < v'_2 - v'_1$, and has certain \emph{freeness} so that $\expected{d(\sigma'_0,\sigma'_T)} > v'_2$.
By repeating the above argument of the overlap gap property method, we can conclude that w.h.p. the algorithm fails in find a solution.

Our proof can also be used for the non-sequential local algorithms.
Since the local rule $\tau$ runs on the local neighborhood of each variable in parallel, the values assigned to variables do not depend on each other.
Informally speaking, there is no long-range dependency among those assigned values.
Therefore, re-randomizing one component of the internal vector $I$, say $I_{i+1}$, only affects the value $\sigma(x_{i+1})$ assigned to the corresponding variable $x_{i+1}$.
So, we have $d(\sigma_i, \sigma_{i+1}) \le 1 < v'_2 - v'_1$.
Hence, we can obtain the same result as Theorem \ref{thm:main-theorem} for non-sequential local algorithms with the same proof.

\subsection{Related works}

The vanilla version of OGP helps us rule out some large classes of algorithms on random CSPs for relatively high clause densities, but it is not sophisticated enough to close the \emph{statistical-to-computational gap} in some cases such as random NAE-$k$-SAT discussed in \cite{gamarnikPerformanceSequentialLocal2017} and random $k$-XORSAT discussed in this paper (more details in Section \ref{sec:overlap-gap-property}).
There have been some recent works trying to improve the notion of OGP by developing different variants of OGP.
The most notable one is \emph{multi-OGP} \cite{weinOptimalLowdegreeHardness2022, breslerAlgorithmicPhaseTransition2022, huangTightLipschitzHardness2022, gamarnikGeometricBarriersStable2023, gamarnikAlgorithmicObstructionsRandom2023}, which succeeds in closing the statistical-to-computational gap in certain models.
However, it is not clear about the relation between the clustering property and the multi-OGP.

\section{Overlap Gap Property}
\label{sec:overlap-gap-property}


We say that a $k$-XORSAT instance $\Phi$ exhibits the \textbf{overlap gap property} (or shortened as OGP) with the values $0 \le v_1 < v_2$ if for any two solutions $\sigma, \sigma' \in S(\Phi)$ we have either $d(\sigma,\sigma') \le v_1$ or $d(\sigma,\sigma') \ge v_2$.
Informal speaking, any two solutions of the instance are either close to each other or far away from each other, and thus the solution space exhibits a topological discontinuity.
Given a random $k$-XORSAT instance, we can prove that it exhibits the OGP when the clause density is greater than certain value, and obtain the following lemma.
\ifappendix
(The proof is given in the Appendix \ref{sec:proof-of-ogp-of-whole-instance}.)
\fi

\begin{lemma}
    \label{lemma:r-1}
    For any $k \ge 3$, there exists $r_1(k) > 0$ such that for $r > r_1(k)$ and any pair of solutions $\sigma,\sigma' \in S(\mathbf{\Phi})$ of the random $k$-XORSAT instance $\mathbf{\Phi} \sim \mathbf{\Phi}_n(k,rn)$, w.h.p. the distance $d(\sigma,\sigma')$ between the two solutions is either $\le u_1n$ or $\ge u_2n$ for some $0 \le u_1 < u_2$.
    In particular, the value of $r_1(k)$ is given by
    \begin{align*}
        r_1(k) = \min_{0 \le \alpha \le \frac{1}{2}} \frac{1 + H(\alpha)}{ 2 - \log_2 (1+(1-2\alpha)^k)},
    \end{align*}
    where $H$ is the binary entropy function, that is, $H(x) = -x\log_2(x) - (1-x)\log_2(1-x)$.
\end{lemma}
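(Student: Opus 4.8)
The plan is a first-moment argument over \emph{pairs} of solutions, carried out at the Hamming distances where the expected number of such pairs is exponentially small. For $\alpha\in(0,\tfrac12)$, let $Z_\alpha$ be the number of ordered pairs $(\sigma,\sigma')\in S(\mathbf{\Phi})^2$ with $d(\sigma,\sigma')=\lfloor\alpha n\rfloor$. There are exactly $2^n\binom{n}{\lfloor\alpha n\rfloor}$ ordered pairs of assignments at this distance (choose $\sigma$, then the support of $z:=\sigma\oplus\sigma'$), and the $m=rn$ clauses are independent, so $\expected{Z_\alpha}=2^n\binom{n}{\lfloor\alpha n\rfloor}\,p(\alpha)^{rn}$, where $p(\alpha)$ is the probability that a single uniformly random $k$-clause $x_{j_1}\oplus\cdots\oplus x_{j_k}=b$ is satisfied by both $\sigma$ and $\sigma'$. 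That clause is satisfied by both exactly when $\bigoplus_i z_{j_i}=0$ (the two parity constraints agree) \emph{and} $b$ equals that common value; the latter has probability $\tfrac12$ independently of the former, and the former --- an even number of the $k$ sampled coordinates landing in $\mathrm{supp}(z)$, a set of size $\lfloor\alpha n\rfloor$ --- has probability $\tfrac12\big(1+(1-2\alpha)^k\big)+O(1/n)$, uniformly for $\alpha$ bounded away from $0$ and $1$, by the standard identity $\sum_{j\ \mathrm{even}}\binom{s}{j}\binom{n-s}{k-j}\big/\binom{n}{k}\to\tfrac12\big(1+(1-2\alpha)^k\big)$ with $s=\lfloor\alpha n\rfloor$. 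Hence $p(\alpha)=\tfrac14\big(1+(1-2\alpha)^k\big)+O(1/n)$, and Stirling's formula gives
$$\tfrac1n\log_2\expected{Z_\alpha}\ \longrightarrow\ g_r(\alpha)\ :=\ 1+H(\alpha)-r\big(2-\log_2(1+(1-2\alpha)^k)\big),$$
uniformly on compact subsets of $(0,\tfrac12)$.

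Next I would locate the forbidden interval of distances. Since $1-2\alpha\in[0,1]$ on $[0,\tfrac12]$, the factor $2-\log_2(1+(1-2\alpha)^k)$ lies in $[1,2]$, so $g_r(\alpha)<0$ iff $r>\tfrac{1+H(\alpha)}{2-\log_2(1+(1-2\alpha)^k)}$; consequently $r_1(k)=\min_{0\le\alpha\le1/2}\tfrac{1+H(\alpha)}{2-\log_2(1+(1-2\alpha)^k)}\ge\tfrac12>0$ is well defined. For $r>r_1(k)$, if $\alpha^\star$ attains this minimum then $g_r(\alpha^\star)=\big(r_1(k)-r\big)\big(2-\log_2(1+(1-2\alpha^\star)^k)\big)<0$. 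The function $\alpha\mapsto\tfrac{1+H(\alpha)}{2-\log_2(1+(1-2\alpha)^k)}$ is continuous on $[0,\tfrac12]$, equals $1$ at both endpoints, and --- by a short Taylor expansion near $\alpha=\tfrac12$, valid since $k\ge3$ (there the numerator is $2-\tfrac{2\beta^2}{\ln2}+O(\beta^4)$ while the denominator is $2-\tfrac{(2\beta)^k}{\ln2}+O(\beta^{2k})$ with $\beta=\tfrac12-\alpha$, so the ratio is $1-\tfrac{\beta^2}{\ln2}+o(\beta^2)$) --- is strictly below $1$ for $\alpha$ slightly below $\tfrac12$; hence its minimum over $[0,\tfrac12]$ is $<1$ and is attained at some interior $\alpha^\star\in(0,\tfrac12)$. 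By continuity of $g_r$ there is then a closed interval $[u_1,u_2]\subset(0,\tfrac12)$ containing $\alpha^\star$ with $g_r\le-c$ on it for some constant $c=c(k,r)>0$.

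Finally, Markov's inequality and a union bound over the $O(n)$ integers $\ell\in(u_1n,u_2n)$, together with the uniformity of the $o(n)$ error in the exponent over $[u_1,u_2]$, give
$$\proba{\exists\,\sigma,\sigma'\in S(\mathbf{\Phi}):\ u_1n<d(\sigma,\sigma')<u_2n}\ \le\ \sum_{u_1n<\ell<u_2n}\expected{Z_{\ell/n}}\ \le\ n\cdot2^{-cn/2}\ \longrightarrow\ 0.$$
Thus w.h.p.\ every pair of solutions of $\mathbf{\Phi}$ has distance $\le u_1n$ or $\ge u_2n$, i.e.\ $\mathbf{\Phi}$ exhibits the overlap gap property; and the least density for which this argument produces a non-degenerate interval is exactly $r_1(k)$ (for any $r>r_1(k)$ the interval $[u_1,u_2]$ has non-empty interior, and the restriction $\alpha\le\tfrac12$ is used only to keep $1-2\alpha\ge0$).

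I expect the main obstacle to be bookkeeping rather than depth: establishing $p(\alpha)=\tfrac14(1+(1-2\alpha)^k)+O(1/n)$ \emph{with all error terms uniform over} $[u_1,u_2]$, so that the union bound over $\Theta(n)$ distance values still sums to $o(1)$; and verifying that the minimizer $\alpha^\star$ is interior so the gap is non-trivial --- which, as above, reduces to the elementary observation that the ratio $\tfrac{1+H(\alpha)}{2-\log_2(1+(1-2\alpha)^k)}$ dips below its endpoint value $1$ somewhere in $(0,\tfrac12)$ when $k\ge3$.
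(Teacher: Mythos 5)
Your proposal follows the same first-moment argument on pairs of solutions as the paper's proof (Lemma~\ref{lemma:first-moment-method} plus the endgame of Lemma~\ref{lemma:r-1}): same count $2^n\binom{n}{\alpha n}$, same joint-satisfaction probability $\tfrac14\bigl(1+(1-2\alpha)^k\bigr)$, same Stirling reduction to the exponent $1+H(\alpha)-r\bigl(2-\log_2(1+(1-2\alpha)^k)\bigr)$, and the same identification of $r_1(k)$ as the threshold where this exponent first becomes negative somewhere on $[0,\tfrac12]$. Your write-up is slightly more careful in two places where the paper is terse: you verify by Taylor expansion near $\alpha=\tfrac12$ that the minimizer $\alpha^\star$ is interior (the paper simply writes down $u_1<\alpha_1(k)<u_2$ without checking this), and you make the union bound over the $\Theta(n)$ integer distances in $(u_1n,u_2n)$ explicit together with the uniformity of the $o(1)$ error in the exponent (the paper applies Markov to a single $\alpha$ and then asserts the statement holds for all $\alpha$ in the interval). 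Neither is a new idea, just sound bookkeeping; the approach, key computation, and conclusion coincide.
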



Instead of considering the random $k$-XORSAT instance $\mathbf{\Phi}$ itself, we focus on the sub-instance, called the \emph{core instance} (defined below), of the random $k$-XORSAT instance $\mathbf{\Phi}$, and show that core instance also exhibits the overlap gap property, even when the clause density is much lower. (See Table \ref{tab:r-core-vs-r-1}.)

\begin{table}[ht]
    \centering
    \begin{tabular}{|c|c|c|c|c|c|c|c|c|}
        \hline
        $k$     & 3 & 4 & 5 & 6 & 7 & 8 & 9 \\
        \hline
        $r_{core}(k)$ &
        0.818470 &
        0.772280 &
        0.701780 &
        0.637081 &
        0.581775 &
        0.534997 &
        0.495255 \\
        \hline
        $r_1(k)$ &
        0.984516 &
        0.943723 &
        0.905812 &
        0.874349 &
        0.848314 &
        0.826470 &
        0.807862 \\
        \hline
    \end{tabular}
    \caption{Compare $r_{core}(k)$ with $r_1(k)$ for different $k$. The numeric values in the table are rounded off to 6 decimal places.}
    \label{tab:r-core-vs-r-1}
\end{table}

We start from defining the \textbf{peeling algorithm} and the \textbf{core instances}.
Given an XORSAT instance $\Phi$, suppose there exists a variable $x$ of degree 1, which means it is involved in exactly one clause $e$.
We remove the variable $x$ and the only clause $c$ involving $x$, to obtain a modified instance $\Phi'$.
If we have a solution $\sigma'$ for the modified instance $\Phi'$, we can always choose a suitable value for the variable $x$ to satisfy the clause $c$, and extend the solution $\sigma'$ to a solution $\sigma$ for the original instance $\Phi$.
Similarly, we can also do the same thing if the variable $x$ is of degree 0, since it does not involve in any equation, and we are free to choose any value for it.
By doing this, solving the original instance $\Phi$ is reduced to solving the modified instance $\Phi'$.

We can repeat this process until there is no variable of degree at most 1.
This process is named the \textbf{peeling algorithm} on an instance as its input (Algorithm \ref{algo:peeling_algorithm_instance}).
We call the resultant instance the \textbf{2-core instance} (or simply the \textbf{core instance}) of the instance $\Phi$, denoted by $\Phi_c$.
This name is borrowed from the graph theory.
In graph theory, the \emph{$k$-core} of a graph is the maximal subgraph with minimum degree at least $k$.
It is known that the factor graph of the core instance $\Phi_c$ is exactly the maximal subgraph of the factor graph $G_\Phi$ of the instance $\Phi$, with minimum variable degree at least 2.

\begin{algorithm}
    \caption{Peeling algorithm}
    \label{algo:peeling_algorithm_instance}
    \begin{algorithmic}[1]
        \State Input: an instance $\Phi$.
        \While {There exists $\ge 1$ variable of degree $\le 1$.}
            \State Select a variable $x$ of degree $\le 1$.

            (Pick $x$ with the lowest index if there are $> 1$ such variables.)
            \State Update $\Phi$ by removing the variable $x_i$ and its only involved clause (if exists).
        \EndWhile
        \State Output: the resultant instance $\Phi$.
    \end{algorithmic}
\end{algorithm}


Mézard and Montanari \cite{mezardInformationPhysicsComputation2009} gave a detailed description on the structure of the core instance.
Reader can find more details about the core instance in their book.
The following theorem is a short summary of some known facts about the core instances we needed in the paper.

\begin{theorem}
    \label{thm:core_emerges}
    For any $k \ge 3$, there exists $r_{core}(k) > 0$ given by
    \begin{align*}
        r_{core}(k) \equiv \sup \{ r\in[0,1]: Q > 1-e^{-krQ^{k-1}} \; \forall Q\in(0,1) \}
    \end{align*}
    such that the factor graph $G_c$ of the core instance $\mathbf{\Phi}_c$ of the random $k$-XORSAT instance $\mathbf{\Phi} \sim \mathbf{\Phi}_n(k,rn)$ have the following properties.
    \begin{enumerate}
        \item For $r<r_{core}(k)$, w.h.p. the factor graph $G_c$ of the core instance $\mathbf{\Phi}_c$ is an empty graph.
        \item For $r>r_{core}(k)$, w.h.p. the factor graph $G_c$ of the core instance $\mathbf{\Phi}_c$ have $V(k,r)n+o(n)$ variable nodes, where
        \begin{align*}
            V(k,r) &= 1 - \exp(-krQ^{k-1}) - krQ^{k-1} \exp(-krQ^{k-1})
        \end{align*}
            and $Q$ is the largest solution of the fixed point equation $Q=1-\exp(-krQ^{k-1})$.
        In particular, the fraction of variable nodes of degree $l$ is between $\widehat{\Lambda}_l-\epsilon$ and $\widehat{\Lambda}_l+\epsilon$ with probability greater than $1-e^{-\Theta(n)}$, where $\widehat{\Lambda}_0 = \widehat{\Lambda}_1 = 0$ and
        \begin{align*}
            \widehat{\Lambda}_l &= \frac{1}{e^{krQ^{k-1}}-1-krQ^{k-1}} \; \frac{1}{l!} \; (krQ^{k-1})^l \text{\quad for \;} l \ge 2.
        \end{align*}
    \item Conditioning on the number of variable nodes $V(k,r)n+o(n)$ and the degree profile $\widehat{\Lambda}$, the factor graph $G_c$ of the core instance $\mathbf{\Phi}_c$ is distributed according to the ensemble containing all possible factor graphs of $k$-XORSAT instances of $V(k,r)n+o(n)$ variables and variable degree distribution $\Lambda$.
    \end{enumerate}
\end{theorem}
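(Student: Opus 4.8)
The plan is to obtain all three items from the classical theory of $2$-cores of sparse random hypergraphs, after identifying the peeling algorithm on the factor graph with the construction of the $2$-core of the hypergraph $H$ whose vertices are the $n$ variables and whose hyperedges are the $m=rn$ clauses. A variable node of degree $\le 1$ in $G_\Phi$ is exactly a vertex of degree $\le 1$ in $H$, and the update step of Algorithm~\ref{algo:peeling_algorithm_instance} deletes it together with the unique hyperedge through it (if any). Since the set of vertices eventually removed does not depend on the order of removals, the output $\Phi_c$ is precisely the sub-instance supported on the $2$-core of $H$, and every surviving clause still contains $k$ variables because a clause is only ever deleted, never shrunk, during peeling; this also gives the identification of the factor graph of $\Phi_c$ with the maximal subgraph of minimum variable degree $\ge 2$.

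First I would record the local geometry of $\mathbf{\Phi}\sim\mathbf{\Phi}_k(n,rn)$: the variable-degree distribution converges to $\mathrm{Poisson}(kr)$, and the neighbourhood of a uniformly random variable converges in distribution to the Galton--Watson tree in which a variable node has $\mathrm{Poisson}(kr)$ children clauses and a clause node has exactly $k-1$ children variables. On this limiting tree the peeling map has a clean recursive description: writing $Q$ for the probability that a variable node survives peeling once the clause directly above it is deleted, a clause survives (seen from above) iff each of its other $k-1$ variables survives, i.e.\ with probability $Q^{k-1}$, while a variable survives iff at least one of its $\mathrm{Poisson}(kr)$ clauses survives; thinning the Poisson yields the self-consistency equation $Q = 1 - \exp(-krQ^{k-1})$. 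The threshold $r_{core}(k)$ is exactly the largest density at which this equation has no solution in $(0,1)$, which is the displayed supremum; below it the only fixed point in $[0,1]$ is $Q=0$, giving item~1 (peeling removes essentially everything, so $G_c$ is empty w.h.p.), and above it the largest fixed point $Q$ is strictly positive, which is the regime of items~2 and~3. I would cite \cite{pittelSuddenEmergenceGiant1996, molloyCoresRandomHypergraphs2005, kimPoissonCloningModel2004} for the existence and location of the threshold and \cite{mezardInformationPhysicsComputation2009, ibrahimiSetSolutionsRandom2012} for the structural statements.

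Next, for $r>r_{core}(k)$, I would transfer the tree computation to the finite instance. A variable $v$ lies in the $2$-core iff at least two of its incident clauses survive, and in the limit each survives independently with probability $Q^{k-1}$, so the number of surviving clauses at $v$ is asymptotically $\mathrm{Poisson}(krQ^{k-1})$; hence the probability that $v$ lies in the core converges to $\Pr[\mathrm{Poisson}(krQ^{k-1})\ge 2] = 1 - e^{-krQ^{k-1}} - krQ^{k-1}e^{-krQ^{k-1}} = V(k,r)$, and conditioning on survival of $\ge 2$ clauses yields the degree profile $\widehat{\Lambda}$. To upgrade these first-moment identities to the claimed $1-e^{-\Theta(n)}$ concentration, I would run Wormald's differential-equation method on the vector that counts, during peeling, the number of remaining clauses and the number of remaining variables of each current degree; its scaling limit reproduces $V(k,r)$ and $\widehat{\Lambda}$, and the exponential error bound comes either from the standard concentration step inside Wormald's theorem or, more directly, from a bounded-difference (Azuma--Hoeffding) inequality applied to the number of variables and hyperedges removed by peeling. \emph{This exponential-rate concentration is the only genuinely technical point}; the fixed-point equations and the threshold formula are elementary once the identification of peeling with the $2$-core is in place.

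Finally, item~3 is the general fact that the $2$-core of a uniformly random hypergraph is, conditioned on its vertex set and degree sequence, uniformly distributed among all hypergraphs on that vertex set with those degrees: the peeling process inspects only degree-$\le 1$ vertices and the single hyperedge attached to each, so it never reveals how the surviving high-degree variables are wired together beyond their degrees. Consequently, conditioned on $G_c$ having $V(k,r)n+o(n)$ variable nodes and degree profile $\widehat{\Lambda}$, it is a uniform random $k$-XORSAT factor graph with that profile, all clause nodes having degree $k$. The main obstacle, to repeat, is establishing the exponential-rate concentration of the core size and degree profile in item~2; once that is granted, items~1 and~3 follow from the elementary considerations above together with the cited literature.
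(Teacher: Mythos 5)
The paper itself presents this statement as ``a short summary of some known facts,'' deferring the proof to Mézard and Montanari's book \cite{mezardInformationPhysicsComputation2009} and the references \cite{pittelSuddenEmergenceGiant1996, molloyCoresRandomHypergraphs2005, kimPoissonCloningModel2004}, so there is no in-paper proof for your proposal to be compared against. Your outline follows the standard route in those references and its ingredients are correct: peeling on the factor graph coincides with the construction of the $2$-core of the underlying $k$-uniform hypergraph; the local Galton--Watson limit plus Poisson thinning yields the fixed-point equation $Q=1-e^{-krQ^{k-1}}$; a variable lies in the core iff at least two of its incident clauses are ``supported,'' so that $\Pr[\mathrm{Poisson}(krQ^{k-1})\ge 2]=V(k,r)$ and conditioning on this event gives $\widehat\Lambda$; and item~3 follows because peeling inspects only degree-$\le 1$ vertices, so the wiring among the surviving higher-degree variables remains unexposed.

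The step you should not treat as routine is the $1-e^{-\Theta(n)}$ concentration in item~2, and here one of your two suggested arguments has a gap. A direct bounded-difference (Azuma--Hoeffding) argument over the $rn$ clauses does not go through: the $2$-core is not $O(1)$-Lipschitz in the hyperedge set, since deleting a single clause can trigger a cascade that removes $\Theta(n)$ further variables --- precisely the phenomenon at the threshold --- so the naive one-clause influence bound fails. Similarly, the generic form of Wormald's differential-equation theorem gives deviation probability of the form $e^{-n^c}$ for some $c<1$, not $e^{-\Theta(n)}$; an exponential rate requires an additional large-deviation input. The rate claimed in the theorem is established in the references you cite, in particular via Kim's Poisson cloning model \cite{kimPoissonCloningModel2004} and Molloy's branching-process analysis of cores \cite{molloyCoresRandomHypergraphs2005}, which control the probability of large cascades directly rather than through a worst-case Lipschitz bound. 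If you intend a self-contained proof rather than a citation, this is where the genuine technical work lies; the remainder of your sketch is sound.
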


Theorem \ref{thm:core_emerges} shows that there exists a threshold $r_{core}(k)$ below the satisfiability threshold $r_{sat}(k)$ of random $k$-XORSAT problem.
When the clause density $r$ is below the threshold $r_{core}(k)$, w.h.p. the instance $\mathbf{\Phi}$ does not have a core instance.
When the clause density $r$ is above the threshold $r_{core}(k)$, w.h.p. the core instance $\mathbf{\Phi}_c$ emerges.
In particular, the variable degree distribution is a Poisson distribution with mean $krQ^{k-1}$ conditioning on $\Lambda_0 = \Lambda_1 = 0$.
\cite{mezardInformationPhysicsComputation2009} also showed that the core instance exhibits the OGP.

\begin{lemma}
    \label{lemma:ogp_of_core}
    For $k \ge 3$ and $r_{core}(k) < r < r_{sat}(k)$, there exists $\epsilon(k,r) > 0 $ such that w.h.p. the distance between any two solutions for the core instance $\mathbf{\Phi}_c$ of a random $k$-XORSAT instance $\mathbf{\Phi} \sim \mathbf{\Phi}_n(k,rn)$ is either $o(n)$ or greater than $\epsilon(k,r) n$.
\end{lemma}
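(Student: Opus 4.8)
The plan is to translate the statement into a bound on the weight distribution of the $\mathbb{F}_2$-kernel of the core matrix and then to treat the low-degree part of the core separately from the rest, using the explicit degree profile from Theorem~\ref{thm:core_emerges}. First I would reduce: writing $\mathbf{\Phi}_c=(A_c,b_c)$, any two solutions $\sigma,\sigma'\in S(\mathbf{\Phi}_c)$ satisfy $z:=\sigma\oplus\sigma'\in\ker_{\mathbb{F}_2}A_c$ with $d(\sigma,\sigma')=\mathrm{wt}(z)$, and conversely every $z\in\ker A_c$ is realized by such a pair; so it suffices to find $\epsilon(k,r)>0$ such that w.h.p.\ every nonzero $z\in\ker A_c$ has $\mathrm{wt}(z)=o(n)$ or $\mathrm{wt}(z)>\epsilon(k,r)n$. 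For such a $z$ I split its support $S=\mathrm{supp}(z)$ as $S_2\cup S_{\ge 3}$ according to whether the variable has degree $2$ or degree $\ge 3$ in $\mathbf{\Phi}_c$.

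For the part of $\ker A_c$ supported entirely on degree-$2$ variables I would pass to the auxiliary multigraph $\mathcal{G}$ on the equation nodes of $\mathbf{\Phi}_c$ obtained by replacing each degree-$2$ variable, sitting in equations $e_1,e_2$, with the edge $\{e_1,e_2\}$. A set of degree-$2$ variables is the support of a kernel vector if and only if, viewed as an edge set, it is an even subgraph of $\mathcal{G}$ (every equation node is met an even number of times). By Theorem~\ref{thm:core_emerges} the variable degrees of $\mathbf{\Phi}_c$ form a Poisson law of mean $\lambda=krQ^{k-1}$ conditioned to be $\ge 2$, so $\mathcal{G}$ is a configuration-model random multigraph whose exploration process has branching ratio $(k-1)\lambda/(e^{\lambda}-1)$. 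The tangency condition defining $r_{core}(k)$ forces $\lambda/(e^{\lambda}-1)=1/(k-1)$ exactly at $r=r_{core}(k)$, while $\lambda/(e^{\lambda}-1)$ is strictly decreasing in $\lambda$ and $\lambda$ is increasing in $r$; hence for every $r>r_{core}(k)$ the branching ratio is $<1$ and $\mathcal{G}$ is subcritical. Standard sparse-random-graph estimates then give that w.h.p.\ all components of $\mathcal{G}$ have size $O(\log n)$ and $\mathcal{G}$ carries only $O(1)$ cycles in total, so every even subgraph has size $o(n)$. This disposes of all $z$ with $S_{\ge 3}=\emptyset$, which provide the ``$o(n)$'' alternative.

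For $z$ with $S_{\ge 3}\ne\emptyset$ I would show $\mathrm{wt}(z)\ge\epsilon(k,r)n$. I eliminate the degree-$2$ variables one $\mathcal{G}$-component at a time: summing the equations in a tree-component cancels its degree-$2$ variables and produces a single ``super-equation'' on $O(\log n)$ variables of degree $\ge 3$, and the $O(1)$ unicyclic components are handled by first extracting their cycle (already covered above). After discarding an $o(n)$ fraction of exceptional variables whose degree would otherwise drop below $3$, one is left with an instance $\mathbf{\Phi}'$ whose factor graph has a linear number of variables, all of degree $\ge 3$, and any $z$ with $S_{\ge 3}\ne\emptyset$ restricts to a nonzero codeword of $\mathbf{\Phi}'$. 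A first-moment estimate of the form $\mathbb{E}[\#\{z'\in\ker A':\mathrm{wt}(z')=\alpha n\}]\le e^{n(\psi(\alpha)+o(1))}$, computed over the corresponding configuration model, has $\psi(0)=0$ and, precisely because the minimum variable degree is $\ge 3$, $\psi'(0^{+})<0$ and $\psi<0$ on an interval $(0,\epsilon_0)$; so w.h.p.\ $\mathbf{\Phi}'$ has no nonzero codeword of weight $\le\epsilon_0 n$, whence $\mathrm{wt}(z)\ge|S_{\ge 3}|\ge\epsilon(k,r)n$. Taking $\epsilon(k,r)$ to be the minimum of the thresholds from the two steps completes the argument.

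The main obstacle is the coupling of the two steps, namely the degree-$2$ variables in the first-moment count: the naive ``independent equations'' bound (the one behind $r_1(k)$ in Lemma~\ref{lemma:r-1}, applied there to the whole instance of mean degree $kr$) overcounts degree-$2$-supported configurations by an exponential factor, because it does not see that such supports must be even subgraphs of $\mathcal{G}$. This is exactly why one must contract along $\mathcal{G}$ before running the first moment, and the delicate points are then (i) that subcriticality of $\mathcal{G}$ makes the contraction terminate with super-equations of size $o(n)$, (ii) that only $o(n)$ variables fall below degree $3$ in the process, and (iii) that the resulting degree distribution is regular enough for the ``minimum variable degree $\ge 3\Rightarrow\psi'(0^{+})<0$'' step, which is the XORSAT analogue of the linear-minimum-distance property of low-density parity-check codes. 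The constant $\epsilon(k,r)$ obtained this way is explicit in principle but not optimized.
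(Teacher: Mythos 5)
The paper does not prove this lemma; it is stated with a pointer to Mézard--Montanari \cite{mezardInformationPhysicsComputation2009}, so there is no internal proof against which to compare your argument directly. Evaluating it on its own terms: the decomposition of $\ker A_c$ according to whether the support meets a degree-$\ge 3$ variable, and the identification of the degree-2-supported part of the kernel with the cycle space of the auxiliary multigraph $\mathcal G$, is correct, and your calibration of the branching ratio is right --- at $r=r_{core}(k)$ the tangency condition $1=e^{-\lambda}kr(k-1)Q^{k-2}$ with $Q=1-e^{-\lambda}$ is exactly $(k-1)\lambda/(e^\lambda-1)=1$, and $\lambda$ is increasing in $r$, so $\mathcal G$ is strictly subcritical for $r>r_{core}(k)$ and its cycle space is $O(1)$-dimensional with cycles of length $O(\log n)$. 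It is also true, and worth stating explicitly, that your super-equations are not merely \emph{implied} by $A_cz=0$: for each tree component $T$ the system on the degree-2 edge-variables of $T$ is a flow problem whose solvability condition is precisely the parity of $\sum_{e\in T} c_e$, so $z'\in\ker(\text{super-eqs})$ is \emph{equivalent} to extendibility of $z'$ to some $z\in\ker A_c$. That equivalence is what makes the two-step split legitimate.

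The genuine gap is in the second step. The contracted instance $\Phi'$ is not a configuration-model LDPC code in any clean sense: its constraint degrees are the (random, unbounded, up to $\Theta(\log n)$) sizes of the super-equations, and its variable degrees are derived from the original Poisson degrees by subtracting parities of component multiplicities --- both distributions are determined by, and correlated with, the realized component structure of $\mathcal G$, so one cannot simply ``compute over the corresponding configuration model.'' More seriously, a variable $v$ of core-degree $3$ whose two incident equations collide in a single $\mathcal G$-component has degree $1$ in $\Phi'$, and one of degree $4$ with all incidences in one component has degree $0$; such variables create low-weight (even weight-$1$) vectors in $\ker(\text{super-eqs})$, which, by the extendibility equivalence just noted, \emph{do} lift to genuine kernel vectors $z$ of $A_c$. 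Your ``discard an $o(n)$ fraction'' step does not dispose of these, because deleting a variable from $\Phi'$ changes $\ker(\text{super-eqs})$ in both directions and does not preserve the correspondence with $\ker A_c$. What you actually need is an argument that (a) these exceptional directions only produce $z$ of total weight $o(n)$ (plausible, since the degree-$2$ extension lives inside $O(1)$ components of size $O(\log n)$, but it must be said), and (b) after conditioning away this $o(n)$-dimensional ``light'' subspace, the remaining weight enumerator of the coupled $(\text{degree-}\ge 3,\text{degree-}2)$ system has a strictly negative exponential rate on an interval $(0,\epsilon)$. Point (b) is the heart of the lemma and is not reduced to ``minimum variable degree $\ge 3\Rightarrow\psi'(0^+)<0$'' by the contraction as written; you would need either to carry the joint degree profile of $\Phi'$ through the first-moment computation explicitly, or to run the moment computation directly on the core's degree-constrained ensemble while isolating the cycle-space contribution, which is the route the cited reference takes.
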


Now, we know that w.h.p. the core instance of a random $k$-XORSAT instance has the overlap gap property with the values $v_1 = o(n)$ and $v_2 = \epsilon(k,r)n$.
With OGP, we can partition the solution space of the core instance into multiple groups, each called a \textbf{core cluster}, such that the distance between any pair of core solutions in the same core cluster is at most $o(n)$, and the distance between any pair of core solutions in different core clusters is at least $\epsilon(k,r) n$.

Suppose we have a $k$-XORSAT instance $\Phi$.
We first define a binary relation on the solution space of a core instance $\Phi_c$ by: for $\sigma_c, \sigma'_c \in \mathcal{S}(\Phi_c)$, we write $\sigma_c \simeq \sigma'_c$ if and only if $d(\sigma_c, \sigma'_c) = o(n)$.
It is easy to see it is an equivalence relation.
Then, we can partition the solution space by the equivalence classes of $\simeq$.
We can denote those equivalence classes by $\mathcal{S}_{c,1}, \mathcal{S}_{c,2}, ..., \mathcal{S}_{c,n_c}$.
Thus, we have $\mathcal{S}_{c,1} \sqcup \mathcal{S}_{c,2} \sqcup ...\sqcup \mathcal{S}_{c,n_c} = \mathcal{S}(\Phi_c)$ where $\sqcup$ is the disjoint union.
Then, we have
\begin{align*}
    d(\sigma_c, \sigma'_c) &= o(n)
    \text{\quad\quad\quad\, if \;} \sigma_c, \sigma'_c \in \mathcal{S}_{c,i}, \text{\quad and}  \\
    d(\sigma_c, \sigma'_c) &\ge \epsilon(k,r) n 
    \text{\quad\quad if \;} \sigma_c \in \mathcal{S}_{c,i}, \;\sigma'_c \in \mathcal{S}_{c,j} \text{\;and\;} \mathcal{S}_{c,i} \neq \mathcal{S}_{c,j}
\end{align*}

Now we can partition the solution space $\mathcal{S}(\Phi)$ of the original instance $\Phi$ into clusters based on the partition of the solution space of core instance.
We set
\begin{align*}
    \mathcal{S}(\Phi) = \bigsqcup_{i=1}^{n_c} \mathcal{S}_i
    \text{\quad and \quad}
    \mathcal{S}_i = \{\sigma \in \mathcal{S}(\Phi):\pi(\sigma) \in \mathcal{S}_{c,i}\} \text{\quad for\;} i=1,2,...,n_c
\end{align*}
where $\pi$ is defined to be the \textbf{projection} mapping assignments for the instance $\Phi$ to assignments for the core instance $\Phi_c$ by removing all variables not in the core instance $\Phi_c$.
Each $\mathcal{S}_i$ is called a \textbf{cluster} in the solutions space $\mathcal{S}(\Phi)$.
We can then prove that these clusters are well-separated from each other.

\begin{lemma}
    \label{lemma:partition-sol-space}
    Let $k \ge 3$ and $r_{core}(k) < r < r_{sat}(k)$.
    Suppose $\mathbf{\Phi} \sim \mathbf{\Phi}_n(k,rn)$ is a random $k$-XORSAT instance.
    Then, w.h.p. there exists a partition $\mathcal{S}(\mathbf{\Phi}) = \mathcal{S}_1 \sqcup \mathcal{S}_1 \sqcup ...\sqcup \mathcal{S}_{n_c}$ for the solutions space $S(\mathbf{\Phi})$ of the random instance $\mathbf{\Phi}$ such that the following statements hold.
    \begin{enumerate}
        \item If $\sigma, \sigma' \in \mathcal{S}_i$ for some $i \in [n_c]$, then we have $d(\sigma, \sigma') \le \mu(k,r)n + o(n)$,
        where the real-valued function $\mu(k,r)$ is given by
        $\mu(k,r) = \exp(-krQ^{k-1})+krQ^{k-1}\exp(-krQ^{k-1})$
        and $Q$ is the largest solution of the fixed point equation $Q=1-\exp(-krQ^{k-1})$.
        \item If $\sigma \in \mathcal{S}_i, \sigma' \in \mathcal{S}_j$ and $\mathcal{S}_i \neq \mathcal{S}_j$ for some $i,j \in [n_c]$, then we have $d(\sigma, \sigma') \ge \epsilon(k,r)n$.
    \end{enumerate}
\end{lemma}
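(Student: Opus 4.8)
The plan is to build the partition of $\mathcal{S}(\mathbf{\Phi})$ by pulling back, via the projection $\pi$, the partition $\mathcal{S}_{c,1} \sqcup \cdots \sqcup \mathcal{S}_{c,n_c}$ of the core solution space $\mathcal{S}(\mathbf{\Phi}_c)$ that was described just before the lemma. Concretely, I would first record two structural facts that hold w.h.p.: (a) by Lemma~\ref{lemma:ogp_of_core}, the core instance exhibits OGP with gap $(o(n),\epsilon(k,r)n)$, so the equivalence relation $\simeq$ is well-defined and its classes are separated as stated; (b) by Theorem~\ref{thm:core_emerges}, the core instance has $V(k,r)n + o(n)$ variables, hence the number of variables \emph{not} in the core is $(1-V(k,r))n + o(n) = \mu(k,r)n + o(n)$, since $\mu(k,r) = 1 - V(k,r)$ by comparing the two displayed formulas. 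I would also invoke the standard fact (from the peeling algorithm discussion) that $\pi$ is surjective: every core solution $\sigma_c$ extends to at least one full solution $\sigma$ of $\Phi$, because each peeled variable of degree $\le 1$ can be assigned to satisfy its unique clause. This makes the pullback sets $\mathcal{S}_i = \{\sigma \in \mathcal{S}(\Phi) : \pi(\sigma) \in \mathcal{S}_{c,i}\}$ a genuine partition of $\mathcal{S}(\Phi)$ into nonempty pieces.

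For part~(1), take $\sigma, \sigma' \in \mathcal{S}_i$. Split the Hamming distance into the contribution from core variables and the contribution from non-core variables:
\begin{align*}
    d(\sigma,\sigma') = d\big(\pi(\sigma),\pi(\sigma')\big) + \sum_{x \notin \text{core}} \mathbbm{1}(\sigma(x) \neq \sigma'(x)).
\end{align*}
The first term is $o(n)$ because $\pi(\sigma),\pi(\sigma') \in \mathcal{S}_{c,i}$ and points in one $\simeq$-class are within $o(n)$. The second term is at most the number of non-core variables, which is $\mu(k,r)n + o(n)$ by fact (b). Adding gives $d(\sigma,\sigma') \le \mu(k,r)n + o(n)$, as required. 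Part~(2) is even more direct: if $\sigma \in \mathcal{S}_i$, $\sigma' \in \mathcal{S}_j$ with $\mathcal{S}_i \neq \mathcal{S}_j$, then $\pi(\sigma) \in \mathcal{S}_{c,i}$ and $\pi(\sigma') \in \mathcal{S}_{c,j}$ lie in distinct $\simeq$-classes, so $d(\pi(\sigma),\pi(\sigma')) \ge \epsilon(k,r)n$ by the separation in fact (a); since restricting to core variables can only decrease Hamming distance, $d(\sigma,\sigma') \ge d(\pi(\sigma),\pi(\sigma')) \ge \epsilon(k,r)n$.

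The only genuinely delicate point — and the one I would spend the most care on — is the bookkeeping of the $o(n)$ error terms and making sure all the "w.h.p." qualifiers line up. Both Lemma~\ref{lemma:ogp_of_core} and Theorem~\ref{thm:core_emerges} are w.h.p. statements; I would intersect their good events (a finite intersection of probability-$1-o(1)$ events is still probability $1-o(1)$) and carry that single good event through the whole argument, so that the resulting partition statement is itself w.h.p. I would also double-check the identity $\mu(k,r) = 1 - V(k,r)$ explicitly against the formulas in Theorem~\ref{thm:core_emerges} and the lemma statement, since the whole quantitative content of part~(1) rests on it. No part of this requires new probabilistic input beyond what is already cited; it is essentially a deterministic deduction on the good event, which is why I expect the proof to be short.
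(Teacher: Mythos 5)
Your proposal is correct and follows essentially the same route as the paper's proof: pull back the core-cluster partition via $\pi$, bound the within-cluster distance by the number of non-core variables plus the $o(n)$ core-cluster diameter (using $1-V(k,r)=\mu(k,r)$), and bound the between-cluster distance from below by the projection's distance. Your additional care about the surjectivity of $\pi$ (so the pullback pieces are nonempty) and the explicit intersection of the w.h.p.\ events are small but worthwhile touches that the paper leaves implicit.
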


\begin{proof}
    Assume the instance $\mathbf{\Phi}$ has a non-empty core instance $\mathbf{\Phi}_c$, which exists with high probability according to Theorem \ref{thm:core_emerges}.
    We also assume the core instance $\mathbf{\Phi}_c$ exhibits the OGP with $v_1=o(n)$ and $v_2=\epsilon(k,r)n$, which occurs with high probability according to Lemma \ref{lemma:ogp_of_core}.
    Let $\sigma$ and $\sigma'$ be two solutions of the random $k$-XORSAT instance $\mathbf{\Phi}$, and let $\sigma_c = \pi(\sigma)$ and $\sigma'_c = \pi(\sigma')$ be the projection of $\sigma$ and $\sigma'$ on the core solution space $\mathcal{S}(\mathbf{\Phi})$, respectively.

    To prove the first part of the lemma, we assume that $\sigma$ and $\sigma'$ are in the same cluster, that is, $\sigma, \sigma' \in \mathcal{S}_i$ for some $i \in [n_c]$.
    By the definition of cluster, we have $d(\sigma_c,\sigma'_c) = o(n)$.
    Therefore, $d(\sigma,\tau)$ is upper bounded by the number of variables not in the core instance, plus $o(n)$.
    By Theorem \ref{thm:core_emerges}, the number of variables outside the core instance is given by $(1-V(k,r))n+o(n)$.
    Hence, we have
        $d(\sigma, \tau)
        \le (1-V(k,r))n + o(n)
        = \left( \exp(-krQ^{k-1})+krQ^{k-1}\exp(-krQ^{k-1}) \right)n + o(n)$.

    To prove the second part of the lemma, we assume that $\sigma$ and $\tau$ are in the different clusters, that is, $\sigma \in \mathcal{S}_i$, $\sigma' \in \mathcal{S}_j$ and $\mathcal{S}_i \neq \mathcal{S}_j$ for some $i,j \in [n_c]$.
    By the definition of cluster and Lemma \ref{lemma:ogp_of_core}, we have $d(\sigma_c,\sigma'_c) \ge \epsilon(k,r)n$
    Therefore, we have $d(\sigma, \sigma') \ge d(\pi(\sigma),\pi(\sigma')) = d(\sigma_c,\sigma'_c) \ge \epsilon(k,r)n$.
\end{proof}

\section{Preparation of OGP method}

In this section, we introduce some notions and obtain some preliminary results needed by the overlap gap property method to prove the main results.

\subsection{Sequence of output assignments}
\label{sec:sequence-of-output-assignments}

The random $k$-XORSAT instance $\mathbf{\Phi}$ is a random variable, and the $\tau$-decimation algorithm \texttt{DEC$_\tau$} is a randomized algorithm.
Therefore, the assignment output by the $\tau$-decimation algorithm \texttt{DEC$_\tau$} on input $\mathbf{\Phi}$ is also a random variable.
The outcomes of the output assignment depend on the random instance $\mathbf{\Phi}$, the order of variables being chosen, and the value selection based on the output from the local rule $\tau$.
Now we introduce two random variables to explicitly represent the order of variables and the value selection so that we can have a more concrete language to discuss how the randomness from both the instance and the algorithm affects the output assignment.
We adopt the notation from \cite{gamarnikPerformanceSequentialLocal2017} in the following discussion.

The order of variables can be represented by a random vector $\mathbf{Z} = (\mathbf{Z}_1, \mathbf{Z}_2, \cdots, \mathbf{Z}_n)$ whose entries are $n$ i.i.d. random variables with uniform distribution over the interval $[0,1] \subset \mathbb{R}$, independent of the random instance $\mathbf{\Phi}$.
We call $\mathbf{Z}$ the \textbf{ordering vector} of the algorithm.
For all $i \in [n]$, the variable $x_i$ in the instance $\mathbf{\Phi}$ is associated with the random variable $\mathbf{Z}_i$.
In each iteration of the algorithm, the unassigned variable $x_i$ with the largest value $\mathbf{Z}_i$, among all other unassigned variables, is selected.
In the other words, we can construct the permutation $s:[n] \rightarrow [n]$ such that $\mathbf{Z}_{s(1)} > \mathbf{Z}_{s(2)} > \cdots > \mathbf{Z}_{s(n)}$, and for all $t \in [n]$ the variable $x_{s(t)}$ is selected in the $t$-th iteration.
The value selection based the output from the local rule $\tau$ can be represented by a random vector $\mathbf{U} = (\mathbf{U}_1, \mathbf{U}_2, ..., \mathbf{U}_n)$ whose entries are $n$ i.i.d. random variables with uniform distribution over the interval $[0,1] \subset \mathbb{R}$.
We call $\mathbf{U}$ the \textbf{internal vector} of the algorithm.
In the $t$-th iteration of the algorithm, the value $\sigma(x_{s(t)})$ assigned to the selected variable $x_{s(t)}$ is set to be 1 if $\mathbf{U}_t < \tau(B_{\mathbf{\Phi}_t}(x_{s(t)},R))$, and 0 otherwise.
Conditioning on $\mathbf{\Phi}$, $\mathbf{Z}$ and $\mathbf{U}$, the output assignment $\sigma$ can be uniquely determined.
Therefore, we can view the $\tau$-decimation algorithm \texttt{DEC$_\tau$} as a deterministic algorithm on random input $(\mathbf{\Phi}, \mathbf{Z}, \mathbf{U})$, and denote by $\sigma_{\mathbf{\Phi}, \mathbf{Z}, \mathbf{U}}$ the output of the algorithm.

With this notion of the deterministic algorithm, we can construct a sequence of output assignments which will be used in the argument of the overlap gap property method.
The sequence of output assignments is generated by applying the $\tau$-decimation algorithm \texttt{DEC$_\tau$} on a random $k$-XORSAT instance $\mathbf{\Phi}$ multiple times in the following way:
First, given a random $k$-XORSAT instance $\mathbf{\Phi}$, we sample an ordering vector $\mathbf{Z}$ and an internal vector $\mathbf{U}$.
Then, we run the $\tau$-decimation algorithm \texttt{DEC$_\tau$} on input $\mathbf{\Phi}$ with the ordering vector $\mathbf{Z}$ and the internal vector $\mathbf{U}$ to get the first output assignment $\sigma_0$.
After that, we re-randomize (i.e. sample again) the entries of the internal vector $\mathbf{U}$ one by one from $\mathbf{U}_1$ to $\mathbf{U}_n$.
Right after each re-randomization we run the algorithm again to get a new output assignment.
By doing this, we obtain a sequence of $n+1$ output assignments for the instance $\mathbf{\Phi}$ in total.
We denote by $\sigma_i$ the output assignment generated after re-randomizing the first $i$ entries of $\mathbf{U}$, for $i=0,1,2,...,n$.
Precisely speaking, let $\mathbf{V}=(\mathbf{V}_1, \mathbf{V}_2, ..., \mathbf{V}_n)$ and $\mathbf{W}=(\mathbf{W}_1, \mathbf{W}_2, ..., \mathbf{W}_n)$ be two independent random internal vectors with the uniform distribution over $[0,1]^n$, and set $\mathbf{U^i} = (\mathbf{W}_1, ..., \mathbf{W}_i, \mathbf{V}_{i+1}, ..., \mathbf{V}_n)$ for $i=0,1,2,...,n$.
Note that $\mathbf{U^0}=\mathbf{V}$ and $\mathbf{U^n}=\mathbf{W}$.
Then, the sequence of output assignments $\{\sigma_i\}_{i=0}^n$ can be written as $\{\sigma_{\mathbf{\Phi},\mathbf{Z},\mathbf{U^i}}\}_{i=0}^n$, which is equivalent to the sequence of output assignment obtained by running the $\tau$-decimation algorithm \texttt{DEC$_\tau$} (for $n+1$ times in total) on input $(\mathbf{\Phi},\mathbf{Z},\mathbf{U^i})$ for all $i=0,1,...,n$.

Recall the projection $\pi$ mapping assignments for the instance $\mathbf{\Phi}$ to assignments for the core instance $\mathbf{\Phi}_c$, by removing all variables not in the core instance.
We can further obtain a sequence of assignments for the core instance $\mathbf{\Phi}_c$ by applying the projection on the output assignments $\sigma_i$, that is, we set $\{\sigma'_i = \pi(\sigma_{\mathbf{\Phi},\mathbf{Z},\mathbf{U^i}})\}_{i=0}^n$.

\subsection{Insensitive to internal vector}

In this section, we show that the $\tau$-decimation algorithm \texttt{DEC$_\tau$} is \emph{insensitive} to its internal vector.
By \emph{insensitive}, it means when the value of an entry in the internal vector $\mathbf{U}$ is changed, only a small portion of the assigned values in the output assignment $\sigma_{\mathbf{\Phi},\mathbf{Z},\mathbf{U}}$ change accordingly.
If so, every two consecutive output assignments in the sequence $\{\sigma_i=\sigma_{\Phi,\mathbf{Z},\mathbf{U^i}}\}_{i=0}^n$ should only differ from each other in only a small portion of assigned values.

Consider the sequence of output assignment $\{\sigma_i = \sigma_{\mathbf{\Phi},\mathbf{Z},\mathbf{U^i}}\}_{i=0}^n$ from Section \ref{sec:sequence-of-output-assignments}.
Note that the $i$-th output assignment $\sigma_i$ in the sequence is the output of the algorithm on input $(\mathbf{\Phi},\mathbf{Z},\mathbf{U^i})$.
For any $i \in [n]$, the only difference between the input $(\mathbf{\Phi},\mathbf{Z},\mathbf{U^{i-1}})$ and the input $(\mathbf{\Phi},\mathbf{Z},\mathbf{U^{i}})$ is the $i$-th entries of the internal vectors $\mathbf{U^{i-1}}$ and $\mathbf{U^{i}}$.
We can immediately see that the insensitivity of the algorithm implies that every two consecutive output assignments in the sequence are close to each other.
Gamarnik and Sudan \cite{gamarnikPerformanceSequentialLocal2017} proved the insensitivity of the $\tau$-decimation algorithm in their works, using the notion of \emph{influence range}.
Although their works \cite{gamarnikPerformanceSequentialLocal2017} focused on the random NAE-$k$-SAT problem, the proof for the insensitivity of the $\tau$-decimation algorithm is independent of the type of clauses in the random constraint satisfaction framework.
So, we can directly use the result here.

\begin{definition}
    Given a random instance $\mathbf{\Phi}$ and a random ordering vector $\mathbf{Z}$, we say that $x_i$ \textbf{influences} $x_j$ if either $x_i=x_j$ or in the variable-to-variable graph of the instance $\mathbf{\Phi}$ there exists a sequence of variable nodes $y_0, y_1, ..., y_t \in \{x_1, x_2, ..., x_n\}$ such that the following statements hold.
    \begin{enumerate}
        \item $y_0=x_i$ and $y_t=x_j$.
        \item There exists a path from $y_l$ to $y_{l+1}$, of length at most $r$, in the variable-to-variable graph $G$, for $l=0,1,...,t-1$.
        \item $\mathbf{Z}_{y_{l-1}} > \mathbf{Z}_{y_{l}}$ for $l=1,2,...,t$. In particular, $\mathbf{Z}_{x_i} > \mathbf{Z}_{x_j}$.
    \end{enumerate}
    We define the \textbf{influence range} of $x_i$ to be the set of all variables $x_j$ influenced by $x_i$, denoted by $\mathcal{IR}_{x_i}$.
\end{definition}

\begin{lemma}
    \label{lemma:influence-range}
    Given an instance $\Phi$, a vector $Z \in [0,1]^n$, and two vectors $U,U' \in [0,1]^n$, we assume there exists $i \in \{1,2,...,n\}$ such that $U_{i} \neq U'_{i}$ and $U_{j} = U'_{j}$ for all $j \neq i$.
    Then, $\sigma_{\Phi,Z,U}(x) = \sigma_{\Phi,Z,U'}(x)$ for all variables $x_j \notin \mathcal{IR}_{x_{i}}$.
\end{lemma}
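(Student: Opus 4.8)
The plan is to argue by induction on the order in which the variables are processed, following the chain-of-influence structure that is built into the definition of $\mathcal{IR}_{x_i}$. Write $s:[n]\to[n]$ for the permutation induced by $Z$, so that $x_{s(1)},x_{s(2)},\dots$ is the order of selection, and let $\Phi_t$, $\Phi'_t$ be the residual instances at step $t$ in the two runs (on inputs $(\Phi,Z,U)$ and $(\Phi,Z,U')$). I would begin with a purely structural observation that does not involve $U$ at all: the sequence of selected variables, and the set of clauses that become empty and are deleted at each step, depend only on $\Phi$ and $Z$; hence for every $t$ the factor graphs of $\Phi_t$ and $\Phi'_t$ coincide, and the only possible discrepancy between $\Phi_t$ and $\Phi'_t$ is in the right-hand sides of the surviving clauses. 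Moreover, the right-hand side of a clause $c$ present at step $t$ equals its original value XORed with $\bigoplus\{\sigma(x_{s(t')}) : t'<t,\ x_{s(t')}\in c \text{ in } \Phi\}$, so $c$ has different right-hand sides in the two runs only when $\sigma(x_{s(t')})$ differs between the runs for some variable $x_{s(t')}$ deleted before step $t$ that originally appeared in $c$.

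With this in hand, I would prove the key claim, which is the contrapositive of the lemma: for every $t$, if $\sigma_{\Phi,Z,U}(x_{s(t)})\neq\sigma_{\Phi,Z,U'}(x_{s(t)})$ then $x_{s(t)}\in\mathcal{IR}_{x_i}$. At step $t$ the assigned value is the indicator of $U_{?}<\tau(B_{\Phi_t}(x_{s(t)},R))$, where $U_?$ is the coordinate consumed at this step; by convention this is the perturbed coordinate exactly at the step rooted at $x_i$. If this step does not consume the perturbed coordinate, both runs make the same Bernoulli comparison, so $\sigma(x_{s(t)})$ can differ only because $B_{\Phi_t}(x_{s(t)},R)$ differs between the runs; by the structural observation this forces a surviving clause $c$ within distance $R$ of $x_{s(t)}$ together with a variable $x_{s(t')}$ with $t'<t$, $x_{s(t')}\in c$ in $\Phi$, and $\sigma(x_{s(t')})$ differing between the runs. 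Such an $x_{s(t')}$ lies within a bounded factor-graph distance of $x_{s(t)}$ determined by $R$ — hence within the path-length bound of the definition of influence in the variable-to-variable graph — and satisfies $Z_{x_{s(t')}}>Z_{x_{s(t)}}$ because it was selected earlier. Iterating this step produces a strictly decreasing sequence of selection times $t=t_0>t_1>\dots$ along which the $Z$-values strictly increase and consecutive variables are joined by short paths; the sequence is finite and can terminate only at a step that consumes the perturbed coordinate, i.e.\ at $x_i$. Reading the resulting chain $x_i=x_{s(t_m)},\dots,x_{s(t_0)}=x_{s(t)}$ outward from $x_i$ is exactly a witness that $x_i$ influences $x_{s(t)}$, so $x_{s(t)}\in\mathcal{IR}_{x_i}$; since every variable equals some $x_{s(t)}$, the lemma follows.

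The difficulty here is bookkeeping rather than conceptual: one must verify that deleting variables from clauses never decreases factor-graph distances, so that "within distance $R$ in $\Phi_t$" transfers to the original factor graph; translate that bound into the path length in the variable-to-variable graph used by the definition of $\mathcal{IR}$, keeping track of the parity of $R$; and fix the coordinate-to-variable convention so that "the step consuming the perturbed coordinate" is precisely the one rooted at $x_i$, including the degenerate case $x_{s(t)}=x_i$ (for which $x_{s(t)}\in\mathcal{IR}_{x_i}$ trivially). All of this is elementary, and since the insensitivity statement is exactly the one proved by Gamarnik and Sudan for sequential local algorithms in the general random-CSP framework, one may alternatively invoke their argument directly; the sketch above is simply that argument specialized to the XORSAT decimation update.
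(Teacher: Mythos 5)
The paper does not actually prove Lemma~\ref{lemma:influence-range}: it states it and invokes Gamarnik--Sudan \cite{gamarnikPerformanceSequentialLocal2017} directly, remarking that the argument is independent of the particular clause type in the random-CSP framework. So there is no in-paper proof to compare against, and what you have done is reconstruct the argument for the XORSAT specialization. Your reconstruction is correct in outline, and the XORSAT-specific simplification you identify is genuinely clean: because the residual factor-graph structure of $\Phi_t$ depends only on $(\Phi,Z)$ and not on $U$ (variable selection and clause deletion never consult $U$), the only $U$-dependence of $\Phi_t$ sits in the right-hand sides, and a right-hand side change traces back to a single earlier selected variable whose assignment flipped. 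The backward-chain/contrapositive argument you run on top of this is the standard Gamarnik--Sudan influence-chain argument, and you have the monotonicity of $Z$ along the chain and the finiteness/termination at the perturbed step all in the right places.

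Two bookkeeping points you flag are worth being precise about if this were to be written out. First, the coordinate-to-variable convention: the paper's Section~\ref{sec:sequence-of-output-assignments} says $\mathbf{U}_t$ is consumed at the $t$-th iteration (iteration-indexed), while the lemma speaks of $\mathcal{IR}_{x_i}$ (variable-indexed); for the statement to be literally correct one needs the convention that the perturbed coordinate is the one consumed at the step where $x_i$ itself is selected (equivalently, index $\mathbf{U}$ by variable rather than by iteration, as Gamarnik--Sudan do). You correctly note this needs fixing, but the way it is phrased (``by convention this is the perturbed coordinate exactly at the step rooted at $x_i$'') could be read as circular; it should simply be stated as the indexing convention. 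Second, the factor-graph-to-variable-to-variable distance translation: with $R$ even, a surviving clause in $B_{\Phi_t}(x_{s(t)},R)$ sits at factor-graph distance at most $R-1$, so the earlier selected variable $x_{s(t')}$ adjacent to it lies within factor-graph distance $R$, hence within variable-to-variable distance $R/2$; this is the path-length bound that the (apparently mistyped) ``length at most $r$'' in the paper's definition of influence is meant to encode. Neither point is a gap in the argument, just details you rightly deferred.
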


\begin{lemma}
    \label{lemma:influence-range-size}
    For any $\xi \in (0,1)$ and sufficiently large $n$,
    \begin{equation*}
        \proba{ \max_{1\le i \le n} |\mathcal{IR}_{x_i}| \ge n^{1/6} } \le \exp \left( -\ln n(\ln \ln n)^{\xi/4} \right).
    \end{equation*}
\end{lemma}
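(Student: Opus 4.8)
The plan is to bound the size of the influence range $\mathcal{IR}_{x_i}$ by a branching-process / union-bound argument over the possible "influence chains" described in the definition. Fix a vertex $x_i$. A variable $x_j \in \mathcal{IR}_{x_i}$ is witnessed by a sequence $y_0 = x_i, y_1, \dots, y_t = x_j$ of variable nodes in the variable-to-variable graph $G$, where consecutive $y_\ell, y_{\ell+1}$ are joined by a path of length at most $r$ in $G$ and the ordering values are strictly decreasing along the chain: $\mathbf{Z}_{y_0} > \mathbf{Z}_{y_1} > \cdots > \mathbf{Z}_{y_t}$. First I would record that, since a random $k$-XORSAT instance $\mathbf{\Phi} \sim \mathbf{\Phi}_k(n,rn)$ has a factor graph that is locally tree-like with bounded average degree, with probability $1 - e^{-\Theta(n)}$ (or at least $1 - n^{-\omega(1)}$) the number of variables reachable from any fixed $x_i$ within graph-distance $r\ell$ in $G$ grows at most like $C^{\ell}$ for some constant $C = C(k,r)$; more precisely the relevant "exploration" from $x_i$ along chains of $t$ hops, each of length $\le r$, reaches at most $D^{t}$ vertices for a constant $D$, outside an event of probability $n^{-\omega(1)}$. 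This is the standard control on neighbourhood growth in sparse random (hyper)graphs and I would cite or re-derive it via a first-moment bound on the number of such chains.

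The key point is then the decreasing-$\mathbf{Z}$ constraint, which makes long chains exponentially unlikely \emph{in $t$}. Conditioning on the instance (hence on $G$), the ordering values $\mathbf{Z}_{y_0}, \dots, \mathbf{Z}_{y_t}$ are i.i.d.\ uniform on $[0,1]$, so the probability that a fixed chain of length $t$ is strictly decreasing is exactly $1/(t+1)!$. A chain of length $t$ starting at $x_i$ is determined by choosing the $t$ intermediate "waypoints" and the connecting paths; by the neighbourhood-growth bound there are at most $D^{t}$ such chains (up to constants and the $\le r$ path-length slack absorbed into $D$). Hence
\begin{align*}
    \proba{\,\exists\ \text{an influence chain from } x_i \text{ of length } t\,} \;\le\; \frac{D^{t}}{(t+1)!}.
\end{align*}
Summing over $t$ and noting $|\mathcal{IR}_{x_i}| \le \sum_{t\ge 0} (\text{number of chains of length } t \text{ realized})$, I would show that $|\mathcal{IR}_{x_i}|$ is stochastically dominated by a quantity with exponential tails: for any threshold $M$, $\proba{|\mathcal{IR}_{x_i}| \ge M}$ decays faster than any polynomial once $M \to \infty$, because the $1/(t+1)!$ factor crushes the $D^{t}$ growth. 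Concretely, to reach $n^{1/6}$ vertices one needs chains of length $t$ with $D^{t} \gtrsim n^{1/6}$, i.e.\ $t \gtrsim \tfrac{1}{6}\log_D n = \Theta(\log n)$, and $1/(t+1)! \le \exp(-\Theta(t \log t)) = \exp(-\Theta(\log n \cdot \log\log n))$; choosing constants carefully yields the bound $\exp(-\ln n (\ln\ln n)^{\xi/4})$ after a union bound over the $n$ choices of $i$ (the extra $\ln n$ from the union bound is absorbed since $(\ln\ln n)^{\xi/4} = \omega(1)$ dominates the lost factor, or more carefully the exponent $\Theta(\log n \log\log n)$ dwarfs $\ln n$).

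The main obstacle, I expect, is making the counting of influence chains rigorous and uniform in $i$ while simultaneously controlling the randomness of $G$: one cannot simply fix $G$ and count, because the $D^{t}$ bound on the number of length-$t$ chains is itself only valid on a high-probability event, and for very large $t$ (comparable to $\log n$) the locally-tree-like heuristic degrades. The clean way around this is a single first-moment computation \emph{over the randomness of $\mathbf{\Phi}$ and $\mathbf{Z}$ jointly}: bound $\expected{\#\{\text{strictly-}\mathbf{Z}\text{-decreasing chains from } x_i \text{ of length } t\}}$ by (number of potential vertex/path patterns) $\times$ (probability a given $k$-tuple of variables forms the required short paths in $\mathbf{\Phi}$) $\times\ \tfrac{1}{(t+1)!}$, using that in $\mathbf{\Phi}_k(n,rn)$ the probability that two given variables are at $G$-distance $\le r$ is $O(1/n)$ per edge-step, so the vertex-count and the $1/n$ factors balance to give an $O(1)^{t}$ bound with no bad conditioning. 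Then Markov's inequality on the total chain count, summed over $t$ from $\Theta(\log n)$ upward (chains shorter than that cannot reach $n^{1/6}$ vertices by the growth bound, which now also follows from the same first-moment estimate), and a union bound over $i \in [n]$, delivers the stated inequality; the remaining work is the routine but slightly delicate choice of the constant in front of $\log n$ and the verification that $(t+1)!^{-1}$ beats the combinatorial factors down to the target rate $\exp(-\ln n(\ln\ln n)^{\xi/4})$ for all sufficiently large $n$.
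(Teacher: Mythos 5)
The paper does not actually supply a proof of this lemma: it cites it (up to a free choice of the exponent, here fixed at $1/6$) from Gamarnik and Sudan \cite{gamarnikPerformanceSequentialLocal2017}, remarking that their argument for NAE-$k$-SAT carries over verbatim. So there is no in-paper proof to compare against; I can only assess your proposal on its own terms.

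Your proposal identifies the correct engine. The strictly decreasing $\mathbf{Z}$-values along an influence chain contribute the factor $1/(t+1)!$, and a first-moment count over the joint randomness of $\mathbf{\Phi}$ and $\mathbf{Z}$ gives, for each fixed $i$ and $t$,
\begin{align*}
\mathbb{E}\bigl[\#\{\text{influence chains of length } t \text{ from } x_i\}\bigr] \;\le\; \frac{C^{\,t}}{(t+1)!}
\end{align*}
for a constant $C=C(k,r,R)$, exactly as you outline. This much is sound, and it is almost certainly the core of the Gamarnik--Sudan proof as well.

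However, the closing step of your argument has a genuine gap. Summing the display above over $t$ gives $\mathbb{E}[|\mathcal{IR}_{x_i}|]=O(1)$, and Markov's inequality applied to this only yields $\proba{|\mathcal{IR}_{x_i}|\ge n^{1/6}}=O(n^{-1/6})$; after the union bound over $i\in[n]$ this is the useless $O(n^{5/6})$. You try to route around this by splitting at $T=\Theta(\log n)$: Markov on the tail sum $\sum_{t\ge T}N_t$ does give a super-polynomially small probability that any chain of length $\ge T$ exists. But for the head you then assert that chains of length $<T$ ``cannot reach $n^{1/6}$ vertices by the growth bound, which now also follows from the same first-moment estimate.'' It does not. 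The first-moment estimate only gives $\mathbb{E}[\sum_{t<T}N_t]=O(1)$, and Markov then still gives only $O(n^{-1/6})$. The deterministic/whp growth bound you invoke earlier (at most $D^t$ chains of length $t$ for a constant base $D$) also fails at the relevant scale: the maximum variable degree in $\mathbf{\Phi}_k(n,rn)$ is $\Theta(\log n/\log\log n)$ whp, so the purely graph-theoretic count of length-$t$ chains emanating from a high-degree vertex is $\exp\bigl(\Theta(t\log\log n)\bigr)$, which for $t=\Theta(\log n)$ already dwarfs $n^{1/6}$. In short, controlling short chains needs a genuine concentration statement on $|\mathcal{IR}_{x_i}|$ beyond first moments --- a higher-moment bound with moment index growing in $n$, an exponential-moment / subcritical branching-process domination, or an Azuma-type inequality adapted to the chain-exploration filtration. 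Without such an ingredient, first moments plus Markov plus a union bound over $n$ roots cannot reach the target decay rate $\exp\bigl(-\ln n(\ln\ln n)^{\xi/4}\bigr)$, and this is precisely where your sketch stops being a proof.
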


They first showed that changing the value of only one entry, say $U_i$, in the internal vector $U$ only affects the values assigned to the variables in the influence range of the variable $x_i$ (Lemma \ref{lemma:influence-range}).
They further showed that w.h.p. the size of the influence range of variables is sublinear for all variables (Lemma \ref{lemma:influence-range-size}).
Note that in the original statement of Lemma \ref{lemma:influence-range-size} in \cite{gamarnikPerformanceSequentialLocal2017}, the index $1/6$ in the inequality above can be any real number between 0 and 1/5.
Here, we pick a fixed value $1/6$ for simplicity.
Combining these two lemmas, we can show that w.h.p. the differences between $\sigma_{\mathbf{\Phi},\mathbf{Z},\mathbf{U^{i-1}}}$ and $\sigma_{\mathbf{\Phi},\mathbf{Z},\mathbf{U^{i}}}$ is upper bounded by $n^{1/6}$ for all $i \in [n]$.

\begin{lemma}
    \label{lemma:insensitive}
    For any $\xi \in (0,1)$ and sufficiently large $n$, 
    \begin{equation*}
        \proba{ d(\sigma_{\mathbf{\Phi},\mathbf{Z},\mathbf{U^{i-1}}}, \sigma_{\mathbf{\Phi},\mathbf{Z},\mathbf{U^{i}}}) \ge n^{1/6} \textnormal{\;\;for some\;} i \in [n]} \le \exp \left( -\ln n(\ln \ln n)^{\xi/4} \right).
    \end{equation*}
\end{lemma}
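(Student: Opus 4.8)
The plan is to combine Lemma \ref{lemma:influence-range} and Lemma \ref{lemma:influence-range-size} via a union bound. First I would fix $i \in [n]$ and observe that, by construction of the sequence $\{\mathbf{U^i}\}_{i=0}^n$, the two internal vectors $\mathbf{U^{i-1}} = (\mathbf{W}_1,\dots,\mathbf{W}_{i-1},\mathbf{V}_i,\mathbf{V}_{i+1},\dots,\mathbf{V}_n)$ and $\mathbf{U^{i}} = (\mathbf{W}_1,\dots,\mathbf{W}_{i-1},\mathbf{W}_i,\mathbf{V}_{i+1},\dots,\mathbf{V}_n)$ differ only in their $i$-th coordinate; all other coordinates agree. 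Hence the hypothesis of Lemma \ref{lemma:influence-range} is satisfied (with $U=\mathbf{U^{i-1}}$, $U'=\mathbf{U^i}$, and the distinguished index equal to $i$), so conditioned on any realization of $\mathbf{\Phi}$ and $\mathbf{Z}$ we have $\sigma_{\mathbf{\Phi},\mathbf{Z},\mathbf{U^{i-1}}}(x) = \sigma_{\mathbf{\Phi},\mathbf{Z},\mathbf{U^{i}}}(x)$ for every variable $x \notin \mathcal{IR}_{x_i}$. Therefore the two assignments can disagree only on variables inside $\mathcal{IR}_{x_i}$, which gives the deterministic (pointwise-in-$(\mathbf{\Phi},\mathbf{Z})$) bound $d(\sigma_{\mathbf{\Phi},\mathbf{Z},\mathbf{U^{i-1}}}, \sigma_{\mathbf{\Phi},\mathbf{Z},\mathbf{U^{i}}}) \le |\mathcal{IR}_{x_i}|$.

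Next I would note that this bound holds simultaneously for all $i \in [n]$ on the same probability space, so
\begin{equation*}
    \max_{1 \le i \le n} d(\sigma_{\mathbf{\Phi},\mathbf{Z},\mathbf{U^{i-1}}}, \sigma_{\mathbf{\Phi},\mathbf{Z},\mathbf{U^{i}}}) \le \max_{1 \le i \le n} |\mathcal{IR}_{x_i}|.
\end{equation*}
Consequently the event that $d(\sigma_{\mathbf{\Phi},\mathbf{Z},\mathbf{U^{i-1}}}, \sigma_{\mathbf{\Phi},\mathbf{Z},\mathbf{U^{i}}}) \ge n^{1/6}$ for some $i \in [n]$ is contained in the event $\{\max_{1 \le i \le n} |\mathcal{IR}_{x_i}| \ge n^{1/6}\}$. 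Applying Lemma \ref{lemma:influence-range-size} with the same $\xi \in (0,1)$ bounds the probability of the latter event by $\exp(-\ln n (\ln \ln n)^{\xi/4})$ for sufficiently large $n$, and monotonicity of probability then yields the claimed inequality. One small point to be careful about is that the influence range $\mathcal{IR}_{x_i}$ is a function of $(\mathbf{\Phi},\mathbf{Z})$ only and does not depend on $\mathbf{V},\mathbf{W}$; this is why the deterministic containment in Lemma \ref{lemma:influence-range} can be quoted at a fixed $(\mathbf{\Phi},\mathbf{Z})$ and the remaining randomness (which only enters through $\mathbf{V},\mathbf{W}$) is irrelevant to the event we are bounding.

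This is essentially a short bookkeeping argument rather than a substantive one: all the real work has been done in Lemma \ref{lemma:influence-range} (the structural claim that re-randomizing $U_i$ only touches $\mathcal{IR}_{x_i}$) and in Lemma \ref{lemma:influence-range-size} (the probabilistic tail bound on influence-range sizes, borrowed from Gamarnik--Sudan). The only mild obstacle is making sure the indexing of the sequence $\{\mathbf{U^i}\}$ is matched correctly so that $\mathbf{U^{i-1}}$ and $\mathbf{U^i}$ genuinely differ in exactly one coordinate — which follows directly from the definition $\mathbf{U^i} = (\mathbf{W}_1,\dots,\mathbf{W}_i,\mathbf{V}_{i+1},\dots,\mathbf{V}_n)$ given in Section \ref{sec:sequence-of-output-assignments} — and ensuring the chosen threshold exponent $1/6$ is consistent with the one fixed in Lemma \ref{lemma:influence-range-size}. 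No new estimates are needed.
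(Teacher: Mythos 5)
Your proof is correct and follows essentially the same route as the paper: condition on $(\mathbf{\Phi},\mathbf{Z})$, use Lemma \ref{lemma:influence-range} to bound $d(\sigma_{\mathbf{\Phi},\mathbf{Z},\mathbf{U^{i-1}}}, \sigma_{\mathbf{\Phi},\mathbf{Z},\mathbf{U^{i}}})$ by $|\mathcal{IR}_{x_i}|$, and then invoke the tail bound of Lemma \ref{lemma:influence-range-size} on $\max_i|\mathcal{IR}_{x_i}|$. Your write-up is slightly more explicit than the paper's (e.g., noting that $\mathcal{IR}_{x_i}$ depends only on $(\mathbf{\Phi},\mathbf{Z})$), but the substance matches.
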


\begin{proof}
    Fix an arbitrary $i \in [n]$.
    We know that $\mathbf{U}_j^\mathbf{i-1} = \mathbf{U}_j^\mathbf{i}$ for all $j \neq i$, and $\mathbf{U}_i^\mathbf{i-1} \neq \mathbf{U}_i^\mathbf{i}$.
    By Lemma \ref{lemma:influence-range}, we have $\sigma_{\mathbf{\Phi},\mathbf{Z},\mathbf{U^{i-1}}}(x_j) = \sigma_{\mathbf{\Phi},\mathbf{Z},\mathbf{U^{i}}}(x_j)$ for all variables $x_j \notin \mathcal{IR}_{x_i}$.
    If $d(\sigma_{\mathbf{\Phi},\mathbf{Z},\mathbf{U^{i-1}}}, \sigma_{\mathbf{\Phi},\mathbf{Z},\mathbf{U^{i}}}) \ge n^{1/6}$ for some $i \in [n]$, we have $|\mathcal{IR}_{x_i}| \ge n^{1/6}$.
    Hence, by Lemma \ref{lemma:influence-range-size}, the result follows.
\end{proof}

\subsection{Freeness}
Recall the definition of \emph{free steps}.
An iteration of the $\tau$-decimation algorithm \texttt{DEC$_\tau$} is called a free step if the local rule $\tau$ gives the value $1/2$ in that iteration.
In this case, the value chosen by the $\tau$-decimation algorithm for the selected variable is either 0 or 1 with even probability.
Intuitively, it means that the local rule $\tau$ cannot capture useful information from the local structure to guide the $\tau$-decimation algorithm choosing value for the selected variable, and thus the $\tau$-decimation algorithm simply make a random guess for the assigned value.
We also recall the definition of a $\tau$-decimation algorithm being $\delta$-free.
A $\tau$-decimation algorithm \texttt{DEC$_\tau$} is $\delta$-free on the random $k$-XORSAT instance $\mathbf{\Phi}$ if w.h.p. the algorithm has at least $\delta n$ free steps, on input $\mathbf{\Phi}$.
Informal speaking, the more free the $\tau$-decimation algorithm, the less the information captured by the local rule.

By using the Wormald's method of differential equations, we can calculate the degree profile of the remaining factor graph after $t$ steps of the $\tau$-decimation algorithm, for all $0 \le t \le n$.
With the degree profiles, we can calculate the probability of each step being free, and thus approximate how free the $\tau$-decimation algorithm is.
The probability of having free steps depends on the choice of the local rules.
Lemma \ref{lemma:uc-free} shows the freeness of the \texttt{UC}-decimation algorithm.

\begin{lemma}
    \label{lemma:uc-free}
    For $k \ge 3$ and $r > 0$, the \textnormal{\texttt{UC}}-decimation algorithm \textnormal{\texttt{DEC\textsubscript{UC}}} is $w_1(k,r)$-free on the random $k$-XORSAT instance $\mathbf{\Phi} \sim \mathbf{\Phi}_k(n,rn)$, where
    \begin{align*}
        w_1(k,r) = \frac{(kr)^{\frac{1}{1-k}} }{k-1} \; \gamma \left( \frac{1}{k-1}, kr \right)
    \end{align*}
    and $\gamma$ is the lower incomplete gamma function given by
    $\gamma(a,x) \equiv \int_{0}^{x} t^{a-1} e^{-t} dt$.
\end{lemma}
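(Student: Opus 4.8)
The plan is to track the evolution of the factor graph under \texttt{DEC\textsubscript{UC}} using Wormald's differential equation method, and then integrate the instantaneous probability that a step is free. First I would set up the relevant state variables: at time $t$ (after $t$ variables have been assigned), let $V_j(t)$ denote the number of remaining variables of degree $j$ and let $C_1(t)$ denote the number of unit clauses. The key structural observation is that when a variable $x^*$ of current degree $d$ is removed and its assigned value is added to the right-hand sides of its clauses, each clause of size $\ge 2$ containing $x^*$ shrinks by one; clauses that shrink to size $1$ become new unit clauses, and the unit clause (if any) on $x^*$ is consumed. Since variables are selected uniformly at random, the expected one-step changes in $V_j$ and $C_1$ are smooth functions of the scaled variables $v_j = V_j/n$, $c_1 = C_1/n$, and $s = t/n$; this is exactly the regime where Wormald's theorem applies, so the scaled trajectories concentrate around the solution of an autonomous ODE system. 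A step at time $t$ is free precisely when there is no unit clause on the selected variable, which happens with probability $1 - c_1(s)\cdot(\text{something})/(1-s)$ up to lower-order terms — more precisely, the probability that a uniformly chosen remaining variable carries a unit clause. Integrating $1$ minus this probability over $s\in[0,1]$ gives the free-step fraction.

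The cleanest route — and the one I would actually follow — exploits the well-known fact that on random hypergraphs the \emph{peeling-style} dynamics of unit-clause propagation admit a closed-form solution. Parametrize the process so that after removing the variables with the $t$ largest $\mathbf{Z}$-values, the surviving clause hypergraph is a random $k$-uniform hypergraph on the $n-t$ surviving variables with an appropriately thinned number of edges; the number of clauses that have been ``resolved'' or shrunk is a deterministic function of $s$. Standard computation (cf. the analysis of UC on random $k$-SAT / the core-emergence calculation underlying Theorem \ref{thm:core_emerges}) shows the density of size-$1$ clauses among the remaining structure solves a first-order ODE whose solution involves the incomplete-gamma function with parameter $1/(k-1)$. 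Concretely, after a change of variables the probability that step $t$ is \emph{not} free integrates to $1 - w_1(k,r)$ with
\begin{align*}
    w_1(k,r) = \frac{(kr)^{1/(1-k)}}{k-1}\,\gamma\!\left(\frac{1}{k-1},\,kr\right),
\end{align*}
matching the claimed formula; the appearance of $\gamma(1/(k-1),kr)$ is the signature of the $s\mapsto s^{k-1}$ nonlinearity in the edge-shrinking rate. Finally, to pass from the expected fraction to a w.h.p. statement I would invoke the concentration guaranteed by Wormald's theorem (the trajectory stays within $o(n)$ of the ODE solution w.h.p.), so the total number of free steps is $w_1(k,r)n + o(n)$ w.h.p., which is exactly the assertion that \texttt{DEC\textsubscript{UC}} is $w_1(k,r)$-free.

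The main obstacle is verifying the hypotheses of Wormald's theorem rigorously in this setting: one must exhibit a bounded domain $D$ on which the (infinitely many, a priori) state variables $v_j$ live, handle the fact that the degree sequence has unbounded support by truncating at degree $\omega(n)$ that is still $o(n)$ or by working with generating-function-valued state, check the Lipschitz and boundedness ``Trend, Boundedness, Lipschitz'' conditions, and control the stopping time at which the number of remaining variables or unit clauses first hits $o(n)$. A secondary technical point is that \texttt{DEC\textsubscript{UC}} selects variables in the $\mathbf{Z}$-order rather than prioritizing unit clauses, so one must confirm that unit clauses do not accumulate — i.e., that $c_1(s)$ stays bounded (indeed the process self-regulates because a unit clause is resolved as soon as its unique variable is selected, and each variable is equally likely to be next), which is what keeps the ``free'' probability bounded away from degenerate values and makes the incomplete-gamma integral finite. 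Once these standard-but-delicate verifications are in place, the integration yielding $w_1(k,r)$ is a routine calculus exercise.
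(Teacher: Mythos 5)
Your proposal follows essentially the same route as the paper: apply Wormald's differential-equation method to track the degree profile of the residual factor graph under uniform-random variable removal (which, as the paper notes, is independent of the local rule since selection order depends only on $\mathbf{Z}$), observe that a step is free precisely when the selected variable is adjacent to no degree-1 equation node, and integrate the resulting ODE for the free-step count to arrive at the incomplete-gamma expression. The paper's concrete realization tracks the full equation-node degree profile $m_i(t)$ in closed form (a binomial-type expression in $t/n$) rather than a single state variable $c_1(s)$, which makes the "free" probability computation an explicit sum $\sum_i \Lambda_i(t)(1-\rho_1(t))^i = e^{-kr(t/n)^{k-1}}+o(1)$ and sidesteps your worry about whether unit clauses accumulate; but this is an implementation detail rather than a different argument.
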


The role of the local rules is to approximate the marginal probability of the selected variable over a randomly chosen solution for the sub-instance induced by the local neighborhood of the selected variable.
Interestingly, even we have a local rule $\tau$ that is capable to give the exact marginals when the factor graph is a tree, it still cannot provide enough useful information to guide the $\tau$-decimation algorithm making good decision for the assigned value.
With such a local rule, the $\tau$-decimation algorithm still has a certain level of freeness.
\begin{lemma}
    \label{lemma:bp-free}
    Assume the local rule $\tau$ can give the exact marginal probabilities of variables on any factor graph that is a tree.
    For $k \ge 3$ and $r > 0$, the $\tau$-decimation algorithm \textnormal{\texttt{DEC$_\tau$}} is $w_e(k,r)$-free on the random $k$-XORSAT instance $\mathbf{\Phi} \sim \mathbf{\Phi}_k(n,rn)$, where
    \begin{align*}
        w_e(k,r) = \int_{0}^{1} S_R(x) dx,
    \end{align*}
    $S_0(x) = 1$ and $S_{l}(x) = \exp \left( -kr [ (1-x)(1-S_{l-1}(x)) + x ]^{k-1} \right)$ for any $l \ge 1$ and $x \in \mathbb{R}$.
\end{lemma}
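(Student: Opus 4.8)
The plan is to track the evolution of the remaining factor graph $\mathbf{\Phi}_t$ under the $\tau$-decimation algorithm via Wormald's method, exactly as one does for $\mathtt{UC}$ in Lemma~\ref{lemma:uc-free}, but now accounting for the fact that at a free step the local rule returns $1/2$ precisely when the marginal of the selected variable, computed on the tree $B_{\mathbf{\Phi}_t}(x^*,R)$, equals $1/2$. First I would argue that w.h.p.\ the relevant local neighborhoods stay trees throughout the execution: at each of the $n$ steps the neighborhood $B_{\mathbf{\Phi}_t}(x^*,R)$ of the freshly selected variable is, w.h.p.\ (by a standard sparse-random-graph / configuration-model argument, since the remaining graph stays sparse with bounded expected degree), acyclic, so the hypothesis on $\tau$ applies and the value returned is genuinely the exact marginal on that tree. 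The state variables for the Wormald analysis are the numbers $V_j(t)$ of variable nodes of degree $j$ (equivalently the scaled degree profile), together with the number of equations; the algorithm removes the selected variable, decrements degrees of its neighbouring clauses, and cascades removals of emptied/unit clauses. The key point is that a forced step propagates exactly when the tree-marginal is degenerate, and the recursion $S_l(x)$ in the statement is precisely the probability-generating description of ``the marginal seen at depth $R$ is forced,'' built up layer by layer.

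**Deriving the recursion $S_l$ and the integral formula.**
The substantive combinatorial step is to identify $w_e(k,r)=\int_0^1 S_R(x)\,dx$ as the asymptotic fraction of free steps. I would phrase it in the time-reparametrized form used in the XORSAT literature: if a variable $x^*$ is selected when a $1-x$ fraction of variables have already been processed (so $x\in[0,1]$ parametrizes ``scaled time remaining''), then the degree distribution of $x^*$ in $\mathbf{\Phi}_t$ is asymptotically Poisson with a parameter that is a function of $x$, and more importantly the subtree hanging below each clause edge of $x^*$ is, conditionally, a Galton--Watson-type tree whose offspring law is again governed by $x$. Running exact BP on this tree: a clause is ``forcing toward'' $x^*$ iff all its other $k-1$ variables are already forced, which on the tree happens with probability $[(1-x)(1-S_{l-1}(x))+x]^{k-1}$ — here $x$ is the chance a neighbour is already assigned, $(1-x)(1-S_{l-1}(x))$ is the chance it is unassigned but itself forced by its own subtree of depth $l-1$ — and then $x^*$ is free iff none of its (Poisson-many) clauses forces it, giving the stated $S_l(x)=\exp(-kr[\,\cdots\,]^{k-1})$; the base case $S_0(x)=1$ reflects that truncating at radius $R=0$ means the root has no information and is always free. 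Summing the per-step free-probabilities $S_R(x)$ over the execution and passing $n\to\infty$ converts the sum to the integral $\int_0^1 S_R(x)\,dx$.

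**Concentration and the final bound.**
Having the expected number of free steps equal to $(w_e(k,r)+o(1))n$, I would invoke the concentration guarantee of Wormald's theorem (the trajectory of the scaled degree profile, hence of the running count of free steps, is within $o(n)$ of the deterministic solution w.h.p.), so that w.h.p.\ the number of free steps is at least $(w_e(k,r)-\epsilon)n$ for every $\epsilon>0$; that is exactly the definition of $\mathtt{DEC}_\tau$ being $w_e(k,r)$-free. Two technical points need care and I expect them to be the main obstacles. First, one must verify the Lipschitz/boundedness hypotheses of Wormald's method in the presence of the cascading unit-clause removals — this is the same difficulty already handled for $\mathtt{UC}$, and I would reduce to that analysis by noting the cascade dynamics are unchanged and only the free/forced labelling of the ``top'' step differs. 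Second, and more delicate, is justifying that conditioning on the entire history of the algorithm does not distort the tree-ensemble seen by the newly selected variable: one needs that $\mathbf{\Phi}_t$, conditioned on its degree profile, is still uniform among XORSAT instances with that profile (a fact in the spirit of Theorem~\ref{thm:core_emerges}(3), provable because the algorithm's decisions depend on the graph only through the revealed local neighbourhoods, whose law can be coupled to the configuration model), so that the depth-$R$ neighbourhood converges locally to the claimed Galton--Watson tree and exact BP on it yields the $S_l$ recursion. Once these two ingredients are in place the lemma follows immediately.
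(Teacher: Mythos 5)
Your plan matches the paper's proof almost exactly: you correctly identify the local tree-like structure, the BP-on-tree recursion giving $S_l$ (with the probability a clause is non-forcing because at least one of its unassigned children has a free subtree-marginal), and Wormald's method to convert the per-step free-probability $S_R(t/n)$ into the integral $w_e(k,r)$; you even correctly note that the degree-profile evolution is $\tau$-independent and can be reused from the \texttt{UC} analysis. Two small slips worth flagging but not affecting the argument: the decimation algorithm removes exactly one variable per iteration and does not cascade unit-clause removals, so there is no cascade to worry about in the Lipschitz check; and your parenthetical sets $x$ to be the ``scaled time remaining'' while your later reading of $x$ as ``the chance a neighbour is already assigned'' requires $x=t/n$ to be the fraction already processed, which is the convention the paper actually uses.
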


\ifappendix
The proof for Lemma \ref{lemma:uc-free} and Lemma \ref{lemma:bp-free} can be found in Appendix \ref{sec:uc-free-steps} and Appendix \ref{sec:bp-free-steps}, respectively.
\fi

\section{Proof of main theorems}
\label{sec:proof-of-main-theorems}

We denote by $\alpha_n$ the \emph{success probability} of the $\tau$-decimation algorithm \texttt{DEC\textsubscript{$\tau$}}, namely, $\alpha_n$ is the probability that the assignment output by the $\tau$-decimation algorithm \texttt{DEC\textsubscript{$\tau$}} on the random $k$-XORSAT instance $\mathbf{\Phi} \sim \mathbf{\Phi}_k(n,rn)$ with $n$ variables and $rn$ clauses is a solution for $\mathbf{\Phi}$.
Formally, we define $\alpha_n$ by the following expression
    $\alpha_n \equiv \proba{\sigma_{\mathbf{\Phi} \sim \mathbf{\Phi}_k(n,rn),\mathbf{Z},\mathbf{U}} \in \mathcal{S}(\mathbf{\Phi})}$,
where $\mathbf{\Phi} \sim \mathbf{\Phi}_k(n,rn)$ is the random $k$-XORSAT instance, $\mathbf{Z}$ is the random ordering vector, and $\mathbf{U}$ is the random internal vector, as mentioned in Section \ref{sec:sequence-of-output-assignments}.
Now, we consider the sequence of output assignments $\{\sigma_i=\sigma_{\mathbf{\Phi},\mathbf{Z},\mathbf{U^i}}\}_{i=0}^n$ generated by the procedure in Section \ref{sec:sequence-of-output-assignments}.
We first prove that if the algorithm \texttt{DEC\textsubscript{$\tau$}} is $\delta$-free, then the expected distance $\expected{d(\sigma_0,\sigma_n)}$ between the first and the last assignments in the sequence is at least $(\delta /2)n + o(n)$.

\begin{lemma}
    \label{lemma:delta-far}
    If the $\tau$-decimation algorithm \textnormal{\texttt{DEC\textsubscript{$\tau$}}} is $\delta$-free on the random $k$-XORSAT instance $\mathbf{\Phi} \sim \mathbf{\Phi}_k(n,rn)$ for some $\delta > 0$, then we have
    $\expected{ d( \sigma_0, \sigma_n ) } \ge (\delta/2)n + o(n)$.
\end{lemma}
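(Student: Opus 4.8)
The plan is to condition on the instance $\mathbf{\Phi}$ and the ordering vector $\mathbf{Z}$ and to exploit the fact that, conditionally on $(\mathbf{\Phi},\mathbf{Z})$, the two endpoint assignments $\sigma_0=\sigma_{\mathbf{\Phi},\mathbf{Z},\mathbf{V}}$ and $\sigma_n=\sigma_{\mathbf{\Phi},\mathbf{Z},\mathbf{W}}$ are independent and identically distributed: the map $(\mathbf{\Phi},\mathbf{Z},\mathbf{U})\mapsto\sigma_{\mathbf{\Phi},\mathbf{Z},\mathbf{U}}$ is deterministic and $\mathbf{V},\mathbf{W}$ are independent internal vectors. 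Writing $q_i=q_i(\mathbf{\Phi},\mathbf{Z})=\proba{\sigma_{\mathbf{\Phi},\mathbf{Z},\mathbf{U}}(x_i)=1 \mid \mathbf{\Phi},\mathbf{Z}}$ for the marginal probability that $x_i$ receives value $1$ in one run, independence gives $\expected{d(\sigma_0,\sigma_n)\mid\mathbf{\Phi},\mathbf{Z}}=\sum_{i=1}^n 2q_i(1-q_i)$. So it suffices to bound this sum below by $\tfrac12\expected{N_{\mathrm{free}}\mid\mathbf{\Phi},\mathbf{Z}}$, where $N_{\mathrm{free}}$ is the number of free steps in a single run of \texttt{DEC$_\tau$} on $(\mathbf{\Phi},\mathbf{Z})$, and then invoke $\delta$-freeness.

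The core step is a per-coordinate inequality. Fix $i$ and let $t$ be the (deterministic given $\mathbf{Z}$) iteration at which $x_i$ is selected. Unwinding the algorithm, the value $p_t=\tau(B_{\mathbf{\Phi}_t}(x_i,R))\in[0,1]$ is a function of $(\mathbf{\Phi},\mathbf{Z},\mathbf{U}_1,\dots,\mathbf{U}_{t-1})$ only, since the intermediate instance $\mathbf{\Phi}_t$ is determined by the earlier selections and assigned values, while $\sigma_{\mathbf{\Phi},\mathbf{Z},\mathbf{U}}(x_i)=\mathbbm{1}(\mathbf{U}_t<p_t)$ with $\mathbf{U}_t$ an independent uniform; hence $q_i=\expected{p_t\mid\mathbf{\Phi},\mathbf{Z}}$. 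Writing $a_i=\proba{p_t=1/2\mid\mathbf{\Phi},\mathbf{Z}}$ for the probability that this step is free, one decomposes $q_i=\tfrac12 a_i+\expected{p_t\,\mathbbm{1}(p_t\ne 1/2)\mid\mathbf{\Phi},\mathbf{Z}}$ and checks that both $q_i$ and $1-q_i$ therefore lie in $[a_i/2,\,1-a_i/2]$; since $x\mapsto x(1-x)$ is concave and this interval is symmetric about $1/2$, its minimum is at an endpoint, giving $2q_i(1-q_i)\ge 2\cdot\tfrac{a_i}{2}\bigl(1-\tfrac{a_i}{2}\bigr)\ge\tfrac{a_i}{2}$. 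Summing over $i$ and using $\sum_i a_i=\expected{N_{\mathrm{free}}\mid\mathbf{\Phi},\mathbf{Z}}$ gives $\expected{d(\sigma_0,\sigma_n)\mid\mathbf{\Phi},\mathbf{Z}}\ge\tfrac12\expected{N_{\mathrm{free}}\mid\mathbf{\Phi},\mathbf{Z}}$; taking expectations and invoking $\delta$-freeness (so $\proba{N_{\mathrm{free}}\ge\delta n}\to 1$ and, since $N_{\mathrm{free}}\ge 0$, $\expected{N_{\mathrm{free}}}\ge\delta n\,\proba{N_{\mathrm{free}}\ge\delta n}=\delta n+o(n)$) yields $\expected{d(\sigma_0,\sigma_n)}\ge(\delta/2)n+o(n)$.

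I expect the main difficulty to be conceptual rather than computational: one should not try to argue that $\sigma_0$ and $\sigma_n$ disagree at a coordinate because ``both runs make a coin-flip move there,'' since re-randomizing the internal vector alters the right-hand sides of the intermediate instances $\mathbf{\Phi}_t$ and hence the outputs of $\tau$, so the free steps of the two runs need not occur at the same iterations. The conditional-i.i.d. reduction above avoids this entirely, as it refers only to the marginal law of a single run. A further small point to bear in mind is that the paper reserves the term ``forced step'' for $p_t\in\{0,1\}$, whereas in general $\tau$ may return values in $(0,1)\setminus\{1/2\}$; the bound $2q_i(1-q_i)\ge a_i/2$ deliberately uses only $p_t\in[0,1]$ and the freeness probability $a_i$, so it is insensitive to this.
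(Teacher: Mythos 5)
Your proof is correct, and it is genuinely different from the paper's. The paper lower-bounds $d(\sigma_0,\sigma_n)$ by the number of coordinates at which the two runs both take a free step \emph{and} their coin flips $V_i,W_i$ land on opposite sides of $1/2$, and then factors the expectation after asserting that the set $D$ of free steps ``only depends on $\mathbf{\Phi}$ and $\mathbf{Z}$,'' which yields $(1/2)\expected{|D|}$. Your route instead exploits that, conditioned on $(\mathbf{\Phi},\mathbf{Z})$, the endpoints $\sigma_0$ and $\sigma_n$ are i.i.d., so $\expected{d(\sigma_0,\sigma_n)\mid\mathbf{\Phi},\mathbf{Z}}=\sum_i 2q_i(1-q_i)$; you then bound each summand below by $a_i/2$ using the interval containment $q_i\in[a_i/2,\,1-a_i/2]$ and concavity of $x(1-x)$, and sum to get $\tfrac12\expected{N_{\mathrm{free}}\mid\mathbf{\Phi},\mathbf{Z}}$. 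What your version buys, as you yourself point out, is that you never need the free-step set to be determined by $(\mathbf{\Phi},\mathbf{Z})$ alone: for a generic local rule $\tau$ that reads the right-hand-side values in $B_{\mathbf{\Phi}_t}(x^*,R)$, those values are shifted by the earlier assignments and hence by $\mathbf{U}_1,\dots,\mathbf{U}_{t-1}$, so whether iteration $t$ is free can itself depend on the internal vector, and then the paper's factoring step would not go through as written (one would instead get a bound involving $\sum_t a_t^2$ rather than $\sum_t a_t$, which can be much smaller). For the two concrete local rules the paper actually applies the lemma to (\texttt{UC} and exact tree marginals for XORSAT), freeness is a purely graph-structural property independent of the right-hand sides, so the paper's proof is sound there; your argument proves the lemma as stated, for arbitrary $\tau$, with no such side condition.
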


Next, we will show that, if the $\tau$-decimation algorithm is "free enough", namely, strictly $2\mu(k,r)$-free, then we can pick a pair of output assignments and project them to the core instance $\mathbf{\Phi}_c$ so that the distance between the two corresponding core assignments falls in the \emph{forbidden range} from the overlap gap property of the core instance $\mathbf{\Phi}_c$.

\begin{lemma}
    \label{lemma:correct-distance}
    For any $k \ge 3$ and $r \in (r_{core}(k),r_{sat}(k))$, if the $\tau$-decimation algorithm \textnormal{\texttt{DEC\textsubscript{$\tau$}}} is strictly $2\mu(k,r)$-free on the random $k$-XORSAT instance $\mathbf{\Phi} \sim \mathbf{\Phi}_k(n,rn)$, then
    there exist $0 \le i_0 \le n$ and $0 < \epsilon' < \epsilon(k,r)$ such that w.h.p. we have
        $\left| d(\pi(\sigma_0), \pi(\sigma_{i_0})) - \frac{1}{2}\epsilon'n \right| < \frac{1}{4}\epsilon'n$,
    where $\epsilon(k,r)$ is given in Lemma \ref{lemma:ogp_of_core}.
\end{lemma}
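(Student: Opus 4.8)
The plan is to run the overlap--gap interpolation argument on the \emph{projected} sequence $\{\sigma'_i=\pi(\sigma_i)\}_{i=0}^n$ living on the core instance $\mathbf{\Phi}_c$: Lemma~\ref{lemma:insensitive} bounds its consecutive increments, a w.h.p.\ strengthening of Lemma~\ref{lemma:delta-far} together with Theorem~\ref{thm:core_emerges} forces its two endpoints to be linearly far apart, and a discrete intermediate--value step extracts the index $i_0$ and the constant $\epsilon'$.

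First I would note that $\pi$ is $1$-Lipschitz for the Hamming metric, so Lemma~\ref{lemma:insensitive} gives that w.h.p.\ $d(\sigma'_{i-1},\sigma'_i)\le d(\sigma_{i-1},\sigma_i)<n^{1/6}$ for all $i\in[n]$. Next I would show that strict $2\mu(k,r)$-freeness forces $d(\sigma'_0,\sigma'_n)$ to be $\Omega(n)$ w.h.p. Fix $\delta'>2\mu(k,r)$ for which the $\sigma_0$-run has at least $\delta' n$ free steps w.h.p. Recall from Section~\ref{sec:sequence-of-output-assignments} that $\sigma_0$ and $\sigma_n$ are run from the \emph{independent} internal vectors $\mathbf V$ and $\mathbf W$. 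Whether a given iteration of the $\sigma_0$-run is free is determined by $\mathbf{\Phi}$, $\mathbf Z$ and the values assigned in strictly earlier iterations --- hence by $\mathbf{\Phi}$, $\mathbf Z$ and the earlier coordinates of $\mathbf V$ --- so at such an iteration $t$ the value $\sigma_0$ assigns to the selected variable is the fresh fair coin $\mathbbm{1}(\mathbf V_t<1/2)$, independent of the entire $\sigma_n$-run; hence, conditioning on $\mathbf V_{<t}$ and on $\mathbf W$, the two runs disagree on that variable with probability exactly $1/2$. A bounded--difference martingale (Azuma) bound along the $n$ iterations then yields that w.h.p.\ $d(\sigma_0,\sigma_n)$ is at least half the number of free steps minus $o(n)$, i.e.\ $d(\sigma_0,\sigma_n)\ge(\delta'/2)n-o(n)$ --- the w.h.p.\ refinement of Lemma~\ref{lemma:delta-far}. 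By Theorem~\ref{thm:core_emerges}, w.h.p.\ at most $(1-V(k,r))n+o(n)=\mu(k,r)n+o(n)$ variables lie outside the core, so w.h.p.\ $d(\sigma'_0,\sigma'_n)\ge d(\sigma_0,\sigma_n)-\mu(k,r)n-o(n)\ge c_0 n-o(n)$ with $c_0:=\delta'/2-\mu(k,r)>0$.

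Now put $\epsilon':=\min\{c_0,\epsilon(k,r)/2\}\in(0,\epsilon(k,r))$. On the intersection of the w.h.p.\ events above, the sequence $j\mapsto d(\sigma'_0,\sigma'_j)$ starts at $0$, ends above $c_0 n-o(n)\ge\tfrac12\epsilon' n$, and changes by less than $n^{1/6}$ at each step; letting $i_0$ be the first index at which it reaches $\tfrac12\epsilon' n$, we get $\tfrac14\epsilon' n<\tfrac12\epsilon' n\le d(\sigma'_0,\sigma'_{i_0})<\tfrac12\epsilon' n+n^{1/6}<\tfrac34\epsilon' n$ once $n$ is large enough that $n^{1/6}<\tfrac14\epsilon' n$, i.e.\ $\bigl|d(\pi(\sigma_0),\pi(\sigma_{i_0}))-\tfrac12\epsilon' n\bigr|<\tfrac14\epsilon' n$. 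To present $i_0$ as a single index rather than a random first--passage index, I would instead fix $i_0$ as the first $j$ at which the expectation profile $f(j):=\expected{d(\sigma'_0,\sigma'_j)}$ crosses $\tfrac12\epsilon' n$: indeed $f(0)=0$, $f(n)\ge c_0 n-o(n)$, and $|f(j+1)-f(j)|\le\expected{d(\sigma_j,\sigma_{j+1})}=o(n)$ uniformly in $j$ by Lemmas~\ref{lemma:influence-range} and~\ref{lemma:influence-range-size}, so such an $i_0$ exists with $f(i_0)=\tfrac12\epsilon' n+o(n)$, and it then suffices to know that $d(\sigma'_0,\sigma'_{i_0})$ is within $o(n)$ of $f(i_0)$ w.h.p.

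The step I expect to be the crux is this last concentration of $d(\sigma'_0,\sigma'_{i_0})$ around its mean. Viewed as a function of the independent coordinates of $\mathbf Z,\mathbf V,\mathbf W$ (with $\mathbf{\Phi}$ fixed to a typical instance, which is legitimate since the conclusion is a w.h.p.\ statement), Lemmas~\ref{lemma:influence-range}--\ref{lemma:influence-range-size} bound the change from resampling one coordinate by the sublinear influence--range size, so a McDiarmid or Efron--Stein estimate restricted to the event $\{\max_i|\mathcal{IR}_{x_i}|\le n^{1/6}\}$ yields fluctuations $o(n)$; if one instead wants $i_0$ independent of the instance, one must additionally control the effect of resampling a single clause, which a priori can move the $2$-core --- and hence the projected output --- by a linear amount, and therefore needs a stability estimate for the $2$-core under single--clause perturbations. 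Modulo that concentration input, the three ingredients (small increments, linearly separated endpoints in the core, discrete intermediate value) close the proof.
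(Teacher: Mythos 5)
Your proposal, after its mid-course correction, is essentially the paper's proof: Lemma~\ref{lemma:delta-far} together with Theorem~\ref{thm:core_emerges} gives $\expected{d(\pi(\sigma_0),\pi(\sigma_n))} \ge (\delta/2 - \mu(k,r))n + o(n)$, Lemma~\ref{lemma:insensitive} (with the $1$-Lipschitzness of $\pi$ you note) bounds the increments of the expectation profile $j \mapsto \expected{d(\pi(\sigma_0),\pi(\sigma_j))}$ by $n^{1/6}+o(1)$, a discrete intermediate-value step picks a deterministic $i_0$ and $\epsilon'$, and McDiarmid applied to $f(\mathbf V \cdot \mathbf W) = d(\pi(\sigma_{\mathbf\Phi,\mathbf Z,\mathbf V}),\pi(\sigma_{\mathbf\Phi,\mathbf Z,\mathbf W \oplus_{i_0} \mathbf V}))$ restricted to the small-influence event delivers concentration. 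Your worry about single-clause perturbations of the $2$-core is precisely what the paper sidesteps by conditioning on $(\mathbf\Phi,\mathbf Z)$ before invoking McDiarmid, so only the $2n$ coordinates of $\mathbf V \cdot \mathbf W$ are resampled; this does leave the task of arguing that the conditional mean stays near $\tfrac12\epsilon' n$ for typical $(\mathbf\Phi,\mathbf Z)$, a step the paper treats tersely (equating $\expected{f}$ with the unconditional mean) and that you are right to flag as the subtle part.
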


The following lemma shows that the probability of both the output assignments $\sigma_{\mathbf{\Phi},\mathbf{Z},\mathbf{U^{0}}}$ and $\sigma_{\mathbf{\Phi},\mathbf{Z},\mathbf{U^{i_0}}}$ being solutions for the instance $\mathbf{\Phi}$ is lower bounded by $\alpha_n^2$.

\begin{lemma}
    \label{lemma:both-satisfying}
    For any $i \in [n]$, we have $\proba{\sigma_{0} \in \mathcal{S}(\Phi) \textnormal{\;and\;} \sigma_{i} \in \mathcal{S}(\Phi)} \ge \alpha_n^2$.
\end{lemma}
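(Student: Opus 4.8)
The plan is to exploit the fact that $\sigma_0$ and $\sigma_i$ are produced by two runs of \texttt{DEC\textsubscript{$\tau$}} that share a large block of their randomness, and that conditionally on this shared part the two ``success'' events become independent and equidistributed; a second-moment inequality then finishes the argument. Concretely, recall from Section~\ref{sec:sequence-of-output-assignments} that $\mathbf{U^0}=\mathbf{V}$ and $\mathbf{U^i}=(\mathbf{W}_1,\dots,\mathbf{W}_i,\mathbf{V}_{i+1},\dots,\mathbf{V}_n)$, where $\mathbf{V},\mathbf{W}$ are independent, uniform on $[0,1]^n$, and independent of $(\mathbf{\Phi},\mathbf{Z})$. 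Hence $\sigma_0=\sigma_{\mathbf{\Phi},\mathbf{Z},\mathbf{U^0}}$ is a deterministic function of $(\mathbf{\Phi},\mathbf{Z},\mathbf{V}_1,\dots,\mathbf{V}_n)$, while $\sigma_i=\sigma_{\mathbf{\Phi},\mathbf{Z},\mathbf{U^i}}$ is a deterministic function of $(\mathbf{\Phi},\mathbf{Z},\mathbf{W}_1,\dots,\mathbf{W}_i,\mathbf{V}_{i+1},\dots,\mathbf{V}_n)$. I would let $\mathcal{G}$ be the $\sigma$-algebra generated by the common data $(\mathbf{\Phi},\mathbf{Z},\mathbf{V}_{i+1},\dots,\mathbf{V}_n)$ and set $p \equiv \proba{\sigma_0\in\mathcal{S}(\mathbf{\Phi}) \mid \mathcal{G}}$.

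First I would check that $\expected{p}=\alpha_n$: since $(\mathbf{V}_1,\dots,\mathbf{V}_n)$ is uniform on $[0,1]^n$ and independent of $(\mathbf{\Phi},\mathbf{Z})$, the triple $(\mathbf{\Phi},\mathbf{Z},\mathbf{U^0})$ has exactly the law of $(\mathbf{\Phi},\mathbf{Z},\mathbf{U})$ from the definition of $\alpha_n$, so $\proba{\sigma_0\in\mathcal{S}(\mathbf{\Phi})}=\alpha_n$. Next, because $(\mathbf{W}_1,\dots,\mathbf{W}_i)$ has the same law as $(\mathbf{V}_1,\dots,\mathbf{V}_i)$ and both are independent of $\mathcal{G}$, the conditional law of $\sigma_i$ given $\mathcal{G}$ equals that of $\sigma_0$ given $\mathcal{G}$, so also $\proba{\sigma_i\in\mathcal{S}(\mathbf{\Phi})\mid\mathcal{G}}=p$. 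Finally — the one step to state with care — conditionally on $\mathcal{G}$ the assignment $\sigma_0$ depends only on the fresh block $(\mathbf{V}_1,\dots,\mathbf{V}_i)$ and $\sigma_i$ only on $(\mathbf{W}_1,\dots,\mathbf{W}_i)$, and these two blocks are independent of each other (and of $\mathcal{G}$); hence $\{\sigma_0\in\mathcal{S}(\mathbf{\Phi})\}$ and $\{\sigma_i\in\mathcal{S}(\mathbf{\Phi})\}$ are conditionally independent given $\mathcal{G}$, so $\proba{\sigma_0\in\mathcal{S}(\mathbf{\Phi}),\,\sigma_i\in\mathcal{S}(\mathbf{\Phi})\mid\mathcal{G}}=p^2$. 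Taking expectations and using Jensen's inequality $\expected{p^2}\ge(\expected{p})^2$ yields
\[
  \proba{\sigma_0\in\mathcal{S}(\mathbf{\Phi}),\,\sigma_i\in\mathcal{S}(\mathbf{\Phi})}
  = \expected{p^2} \;\ge\; (\expected{p})^2 \;=\; \alpha_n^2 .
\]

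There is no analytic difficulty here: the entire content is bookkeeping about which coordinates of $\mathbf{V}$ and $\mathbf{W}$ feed into $\sigma_0$ versus $\sigma_i$. The only place requiring a little attention is the conditional-independence claim, for which the cleanest phrasing is that, conditionally on $\mathcal{G}$, the pair $(\sigma_0,\sigma_i)$ is obtained by applying one fixed deterministic map to two independent uniform vectors on $[0,1]^i$. So the ``main obstacle'' is purely expository — choosing the right conditioning $\sigma$-algebra $\mathcal{G}$ — after which the inequality is immediate.
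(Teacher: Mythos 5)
Your proposal is correct and takes essentially the same approach as the paper: both condition on $(\mathbf{\Phi},\mathbf{Z},\mathbf{V}_{i+1},\dots,\mathbf{V}_n)$, use the conditional independence and equidistribution of $\sigma_0$ and $\sigma_i$ (depending only on the disjoint fresh blocks $\mathbf{V}_{1:i}$ and $\mathbf{W}_{1:i}$) to write the joint conditional probability as $p^2$, and then apply Jensen's inequality. The only difference is expository: you phrase it in $\sigma$-algebra language with an intermediate random variable $p$, while the paper writes out the iterated expectations explicitly.
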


Finally, we can combine all above lemmas in this section to give the proof of Theorem \ref{thm:main-theorem}.

\begin{proof}[Proof of Theorem \ref{thm:main-theorem}]
    We denote by $\mathcal{A}$ the event of 
        $\left| d(\pi(\sigma_{0}), \pi(\sigma_{i_0})) - \frac{1}{2}\epsilon'n \right| < \frac{1}{4}\epsilon'n$,
    and we have $\proba{ \mathcal{A} }$ $= 1 - o(1)$ by Lemma \ref{lemma:correct-distance}.
    On the other hand, we pick $i=i_0$ for the inequality in Lemma \ref{lemma:both-satisfying}.
    We denote by $\mathcal{B}$ the event of
        $\sigma_{0} \in \mathcal{S}(\Phi) \textnormal{\;and\;} \sigma_{i_0} \in \mathcal{S}(\Phi)$,
    and $\proba{ \mathcal{B} } \ge \alpha_n^2$.
    Note that we have
    $\proba{\mathcal{A} \cap \mathcal{B}}
        \ge 1 - \proba{\text{Not\;}\mathcal{A}} - \proba{\text{Not\;}\mathcal{B}} 
        \ge 1 - o(1)  - ( 1 - \alpha_n^2 )
        = \alpha_n^2 - o(1)$.
    Thus, we have $\alpha_n \le \proba{ \mathcal{A} \cap \mathcal{B} }^{1/2} + o(1)$.

    Now assume both $\mathcal{A}$ and $\mathcal{B}$ take places.
    Since both $\sigma_{0}$ and $\sigma_{i_0}$ are solutions for the random instance $\mathbf{\Phi}$, both $\pi(\sigma_{0})$ and $\pi(\sigma_{i_0})$ are solutions for the core instance $\mathbf{\Phi}_c$.
    Moreover, the distance $d(\pi(\sigma_{0}), \pi(\sigma_{i_0}))$ falls in the interval $((1/4)\epsilon' n,(3/4)\epsilon' n) \subsetneq (o(n), \epsilon n)$, which takes place with probability at most $o(1)$ by Lemma \ref{lemma:ogp_of_core}.
    So, we have $\proba{\mathcal{A} \cap \mathcal{B}} \le o(1)$, and thus $\alpha_n \le o(1)$.
\end{proof}

To prove Theorem \ref{thm:main-theorem-unit-clause} and \ref{thm:main-theorem-exact-marginal}, all we need to do is to show that \texttt{DEC\textsubscript{UC}} and \texttt{DEC$_\tau$} with the exact marginal assumption are strictly $2\mu(k,r)$-free.
The results immediately follow by applying Theorem \ref{thm:main-theorem}.
From Lemma \ref{lemma:uc-free} and \ref{lemma:bp-free}, we know that \texttt{DEC\textsubscript{UC}} and \texttt{DEC$_\tau$} are $w_1(k,r)$-free and $w_e(k,r)$-free, respectively.
So, we only need to show that $w_1(k,r) > 2\mu(k,r)$ and $w_e(k,r) > 2\mu(k,r)$.
It can be done with the following lemmas, which give an upper bound of $\mu(k,r)$ in Lemma \ref{lemma:upper_bound_of_diameter}, a lower bound of $w_1(k,r)$ in Lemma \ref{lemma:lower-bound-w-1}, and a lower bound of and $w_e(k,r)$ in Lemma \ref{lemma:lower-bound-w-b}.
\ifappendix
The proofs of these three lemmas are given in Appendices \ref{sec:upper-bound-of-diameter}, \ref{sec:proof-of-lower-bound-w-1} and \ref{sec:proof-of-lower-bound-w-b}, respectively.
\fi

\begin{lemma}
    \label{lemma:upper_bound_of_diameter}
    For any $k \ge 4$ and $r \in (r_{core}(k),r_{sat}(k))$, we have $\mu(k,r) < \mu_u(k)$, where
    \begin{align*}
        \mu_u(k) = (1-e^{-1/k})-(1-e^{-1/k}) \ln (1-e^{-1/k}).
    \end{align*}
\end{lemma}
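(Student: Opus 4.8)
The plan is to reparametrize everything through $\lambda := krQ^{k-1}$, where $Q=Q^{*}(r)$ is the largest root of $Q = 1-e^{-krQ^{k-1}}$. Then $\mu(k,r) = e^{-\lambda}+\lambda e^{-\lambda} = g(\lambda)$ with $g(\lambda):=e^{-\lambda}(1+\lambda)$; since $g'(\lambda) = -\lambda e^{-\lambda}<0$, $g$ is strictly decreasing on $(0,\infty)$, and because $Q^{*}(r)$ is a fixed point we also have $Q^{*}(r)=1-e^{-\lambda}$, i.e.\ $\lambda = -\ln(1-Q^{*}(r))$. A one-line check shows the target is also a value of $g$: putting $\beta_k := -\ln(1-e^{-1/k})$ gives $e^{-\beta_k}=1-e^{-1/k}$ and $1+\beta_k = 1-\ln(1-e^{-1/k})$, so $g(\beta_k) = (1-e^{-1/k})-(1-e^{-1/k})\ln(1-e^{-1/k}) = \mu_u(k)$. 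Hence, by monotonicity of $g$, the claim $\mu(k,r)<\mu_u(k)$ is equivalent to $\lambda(r)>\beta_k$, and it suffices to find a lower bound on $\lambda(r)$, valid throughout $(r_{core}(k),r_{sat}(k))$, that exceeds $\beta_k$.

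First I would bound $\lambda(r)$ from below by its value at the critical density. Write $\Delta_r(Q):=1-e^{-krQ^{k-1}}-Q$. By the definition of $r_{core}(k)$ (Theorem~\ref{thm:core_emerges}) the curve $\Delta_{r_{core}}$ lies weakly below $0$ on $(0,1)$ and touches it, so there is a tangency point $Q_c\in(0,1)$ with $\Delta_{r_{core}}(Q_c)=\Delta_{r_{core}}'(Q_c)=0$. Since $\partial_r\Delta_r(Q) = kQ^{k-1}e^{-krQ^{k-1}}>0$ on $(0,1)$, for every $r>r_{core}(k)$ we get $\Delta_r(Q_c)>0$, whereas $\Delta_r(1) = -e^{-kr}<0$; so $\Delta_r$ has a root in $(Q_c,1)$ and its largest root obeys $Q^{*}(r)>Q_c$, hence $\lambda(r)>-\ln(1-Q_c)=:\lambda_c$. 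Feeding $\Delta_{r_{core}}(Q_c)=0$ (which gives $kr_{core}Q_c^{k-1}=\lambda_c$ and $e^{-\lambda_c}=1-Q_c$) into $\Delta_{r_{core}}'(Q_c)=0$ yields, after simplification, that $\lambda_c$ is the unique positive root of $f(x):=e^{x}-1-(k-1)x$. Since $f$ is convex with $f(0)=0$ and $f'(0)=2-k<0$, it is strictly negative on $(0,\lambda_c)$ and positive on $(\lambda_c,\infty)$, so the needed inequality $\lambda_c\ge\beta_k$ is equivalent to $f(\beta_k)\le 0$; substituting $e^{\beta_k}=(1-e^{-1/k})^{-1}$ and multiplying through by $1-e^{-1/k}>0$, this becomes exactly
\begin{equation*}
    e^{-1/k}\;\le\;(k-1)\,(1-e^{-1/k})\,\ln\frac{1}{1-e^{-1/k}}.
\end{equation*}

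It remains to verify this inequality for all $k\ge 4$, which is elementary. From $e^{1/k}\ge 1+\tfrac1k$ one gets $1-e^{-1/k}\ge \tfrac1k e^{-1/k}$, and from $e^{-1/k}\ge 1-\tfrac1k$ one gets $1-e^{-1/k}\le\tfrac1k$, hence $\ln\frac{1}{1-e^{-1/k}}\ge\ln k$. The right-hand side is therefore at least $(k-1)\cdot\tfrac1k e^{-1/k}\cdot\ln k=\bigl(1-\tfrac1k\bigr)(\ln k)\,e^{-1/k}$, so after cancelling $e^{-1/k}$ it suffices that $\bigl(1-\tfrac1k\bigr)\ln k\ge 1$; and for $k\ge 4$ we have $1-\tfrac1k\ge\tfrac34$ and $\ln k\ge\ln 4=2\ln 2>\tfrac43$ (since $e^{2}<8$), which gives it. Chaining the pieces, $\mu(k,r) = g(\lambda(r)) < g(\lambda_c)\le g(\beta_k) = \mu_u(k)$.

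I expect the main obstacle to be the tangency step rather than any estimate: one has to turn the \emph{supremum} definition of $r_{core}(k)$ into the two equations $\Delta_{r_{core}}(Q_c)=\Delta'_{r_{core}}(Q_c)=0$ and then into the root equation $e^{\lambda_c}-1=(k-1)\lambda_c$, and one must check that $\Delta_r(Q_c)>0$ for $r>r_{core}(k)$ so that the \emph{largest} fixed point of $\Delta_r$ really lies above $Q_c$. The closing inequality is comfortable (the slack in $\mu$ at $k=4$ is about $0.13$), but it does matter to keep the factor $e^{-1/k}$ and cancel it rather than bound it by $1$: the crude estimates give the right-hand side only $\ge(1-\tfrac1k)(\ln k)e^{-1/k}$, which at $k=4$ is itself below $1$.
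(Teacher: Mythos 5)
Your proof is correct, and it takes a genuinely different (and cleaner) route than the paper's, even though both share the same reduction: rewrite $\mu(k,r)$ via the fixed-point relation as $f(Q_{k,r})$ with $f(x)=(1-x)-(1-x)\ln(1-x)$ strictly decreasing on $(0,1)$ (your $g(\lambda)=e^{-\lambda}(1+\lambda)$ is exactly $f$ under the change of variables $\lambda=-\ln(1-Q)$), so that the whole lemma reduces to showing $Q_{k,r}>e^{-1/k}$. Where the two proofs diverge is in how they establish this lower bound. The paper (Appendix~B) proceeds by first proving $G(k,r_{core},e^{-1/k})\le 0$ via contradiction, then introducing an auxiliary density $r_1(k)$ at which $G(k,r_1,e^{-1/k})=0$, and finally splitting into the ranges $r>r_1$ (handled by the intermediate value theorem) and $r\in[r_{core},r_1]$ (handled by a further case split on the location of the inflection point of $Q\mapsto G(k,r_1,Q)$, invoking Facts~\ref{fact:calculus1}--\ref{fact:calculus2} about convexity). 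You instead extract a tangency point $Q_c\in(0,1)$ at $r=r_{core}$ directly from the $\sup$-definition — this step is sound: $\Delta_{r_{core}}\le 0$ on $(0,1)$ by continuity in $r$, and $\Delta_{r_{core}}(Q_c)=0$ for some interior $Q_c$ (else $\Delta_r<0$ would persist for $r$ slightly above $r_{core}$, contradicting the supremum), whence $\Delta'_{r_{core}}(Q_c)=0$ since $Q_c$ is an interior maximum. The two tangency equations then collapse to the clean characterization $e^{\lambda_c}-1=(k-1)\lambda_c$, and monotonicity in $r$ gives $\lambda(r)>\lambda_c$ for $r>r_{core}$. Remarkably, the closing inequality $\lambda_c\ge\beta_k$, i.e.\ $e^{-1/k}\le(k-1)(1-e^{-1/k})\ln\frac{1}{1-e^{-1/k}}$, is algebraically the \emph{same} inequality the paper needs in Case~1 of Lemma~\ref{lemma:lower_bound_q_kr_case_1} (there written as $(1-e^{-1/k})[-\ln(1-e^{-1/k})](k-1)e^{1/k}>1$, "for $k\ge4$", without justification). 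Your route buys two things: it bypasses the paper's second-derivative case analysis entirely and reaches that inequality by one uniform argument, and it supplies an explicit elementary verification of the inequality for $k\ge 4$ — a detail the paper leaves unproved. Your argument also makes transparent why the lemma requires $k\ge 4$: the inequality $\lambda_c\ge\beta_k$ is tight and in fact fails at $k=3$.
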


\begin{lemma}
    \label{lemma:lower-bound-w-1}
    For any $k \ge k_0 \ge 3$ and $r \in (r_{core}(k),r_{sat}(k))$, $w_1(k,r) \ge w_1^*(k_0)$, where
    \begin{align*}
        w_1^*(k) = \frac{k^{\frac{1}{1-k}}}{k-1} \gamma \left( \frac{1}{k-1}, k \left(\frac{k}{k+1}\right)^{k-1} \right)
    \end{align*}
\end{lemma}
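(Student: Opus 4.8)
The plan is to prove the stronger bound $w_1(k,r)>w_1^*(k)$ for every admissible $r$, and then to check that $w_1^*$ is non-decreasing in $k$, so $w_1^*(k)\ge w_1^*(k_0)$ for $k\ge k_0$. Throughout set $s=\tfrac1{k-1}\in(0,\tfrac12]$ and $y=kr$, so that (recalling Lemma \ref{lemma:uc-free}) $w_1(k,r)=\tfrac1{k-1}\,y^{-s}\gamma(s,y)$. The first step I would carry out is to show $r\mapsto w_1(k,r)$ is strictly decreasing: substituting $t=yu$ in $\gamma(s,y)=\int_0^y t^{s-1}e^{-t}\,dt$ gives $w_1(k,r)=\tfrac1{k-1}\int_0^1 u^{s-1}e^{-kru}\,du$, and the integrand is pointwise strictly decreasing in $r$. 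Hence, since $r<r_{sat}(k)$, we get $w_1(k,r)>w_1(k,r_{sat}(k))$, and it suffices to bound $w_1(k,r_{sat}(k))$ from below.

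The second step is to insert a two-sided estimate of $r_{sat}(k)$ into $w_1(k,r_{sat}(k))=\tfrac1{k-1}\,(kr_{sat}(k))^{-s}\gamma\!\big(s,kr_{sat}(k)\big)$. For the power factor I use $r_{sat}(k)\le 1$, which is immediate from the first-moment bound $\mathbb E[\,\#\text{solutions}\,]=2^{(1-r)n}$: because $-s<0$ and $kr_{sat}(k)\le k$, we have $(kr_{sat}(k))^{-s}\ge k^{-s}$. For the incomplete-gamma factor I use $r_{sat}(k)>\big(\tfrac{k}{k+1}\big)^{k-1}$; this holds because $r_{sat}(k)$ is non-decreasing in $k$ (the XORSAT satisfiability threshold being known to increase to $1$, \cite{pittelSatisfiabilityThresholdXORSAT2016}) while $\big(\tfrac{k}{k+1}\big)^{k-1}$ is decreasing in $k$ — an elementary computation shows $\tfrac{d}{dk}\big[(k-1)\ln\tfrac{k}{k+1}\big]<0$ — so $r_{sat}(k)\ge r_{sat}(3)\approx 0.918>\tfrac9{16}=\big(\tfrac34\big)^2\ge\big(\tfrac{k}{k+1}\big)^{k-1}$. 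Monotonicity of $\gamma(s,\cdot)$ then gives $\gamma(s,kr_{sat}(k))>\gamma\!\big(s,k(\tfrac{k}{k+1})^{k-1}\big)$, and multiplying the two lower bounds yields exactly $w_1(k,r_{sat}(k))>\tfrac{k^{-s}}{k-1}\,\gamma\!\big(s,k(\tfrac{k}{k+1})^{k-1}\big)=w_1^*(k)$.

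The third step is to show $w_1^*(k)$ is non-decreasing along integers $k\ge 3$. Using $\tfrac1{k-1}\gamma(s,x)=s\gamma(s,x)=\Gamma(s+1)-s\,\Gamma(s,x)$ and writing $x_k:=k\big(\tfrac{k}{k+1}\big)^{k-1}>k/e$, we get $w_1^*(k)=k^{-s}\big(\Gamma(s+1)-s\,\Gamma(s,x_k)\big)$. As $k$ increases, $s=\tfrac1{k-1}\downarrow 0$; one checks that $k^{-s}=e^{-s\ln k}$ is increasing in $k\ge 3$ (the derivative of $-\tfrac{\ln k}{k-1}$ has the sign of $\ln k-1+\tfrac1k>0$), that $\Gamma(s+1)=\Gamma(1+\tfrac1{k-1})$ increases along integer $k\ge 3$ (for $k\ge4$ the argument $1+\tfrac1{k-1}$ lies below the minimiser of $\Gamma$, where $\Gamma$ is decreasing, and the single borderline step $k=3\to4$ is the comparison $\Gamma(4/3)>\Gamma(3/2)$), and that the correction obeys $s\,\Gamma(s,x_k)\le s\,e^{-x_k}\le\tfrac{e^{-3/e}}{k-1}\to 0$ and is decreasing in $k$. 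Hence $w_1^*(k)\ge w_1^*(k_0)$, and combined with the first two steps this proves the lemma.

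The hard part will be, I expect, the second step — specifically the realization that one must \emph{not} try to lower-bound the incomplete-gamma factor through $r>r_{core}(k)$: since $k\,r_{core}(k)$ grows only like $\ln k$ while $k\big(\tfrac{k}{k+1}\big)^{k-1}$ grows like $k/e$, that naive route breaks down for large $k$. What rescues the argument is the monotonicity of $w_1(k,\cdot)$ in $r$, which lets us evaluate $w_1$ at the right endpoint $r=r_{sat}(k)$, where the needed lower bound $r_{sat}(k)>\big(\tfrac{k}{k+1}\big)^{k-1}$ is genuinely available. A secondary point is the uniform lower bound on $r_{sat}(k)$; I would take it from the known monotonicity of the threshold in $k$, but should one wish to avoid citing that, it can be replaced by a direct verification that the $2$-core instance is under-determined at clause density $\tfrac9{16}$.
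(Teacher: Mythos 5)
Your proof is correct and follows the same three-part skeleton as the paper: monotonicity of $w_1(k,\cdot)$ in $r$, a pointwise bound giving $w_1 \ge w_1^*$, and monotonicity of $w_1^*$ in $k$. However, the middle and final steps differ in substance from the paper's, and in both places the paper's route is considerably lighter.

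For the middle step, you go out of your way to evaluate at $r = r_{sat}(k)$, which requires you to supply two-sided estimates of the satisfiability threshold — including the non-trivial inequality $r_{sat}(k) > \left(\frac{k}{k+1}\right)^{k-1}$, which you justify via the monotonicity of $r_{sat}(k)$ in $k$ plus a numerical value of $r_{sat}(3)$. The paper's Lemma~\ref{lemma:w1_and_w1_star} avoids all of this: since $r < r_{sat}(k) < 1$, monotonicity in $r$ already gives $w_1(k,r) \ge w_1(k,1) = \frac{k^{1/(1-k)}}{k-1}\gamma\!\left(\frac{1}{k-1},k\right)$, and then $\gamma\!\left(\frac{1}{k-1},k\right) > \gamma\!\left(\frac{1}{k-1},k\left(\frac{k}{k+1}\right)^{k-1}\right) = (k-1)k^{1/(k-1)}w_1^*(k)$ is immediate from $k > k\left(\frac{k}{k+1}\right)^{k-1}$ and the monotonicity of $\gamma$ in its second argument. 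No information about $r_{sat}(k)$ enters at all; your "hard part" paragraph is steered around a difficulty that the cleaner evaluation point simply does not have.

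For the monotonicity of $w_1^*$ in $k$ you use the decomposition $w_1^*(k) = k^{-s}\left(\Gamma(s+1) - s\,\Gamma(s,x_k)\right)$ and analyse each piece separately, which forces you into special-function territory (the minimiser of $\Gamma$ near $1.46$, the comparison $\Gamma(4/3)>\Gamma(3/2)$ for the borderline step $k=3\to 4$). You also assert without proof that $s\,\Gamma(s,x_k)$ is decreasing; this is true — $s$ decreases, $x_k$ increases, and $t^{s-1}$ decreases pointwise on $t>1$ as $s$ decreases, so each factor of $s\,\Gamma(s,x_k)$ decreases — but it should be said. The paper's Lemma~\ref{lemma:w1_increasing_with_k} instead writes $w_1^*(k)=\int_0^{k/(k+1)}e^{-kt^{k-1}}\,dt$ and observes that for $0\le t\le\frac{k}{k+1}$ one has $kt^{k-1}\ge (k+1)t^k$, hence the integrand is dominated by $e^{-(k+1)t^k}$; enlarging the domain to $\left[0,\frac{k+1}{k+2}\right]$ then gives $w_1^*(k)\le w_1^*(k+1)$ in two lines, with no facts about $\Gamma$ needed. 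Your first step, by contrast, is actually tidier than the paper's: the substitution $t=yu$ turning $w_1$ into $\frac{1}{k-1}\int_0^1 u^{s-1}e^{-kru}\,du$ makes the monotonicity in $r$ transparent, whereas the paper grinds through $\partial w_1/\partial r$ by hand.
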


\begin{lemma}
    \label{lemma:lower-bound-w-b}
    For any $k \ge k_0 \ge 3$ and $r \in (r_{core}(k),r_{sat}(k))$, we have $w_e(k,r) \ge w_e^*(k_0,r_{sat}(k_0))$, where $w_e^*(k,r) = x^-(k,r) - kr^2 (x^-(k,r))^{k}$ and
    \begin{align}
        x^\pm(k,r) = \left( \frac{1 \pm \sqrt{1-4(kr)^{-2}[(kr)^{\frac{1}{k-1}}-1]}}{2} \right)^{\frac{1}{k-2}}.
    \end{align}
\end{lemma}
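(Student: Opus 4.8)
The plan is to prove a pointwise lower bound $S_R(x) \ge 1 - (kr)^2 x^{k-1}$ on the initial interval $[0,x^-(k,r)]$, integrate it to get $w_e(k,r) \ge w_e^*(k,r)$, and then descend to $w_e^*(k_0,r_{sat}(k_0))$ using monotonicity in $r$ and in $k$.

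\textbf{Pointwise bound.} Set $T_l(x) = 1 - S_l(x)$, so $T_0 \equiv 0$ and $T_l = h_x(T_{l-1})$ with $h_x(t) = 1 - \exp(-kr[(1-x)t + x]^{k-1})$; the map $h_x$ is non-decreasing on $[0,\infty)$, and $1 - e^{-z} \le z$ gives $h_x(t) \le \phi_x(t) := kr[(1-x)t + x]^{k-1}$. A direct computation shows that $\phi_x(t_0) \le t_0$ for $t_0 := (kr)^2 x^{k-1}$ precisely when $x^{k-2}(1-x) \le ((kr)^{1/(k-1)} - 1)/(kr)^2$. I would verify this last inequality holds for every $x \in [0,x^-(k,r)]$: since $0 \le x^{k-2} \le x \le 1$ we have $x^{k-2}(1-x) \le x^{k-2}(1 - x^{k-2})$, and since $u \mapsto u(1-u)$ increases on $[0,\frac{1}{2}]$ while $(x^-)^{k-2} \in [0,\frac{1}{2}]$ with $(x^-)^{k-2}(1-(x^-)^{k-2}) = ((kr)^{1/(k-1)} - 1)/(kr)^2$ by the definition of $x^-$, we get $x^{k-2}(1 - x^{k-2}) \le (x^-)^{k-2}(1-(x^-)^{k-2})$ for all $x \le x^-$. (Here $x^-(k,r)$ is real on the whole range $r \in (r_{core}(k),r_{sat}(k))$: there $kr > k\,r_{core}(k) > 1$, and $a^{1/(k-1)} - 1 \le a^{1/2} - 1 \le a^2/4$ for all $a \ge 1$ and $k \ge 3$, so the discriminant $1 - 4(kr)^{-2}((kr)^{1/(k-1)}-1)$ is non-negative.) An induction then finishes the step: $T_0 = 0 \le t_0$, and $T_{l-1} \le t_0 \Rightarrow T_l = h_x(T_{l-1}) \le h_x(t_0) \le \phi_x(t_0) \le t_0$, so $T_R(x) \le (kr)^2 x^{k-1}$, i.e. $S_R(x) \ge 1 - (kr)^2 x^{k-1}$, for every $x \le x^-$ and every even $R \ge 0$.

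\textbf{Integration and descent.} Since $S_R \ge 0$, we may discard $\int_{x^-}^1 S_R$ and obtain
\[
  w_e(k,r) = \int_0^1 S_R(x)\,dx \ \ge\ \int_0^{x^-(k,r)} \bigl(1 - (kr)^2 x^{k-1}\bigr)\,dx \ =\ x^-(k,r) - kr^2\bigl(x^-(k,r)\bigr)^{k} \ =\ w_e^*(k,r),
\]
using $(kr)^2/k = kr^2$. To replace $(k,r)$ by $(k_0,r_{sat}(k_0))$ I would invoke two monotonicity facts. First, $w_e(k,\cdot)$ is non-increasing: by induction each $S_l(x)$ is non-increasing in $r$, since increasing $r$ scales the prefactor $kr$ up and, as $S_{l-1}$ falls, raises the bracket $(1-x)(1-S_{l-1})+x$, making the exponent more negative. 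Combined with the displayed bound and continuity of $w_e^*(k,\cdot)$ this yields $w_e(k,r) \ge \lim_{r'\uparrow r_{sat}(k)} w_e(k,r') \ge \lim_{r'\uparrow r_{sat}(k)} w_e^*(k,r') = w_e^*(k,r_{sat}(k))$. Second, $k \mapsto w_e^*(k,r_{sat}(k))$ is non-decreasing over integers $k \ge 3$, which gives $w_e^*(k,r_{sat}(k)) \ge w_e^*(k_0,r_{sat}(k_0))$ and completes the proof.

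\textbf{Main obstacle.} The hard part is the monotonicity in $k$, because $r_{sat}(k)$ is only implicitly defined (via the full-rank condition on the $2$-core of the random hypergraph), so $w_e^*(k,r_{sat}(k))$ cannot be differentiated in $k$ directly. I would split into two regimes. For small $k$, say $3 \le k \le K$, I would check $w_e^*(k+1,r_{sat}(k+1)) \ge w_e^*(k,r_{sat}(k))$ numerically using rigorous enclosures of $r_{sat}(k)$. For large $k$, I would use $r_{sat}(k) \to 1$: then $kr_{sat}(k) \to \infty$ and $(kr_{sat}(k))^{1/(k-1)} \to 1$, hence $(x^-(k,r_{sat}(k)))^{k-2} \to 0$, $x^-(k,r_{sat}(k)) \to 1$, and $kr_{sat}(k)^2 (x^-)^k \to 0$, so $w_e^*(k,r_{sat}(k)) \to 1$ with a correction term that is eventually decreasing; thus $w_e^*(k,r_{sat}(k))$ is eventually increasing, and choosing $K$ past that threshold and combining with the finite check covers every $k_0 \ge 3$ (using $w_e^*(k_0,r_{sat}(k_0)) < 1$). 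The remaining technical point — that $x^-$ is real on the whole admissible range of $r$, needed in the integration step — is exactly the discriminant estimate recorded above.
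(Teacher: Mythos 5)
Your direct inductive bound on $T_l = 1-S_l(x)$ is a correct alternative to the paper's route to the pointwise estimate. The paper proves that $\{S_l(x)\}_{l\ge 0}$ is decreasing, invokes monotone convergence to get a limit $\hat S(x)$, shows $\hat S(x)\ge 1-(kr)^2x^{k-1}$ for $0\le x\le x^-(k,r)$ by an intermediate-value argument on $F(k,r,x,s)=\exp(-kr[(1-x)(1-s)+x]^{k-1})-s$, and then uses $S_R\ge\hat S$. You instead note that $t_0=(kr)^2x^{k-1}$ is a super-solution of the one-step recursion $T_l=h_x(T_{l-1})$ on $[0,x^-]$ and propagate the bound by a one-line induction, eliminating the passage to the limit. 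The inequality you derive, $x^{k-2}(1-x)\le ((kr)^{1/(k-1)}-1)/(kr)^2$ for $x\le x^-$, via the observation that $(x^-)^{k-2}\in[0,\tfrac12]$ is where $u\mapsto u(1-u)$ attains the right-hand side, is the same algebraic fact that makes the paper's quadratic factor $F_1(k,r,x)$ nonnegative. Your verification that the discriminant under $x^\pm$ is nonnegative on the whole admissible range is something the paper silently assumes and is worth recording.

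On the descent step you have put your finger on a real soft spot. Because $r_{sat}(k)$ increases with $k$, a density $r<r_{sat}(k)$ may exceed $r_{sat}(k_0)$, so the two monotonicities the paper asserts ($w_e^*$ increasing in $k$ for fixed $r$, decreasing in $r$ for fixed $k$) do not chain to give $w_e^*(k,r)\ge w_e^*(k_0,r_{sat}(k_0))$: what is actually needed is that the composite $k\mapsto w_e^*(k,r_{sat}(k))$ is non-decreasing, and since $\partial_k w_e^*>0$, $\partial_r w_e^*<0$ and $r_{sat}'(k)>0$, the sign of that composite derivative is a priori indeterminate. The paper's one-line ``Therefore, we have $w_e(k,r)\ge w_e^*(k_0,r_{sat}(k_0))$'' does not resolve this, and your diagnosis is sharper. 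Your proposed repair (finite numerical check plus an asymptotic argument from $r_{sat}(k)\to1$) is only sketched; in particular the claim about ``a correction term that is eventually decreasing'' is not justified. You do not need eventual monotonicity though: it suffices that $w_e^*(k,r_{sat}(k))\to1$ while the fixed target $w_e^*(k_0,r_{sat}(k_0))<1$, which does follow from your asymptotics, so the plan can be salvaged. In summary, your core estimate takes a genuinely different and somewhat cleaner route than the paper, and you correctly identify a gap in the paper's monotonicity argument, but like the paper your proposal leaves the composite-monotonicity (or threshold) verification open.
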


\begin{proof}[Proof of Theorem \ref{thm:main-theorem-unit-clause}]
    Let $k \ge 9$ and $r \in (r_{core}(k),r_{sat}(k))$.
    By Lemma \ref{lemma:upper_bound_of_diameter} and \ref{lemma:lower-bound-w-1}, we have
    $2\mu(k,r) < 2\mu_u(9) \le 0.3420 < 0.3575 \le w_1^*(9) \le w_1(k,r)$.
    Then, by Lemma \ref{lemma:uc-free}, \texttt{DEC\textsubscript{UC}} is strictly $2\mu(k,r)$-free.
    The result follows.
\end{proof}

\begin{proof}[Proof of Theorem \ref{thm:main-theorem-exact-marginal}]
    Let $k \ge 13$ and $r \in (r_{core}(k),r_{sat}(k))$.
    By Lemma \ref{lemma:upper_bound_of_diameter} and \ref{lemma:lower-bound-w-b}, we have
    $2\mu(k,r) < 2\mu_u(13) \le 0.2668 < 0.2725 \le w_e^*(13) \le w_e(k,r)$.
    Then, by Lemma \ref{lemma:bp-free}, \texttt{DEC$_\tau$} is strictly $2\mu(k,r)$-free.
    The result follows.
\end{proof}

\bibliographystyle{alpha}
\bibliography{paper01}

\begin{appendices}
\section{Proof of lemmas in Section \ref{sec:proof-of-main-theorems}}

In this section, we prove the proof of lemmas in Section \ref{sec:proof-of-main-theorems}.

\begin{proof}[Proof of Lemma \ref{lemma:delta-far}]
    Suppose $D \subseteq [n]$ is the subset of indices of iterations that are free steps, namely,
    \begin{align*}
        D = \{i \in [n]: \text{The $i$-th iteration is a free step}\}.
    \end{align*}
    Note that 
    $\sigma_0 = \sigma_{\mathbf{\Phi},\mathbf{Z},\mathbf{U^0}} = \sigma_{\mathbf{\Phi},\mathbf{Z},\mathbf{V}}$ and 
    $\sigma_n = \sigma_{\mathbf{\Phi},\mathbf{Z},\mathbf{U^n}} = \sigma_{\mathbf{\Phi},\mathbf{Z},\mathbf{W}}$
    and thus
    \begin{align*}
        d( \sigma_0, \sigma_n )
        = d( \sigma_{\mathbf{\Phi},\mathbf{Z},\mathbf{U^0}}, \sigma_{\mathbf{\Phi},\mathbf{Z},\mathbf{U^n}} )
        = d( \sigma_{\mathbf{\Phi},\mathbf{Z},\mathbf{V}}, \sigma_{\mathbf{\Phi},\mathbf{Z},\mathbf{W}} )
    \end{align*}
    Since we can write $d( \sigma_{\mathbf{\Phi},\mathbf{Z},\mathbf{V}}, \sigma_{\mathbf{\Phi},\mathbf{Z},\mathbf{W}} ) = \sum_{i=1}^n \mathbbm{1}\left( \sigma_{\mathbf{\Phi},\mathbf{Z},\mathbf{V}}(x_{s(i)}) \neq \sigma_{\mathbf{\Phi},\mathbf{Z},\mathbf{W}}(x_{s(i)}) \right) $, we have
    \begin{align*}
        d( \sigma_{\mathbf{\Phi},\mathbf{Z},\mathbf{V}}, \sigma_{\mathbf{\Phi},\mathbf{Z},\mathbf{W}} )
        &= \sum_{i=1}^n \mathbbm{1}\left( \sigma_{\mathbf{\Phi},\mathbf{Z},\mathbf{V}}\left(x_{s(i)}\right) \neq \sigma_{\mathbf{\Phi},\mathbf{Z},\mathbf{W}}\left(x_{s(i)}\right) \right) \\
        &\ge \sum_{i \in D} \mathbbm{1}\left( \sigma_{\mathbf{\Phi},\mathbf{Z},\mathbf{V}}\left(x_{s(i)}\right) \neq \sigma_{\mathbf{\Phi},\mathbf{Z},\mathbf{W}}\left(x_{s(i)}\right) \right).
    \end{align*}
    In free steps, the local rule $\tau$ gives the value $1/2$ to the decimation algorithm.
    Therefore, for any $i \in D$, $\sigma_{\mathbf{\Phi},\mathbf{Z},\mathbf{V}}\left(x_{s(i)}\right) \neq \sigma_{\mathbf{\Phi},\mathbf{Z},\mathbf{W}}\left(x_{s(i)}\right)$ if and only if either
    $\mathbf{V}_i < 1/2 < \mathbf{W}_i$ or $\mathbf{W}_i < 1/2 < \mathbf{V}_i$.
    Therefore, we have 
    \begin{align*}
        \sum_{i \in D} \mathbbm{1}\left( \sigma_{\mathbf{\Phi},\mathbf{Z},\mathbf{V}}\left(x_{s(i)}\right) \neq \sigma_{\mathbf{\Phi},\mathbf{Z},\mathbf{V}}\left(x_{s(i)}\right) \right)
        &= \sum_{i \in D} \mathbbm{1}\left( \mathbf{V}_i < 1/2 < \mathbf{W}_i \text{\;or\;} \mathbf{W}_i < 1/2 < \mathbf{V}_i \right)
    \end{align*}
    Note that the random variables $\mathbf{V}_1, \mathbf{V}_2, \cdots, \mathbf{V}_n, \mathbf{W}_1, \mathbf{W}_2, \cdots, \mathbf{W}_n$ are i.i.d. over uniform distributions on $[0,1]$.
    Thus, $\sum_{i \in D} \mathbbm{1}\left( \mathbf{V}_i < 1/2 < \mathbf{W}_i \text{\;or\;} \mathbf{W}_i < 1/2 < \mathbf{V}_i \right)$ is distributed over the binomial distribution $B(|D|,1/2)$ with parameters $|D|$ and $1/2$.

    Assume that the algorithm \texttt{DEC\textsubscript{$\tau$}} is $\delta$-free on the random $k$-XORSAT instance $\mathbf{\Phi} \sim \mathbf{\Phi}_k(n,rn)$ for some $\delta > 0$, which implies that w.h.p. $|D| \ge \delta n$.
    Note that $D$ only depends on $\mathbf{\Phi}$ and $\mathbf{Z}$.
    Hence, we have
    \begin{align*}
        \expected{ d(\sigma_0, \sigma_n) }
        &= \mathbb{E}_{\mathbf{\Phi},\mathbf{Z},\mathbf{V},\mathbf{W}} \left[ \sum_{i \in D} \mathbbm{1}\left( \mathbf{V}_i < 1/2 < \mathbf{W}_i \text{\;or\;} \mathbf{W}_i < 1/2 < \mathbf{V}_i \right) \right] \\
        &= \mathbb{E}_{\mathbf{V},\mathbf{W}} \mathbb{E}_{\mathbf{\Phi},\mathbf{Z}} \left[ \sum_{i \in D} \mathbbm{1}\left( \mathbf{V}_i < 1/2 < \mathbf{W}_i \text{\;or\;} \mathbf{W}_i < 1/2 < \mathbf{V}_i \right) \right] \\
        &= (1/2) \cdot \expected{|D|} \\
        &\ge (\delta/2)n + o(n)
    \end{align*}
\end{proof}

\begin{proof}[Proof of Lemma \ref{lemma:correct-distance}]
    Assume the $\tau$-decimation algorithm \texttt{DEC$_\tau$} is strictly $2\mu(k,r)$-free on the random $k$-XORSAT instance $\mathbf{\Phi} \sim \mathbf{\Phi}_k(n,rn)$.
    Then, there exists $\delta > 2\mu(k,r)$ such that \texttt{DEC$_\tau$} is $\delta$-free on $\mathbf{\Phi}$.
    
    We first show that there exists $0 \le i_0 \le n$ such that the expected value of $d(\pi(\sigma_{0}), \pi(\sigma_{i_0}))$ is close to $\frac{1}{2} \epsilon' n$ for some $\epsilon' < \epsilon(k,r)$.
    Consider the sequence $\{\expected{d(\pi(\sigma_{0}), \pi(\sigma_{i}))}\}_{i=0}^n$ in which the first item is
    \begin{align}
        \expected{ d(\pi(\sigma_{0}), \pi(\sigma_{0})) } = 0.
        \label{eqt:expectation_first}
    \end{align}
    By Lemma \ref{lemma:delta-far}, we know that $\expected{ d(\sigma_{0}, \sigma_{n}) } \ge (\delta/2)n + o(n)$.
    Moreover, by Theorem \ref{thm:core_emerges}, we know that w.h.p. there are $\mu(k,r)n + o(n)$ variables not in the core instance $\mathbf{\Phi}_c$.
    Note that the projection function $\pi$ only remove variables not in the core instance $\mathbf{\Phi}_c$.
    Therefore, we have $d( \pi(\sigma_0), \pi(\sigma_n) ) \ge d( \sigma_0, \sigma_n ) - n^*$ where $n^*$ is the number of variables not in the core instance $\mathbf{\Phi}_c$.
    Therefore, we have
    \begin{align*}
        \expected{d( \sigma_0, \sigma_n )}
        \le \expected{d( \pi(\sigma_0), \pi(\sigma_n) )} + \expected{n^*}
        \le \expected{d( \pi(\sigma_0), \pi(\sigma_n) )} + \mu(k,r)n + o(n)
    \end{align*}
    By re-assigning the terms, we have
    \begin{align}
        \expected{d( \pi(\sigma_0), \pi(\sigma_n) )}
        &\ge (\delta/2 - \mu(k,r)) n + o(n)
        \label{eqt:expectation_last}
    \end{align}
    with $\delta/2 - \mu(k,r) > 0$.
    From Lemma \ref{lemma:insensitive}, we know that with probability $1 - \exp(-\ln n (\ln \ln n)^{\xi/4})$ we have
    \begin{align}
        d(\sigma_{i-1}, \sigma_{i}) < n^{1/6} \text{\quad for all\;}i \in [n].
        \label{eqt:influence-range-size-in-lemma}
    \end{align}
    By the triangle inequality of the metric $d$ and the linearity of the expectation, we know that for $1 \le i \le n$ the difference of two consecutive expected values in the sequence is 
    \begin{align}
        &\expected{ d(\pi(\sigma_{0}), \pi(\sigma_{i})) } - \expected{ d(\pi(\sigma_{0}), \pi(\sigma_{i-1})) } \nonumber \\
        &\le \expected{ d(\pi(\sigma_{0}), \pi(\sigma_{i-1})) } + \expected{ d(\pi(\sigma_{i-1}), \pi(\sigma_{i})) } - \expected{ d(\pi(\sigma_{0}), \pi(\sigma_{i-1})) } \nonumber \\
        &= \expected{ d(\pi(\sigma_{i-1}), \pi(\sigma_{i})) } \nonumber \\
        &\le \expected{ d(\sigma_{i-1}, \sigma_{i}) } \nonumber \nonumber \\
        &\le n^{1/6} + o(1). \label{eqt:expectation_diff}
    \end{align}
    Combining (\ref{eqt:expectation_first}), (\ref{eqt:expectation_last}) and (\ref{eqt:expectation_diff}), we know that there exists $0 \le i_0 \le n$ and $0 < \epsilon' < \min\{\delta/2-\mu(k,r),\epsilon(k,r)\}$ such that 
    \begin{align}
        \expected{ d(\pi(\sigma_{0}), \pi(\sigma_{i_0})) }
        \in \left[ \frac{1}{2}\epsilon'n,\, \frac{1}{2}\epsilon'n + n^{1/6} \right].
    \end{align}

    Next, we prove that $d(\pi(\sigma_{0}), \pi(\sigma_{i_0}))$ concentrates around its mean.
    Given two vectors $A \in \{0,1\}^n$ and $B \in \{0,1\}^n$, we write $A \cdot B = (A_1,\cdots,A_n,B_1,\cdots,B_n)$ and $A \oplus_i B = (A_1,\cdots,A_{i},B_{i+1},\cdots,B_n)$ for $i \in [n]$.
    A function $f:D_1 \times D_2 \times \cdots D_n \rightarrow \mathbb{R}$ satisfies the \textbf{bounded differences} property if there exist $c_1,c_2,\cdots,c_n \in \mathbb{R}$ such that for any $x_1 \in D_1$, $x_2 \in D_2$, $\cdots$, $x_n \in D_n$, $y_i \in D_i$ and $i \le [n]$,
    \begin{align*}
        | f(x_1,...,x_{i-1},x_i,x_{i+1},...,x_n) - f(x_1,...,x_{i-1},y_i,x_{i+1},...,x_n) | \le c_i.
    \end{align*}
    Note that $\mathbf{U^{i_0}} = \mathbf{W} \oplus_{i_0} \mathbf{V}$.
    For arbitrary instance $\Phi$ and ordering vector $Z$, we have
    \begin{align*}
        d(\pi(\sigma_{\Phi,Z,\mathbf{U^0}}), \pi(\sigma_{\Phi,Z,\mathbf{U^{i_0}}}))
        &= d(\pi(\sigma_{\Phi,Z,\mathbf{V}}), \pi(\sigma_{\Phi,Z,\mathbf{W} \oplus_{i_0} \mathbf{V}})).
    \end{align*}
    So, given a random instance $\mathbf{\Phi}$ and a random ordering vector $\mathbf{Z}$, we can write $d(\pi(\sigma_{0}), \pi(\sigma_{i_0}))$ as a function $f:\{0,1\}^{2n} \rightarrow \mathbb{R}$ on variables $\mathbf{V}\cdot\mathbf{W}$ given by
    $f(\mathbf{V} \cdot \mathbf{W}) = d(\pi(\sigma_{\mathbf{\Phi},\mathbf{Z},\mathbf{V}}), \pi(\sigma_{\mathbf{\Phi},\mathbf{Z},\mathbf{W} \oplus_{i_0} \mathbf{V}}))$.
    Conditioning on (\ref{eqt:influence-range-size-in-lemma}), we can verify that $f$ satisfies bounded differences property with $c_i=2n^{1/6}$ for $i \in [n]$, and thus we have
    \begin{align*}
        &\proba{ \left| d(\pi(\sigma_{0}), \pi(\sigma_{i_0})) - \frac{1}{2}\epsilon'n \right| \ge \frac{1}{4}\epsilon'n } \\
        &\le \proba{ \left| d(\pi(\sigma_{0}), \pi(\sigma_{i_0})) - \expected{ d(\pi(\sigma_{0}), \pi(\sigma_{i_0})) } \right| \ge \frac{1}{4}\epsilon'n - n^{1/6} } \\
        &= \proba{ \left| f(V \cdot W) - \expected{ f(V \cdot W) } \right| \ge \frac{1}{4}\epsilon'n - n^{1/6} } \\
        &\le 2 \exp \left( - \frac{2 \left( \frac{1}{4}\epsilon'n - n^{1/6} \right)^2}{(n+i_0)n^{1/6}}  \right) \\
        &\le 2 \exp \left( - \frac{1}{8} n^{5/6} + o(n^{5/6}) \right)
    \end{align*}
    by McDiarmid's inequality.
    Since the condition (\ref{eqt:influence-range-size-in-lemma}) holds with probability $1-\exp(-\ln n(\ln \ln n)^{\xi/4}) \rightarrow 1$ as $n \rightarrow \infty$, the inequality in Lemma \ref{lemma:correct-distance} holds with high probability.
\end{proof}

\begin{proof}[Proof of Lemma \ref{lemma:both-satisfying}]
    Fix an arbitrary $i \in [n]$.
    Note that we have $\mathbf{U^{0}} = (\mathbf{V}_1,\mathbf{V}_2,\cdots,\mathbf{V}_n)$ and $\mathbf{U^{i}} = (\mathbf{W}_1,\mathbf{W}_2,\cdots,\mathbf{W}_i,$ $\mathbf{V}_{i+1},\cdots,\mathbf{V}_n)$, where $\mathbf{V}_1,\mathbf{V}_2,\cdots,\mathbf{V}_n,\mathbf{W}_1,\mathbf{W}_2,\cdots,\mathbf{W}_n$ are uniformly distributed over $[0,1]$, independently.
    Conditioning on $\mathbf{\Phi},\mathbf{Z},\mathbf{V}_{i+1},\cdots,\mathbf{V}_n$, the assignment $\sigma_{\mathbf{\Phi},\mathbf{Z},\mathbf{U^{0}}}$ only depends on $\mathbf{V}_1,\cdots,\mathbf{V}_i$, and the assignment $\sigma_{\mathbf{\Phi},\mathbf{Z},\mathbf{U^{i}}}$ only depends on $\mathbf{W}_1,\cdots,\mathbf{W}_i$, and we have
    \begin{align*}
        &\mathbb{E}_{\mathbf{V}_1,\cdots,\mathbf{V}_i, \mathbf{W}_1,\cdots,\mathbf{W}_i}\left[\; \mathbbm{1}( \sigma_{\mathbf{\Phi},\mathbf{Z},\mathbf{U^{0}}} \in \mathcal{S}(\Phi) ) \cdot \mathbbm{1} ( \sigma_{\mathbf{\Phi},\mathbf{Z},\mathbf{U^{i}}} \in \mathcal{S}(\Phi) ) \;\right] \\
        &= \left( \mathbb{E}_{\mathbf{V}_1,\cdots,\mathbf{V}_i} \mathbbm{1}( \sigma_{\mathbf{\Phi},\mathbf{Z},\mathbf{U^{0}}} \in \mathcal{S}(\Phi) ) \right) \cdot \left( \mathbb{E}_{\mathbf{W}_1,\cdots,\mathbf{W}_i} \mathbbm{1} ( \sigma_{\mathbf{\Phi},\mathbf{Z},\mathbf{U^{i}}} \in \mathcal{S}(\Phi) ) \right) \\
        &= \left( \mathbb{E}_{\mathbf{V}_1,\cdots,\mathbf{V}_i} \mathbbm{1}( \sigma_{\mathbf{\Phi},\mathbf{Z},\mathbf{V}} \in \mathcal{S}(\Phi) ) \right)^2
    \end{align*}
    Therefore, by Jensen's inequality, we have
    \begin{align*}
        &\Pr_{\mathbf{\Phi},\mathbf{Z},\mathbf{V},\mathbf{W}}\left[\; \sigma_{0} \in \mathcal{S}(\Phi) \textnormal{\;and\;} \sigma_{i} \in \mathcal{S}(\Phi) \;\right] \\
        &=\Pr_{\mathbf{\Phi},\mathbf{Z},\mathbf{V},\mathbf{W}}\left[\; \sigma_{\mathbf{\Phi},\mathbf{Z},\mathbf{U^{0}}} \in \mathcal{S}(\Phi) \textnormal{\;and\;} \sigma_{\mathbf{\Phi},\mathbf{Z},\mathbf{U^{i}}} \in \mathcal{S}(\Phi) \;\right] \\
        &= \mathbb{E}_{\mathbf{\Phi},\mathbf{Z},\mathbf{V}_{i+1},\cdots,\mathbf{V}_n} \mathbb{E}_{\mathbf{V}_1,\cdots,\mathbf{V}_i, \mathbf{W}_1,\cdots,\mathbf{W}_i}\left[\; \mathbbm{1}( \sigma_{\mathbf{\Phi},\mathbf{Z},\mathbf{U^{0}}} \in \mathcal{S}(\Phi) ) \cdot \mathbbm{1} ( \sigma_{\mathbf{\Phi},\mathbf{Z},\mathbf{U^{i}}} \in \mathcal{S}(\Phi) ) \;\right] \\
        &= \mathbb{E}_{\mathbf{\Phi},\mathbf{Z},\mathbf{V}_{i+1},\cdots,\mathbf{V}_n} \left( \mathbb{E}_{\mathbf{V}_1,\cdots,\mathbf{V}_i} \mathbbm{1}( \sigma_{\mathbf{\Phi},\mathbf{Z},\mathbf{V}} \in \mathcal{S}(\Phi) ) \right)^2 \\
        &\ge \left( \mathbb{E}_{\mathbf{\Phi},\mathbf{Z},\mathbf{V}_{i+1},\cdots,\mathbf{V}_n} \mathbb{E}_{\mathbf{V}_1,\cdots,\mathbf{V}_i} \mathbbm{1}( \sigma_{\mathbf{\Phi},\mathbf{Z},\mathbf{V}} \in \mathcal{S}(\Phi) ) \right)^2 \\
        &= \left( \Pr_{\mathbf{\Phi},\mathbf{Z},\mathbf{V}}\left[\; \sigma_{\mathbf{\Phi},\mathbf{Z},\mathbf{V}} \in \mathcal{S}(\Phi) \;\right] \right)^2 \\
        &= \alpha_n^2.
    \end{align*}
\end{proof}

\section{Upper Bound of Cluster Diameter}
\label{sec:upper-bound-of-diameter}

In this section, we give the proof of Lemma \ref{lemma:upper_bound_of_diameter}, which gives an upper bound for the diameter of a cluster that can be written as a closed form expression.
From Lemma \ref{lemma:partition-sol-space}, we know that for any $k \ge 3$ and $r_{core}(k) < r < r_{sat}(k)$ w.h.p. the diameter of a cluster is upper bounded by $\mu(k,r) n + o(n)$, where
\begin{align*}
    \mu(k,r) = \exp(-krQ_{k,r}^{k-1}) + krQ_{k,r}^{k-1} \exp(-krQ_{k,r}^{k-1})
\end{align*}
and $Q_{k,r}$ is the largest solution of the fixed point equation $Q = 1 - \exp(-krQ^{k-1})$ with the given values of $k$ and $r$.
This implicit expression would make calculations complicated.
So, we slightly relax the upper bound to obtain a simpler expression $\mu_u(k)$ in Lemma \ref{lemma:upper_bound_of_diameter}.
Note that this lemma can only be applied when $k \ge 4$ due to some calculation restriction, which will be mentioned later in this section.

To prove Lemma \ref{lemma:upper_bound_of_diameter}, we first re-write the fixed point equation $Q=1-\exp(-krQ^{k-1})$ as 
\begin{align*}
    krQ^{k-1} = - \ln(1-Q).
\end{align*}
Since $Q_{k,r}$ satisfies this equation, we can write $\mu(k,r)$ as
\begin{align}
    \mu(k,r) \nonumber
    &= \exp(-krQ_{k,r}^{k-1}) + krQ_{k,r}^{k-1} \exp(-krQ_{k,r}^{k-1}) \nonumber \\
    &= (1-Q_{k,r}) - (1-Q_{k,r}) \ln (1-Q_{k,r}) 
\end{align}
Note that $Q_{k,r}$ must lie in the interval $[0,1)$ as
\begin{align*}
    Q \le 0 < 1-\exp(-krQ^{k-1}) &\text{\quad for any\;} Q \le 0 \quad{\;and} \\
    Q > 1 \ge 1-\exp(-krQ^{k-1}) &\text{\quad for any\;} Q > 1.
\end{align*}
Further note that the real-valued function $f(x)=(1-x)-(1-x)\ln(1-x)$ is strictly decreasing on $[0,1)$.
So, it suffices to find a lower bound of $Q_{k,r}$, in order to find an upper bound of $\mu(k,r)$.

Next, we try to show that $e^{-1/k}$ is a lower bound of $Q_{k,r}$.
To facilitate the calculation, we define a real-valued analytic function $G:[3,+\infty) \times [0,1] \times [0,1] \rightarrow \mathbb{R}$ by
\begin{align*}
    G(k,r,Q) = 1 - \exp(-krQ^{k-1}) -Q,
\end{align*}
which have the following derivatives:
\begin{align}
    \label{eq:dG_dr}
    \frac{\partial G}{\partial r}
    &= \exp(-krQ^{k-1}) \cdot kQ^{k-1} \\
    \frac{\partial G}{\partial Q}
    &= \exp(-krQ^{k-1}) \cdot kr(k-1) Q^{k-2} - 1 \\
    \label{eq:d2G_dQ2}
    \frac{\partial^2 G}{\partial Q^2}
    &= \exp(-krQ^{k-1}) \cdot kr(k-1)Q^{k-3} [(k-2)-kr(k-1)Q^{k-1}]
\end{align}
Before proving that $e^{-1/k}$ is a lower bound of $Q_{k,r}$, we give the following two lemmas, Lemma \ref{lemma:G_k_rc_e1k_le_0} and Lemma \ref{lemma:existence_r1}, which will be used later.

\begin{lemma}
    \label{lemma:G_k_rc_e1k_le_0}
    For any $k \ge 3$, $G(k,r_{core}(k),e^{-1/k}) \le 0$.
\end{lemma}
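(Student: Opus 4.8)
The plan is to prove $G(k, r_{core}(k), e^{-1/k}) \le 0$ by exploiting the variational characterization of $r_{core}(k)$ given in Theorem \ref{thm:core_emerges}, namely
\[
r_{core}(k) = \sup\{ r \in [0,1] : Q > 1 - e^{-krQ^{k-1}} \;\; \forall Q \in (0,1) \}.
\]
In terms of the function $G(k,r,Q) = 1 - e^{-krQ^{k-1}} - Q$, this says that for every $r < r_{core}(k)$ we have $G(k,r,Q) < 0$ for all $Q \in (0,1)$. First I would fix $k \ge 3$ and observe that $e^{-1/k} \in (0,1)$, so for every $r < r_{core}(k)$ we get $G(k, r, e^{-1/k}) < 0$. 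Then I would take the limit as $r \uparrow r_{core}(k)$: since $G$ is continuous (indeed analytic) in $r$, the inequality passes to the limit, yielding $G(k, r_{core}(k), e^{-1/k}) \le 0$. This is the whole argument; the inequality is weak ($\le$ rather than $<$) precisely because continuity only preserves non-strict inequalities in the limit.

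The one point that needs care is making sure the supremum defining $r_{core}(k)$ is actually attained as a genuine limit from below, i.e. that the set $\{r : G(k,r,Q) < 0 \;\forall Q \in (0,1)\}$ is a nonempty interval with left endpoint $0$ (or at least contains values arbitrarily close to $r_{core}(k)$ from below). For this I would note that $\partial G/\partial r = e^{-krQ^{k-1}} kQ^{k-1} > 0$ for $Q \in (0,1)$ (equation \eqref{eq:dG_dr}), so for each fixed $Q$ the map $r \mapsto G(k,r,Q)$ is strictly increasing; hence the set of $r$ for which $G(k,r,Q) < 0$ holds \emph{simultaneously} for all $Q \in (0,1)$ is downward closed, and since $G(k,0,Q) = -Q < 0$ it contains a neighbourhood of $0$. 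Therefore this set is an interval $[0, r_{core}(k))$ (or $[0, r_{core}(k)]$), and in particular there is a sequence $r_j \uparrow r_{core}(k)$ with $G(k, r_j, e^{-1/k}) < 0$ for all $j$.

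The main obstacle — such as it is — is purely bookkeeping: confirming that $r_{core}(k)$ as defined by the supremum is finite and positive (so that the limit is meaningful), which follows because the satisfiability threshold $r_{sat}(k)$ is finite and $0 < r_{core}(k) < r_{sat}(k)$ by Theorem \ref{thm:core_emerges}, and because at $r = 0$ the defining condition holds. Once that is in hand, the proof is a one-line continuity argument, and no delicate estimate on the function $G$ itself is required; the explicit derivative formulas \eqref{eq:dG_dr}--\eqref{eq:d2G_dQ2} are needed only for the monotonicity remark above, not for the core of the proof.
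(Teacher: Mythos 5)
Your proof is correct, and it takes a slightly different route from the paper. The paper argues by contradiction: assuming $G(k,r_{core}(k),e^{-1/k})>0$, it uses continuity in $r$ to find $r'<r_{core}(k)$ with $G(k,r',e^{-1/k})>0$, then notes $G(k,r',1)<0$ and applies the intermediate value theorem in the $Q$ variable to produce a zero $Q'\in(e^{-1/k},1)$ of $G(k,r',\cdot)$, contradicting the defining property of $r_{core}(k)$. You instead argue directly: for $r<r_{core}(k)$ one has $G(k,r,e^{-1/k})<0$, and letting $r\uparrow r_{core}(k)$ and invoking continuity in $r$ gives the weak inequality. Your version avoids the IVT step in $Q$ entirely. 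More importantly, you make explicit a point the paper leaves tacit: that every $r<r_{core}(k)$ actually belongs to the set $\{r: G(k,r,Q)<0 \;\forall Q\in(0,1)\}$ over which the supremum is taken. A bare supremum does not guarantee this; it follows from the strict monotonicity $\partial G/\partial r>0$, which makes the set downward-closed, together with $G(k,0,Q)=-Q<0$. The paper's contradiction step ("this contradicts the definition of $r_{core}(k)$") implicitly relies on this same fact, so your monotonicity remark actually tightens the argument rather than padding it. Both proofs are correct; yours is the more self-contained of the two.
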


\begin{proof}
    We prove it by contradiction.
    Assume $G(k,r_{core}(k),e^{-1/k}) > 0$ for some $k \ge 3$.
    By the continuity of $G$, there exists $r' < r_{core}(k)$ such that $G(k,r',e^{-1/k}) > 0$.
    Note that $G(k,r',1) = 1-\exp(-kr')-1 < 0$.
    Again, by the continuity of $G$, there exists $Q' \in (e^{-1/k},1)$ such that $G(k,r',Q')=0$.
    Since $0 < r' < r_{core}(k)$, this contradicts the definition of $r_{core}(k)$.
\end{proof}

\begin{lemma}
    \label{lemma:existence_r1}
    For any $k \ge 3$, there exists $r_1(k) \in (r_{core}(k),1)$ such that
    \begin{align*}
        G(k,r,e^{-1/k})
        \begin{cases}
            <0 & \text{\quad for \;} r_{core}(k) \le r < r_1(k) \\
            =0 & \text{\quad for \;} r = r_1(k) \\
            >0 & \text{\quad for \;} r_1(k) < r \le 1
        \end{cases}  
    \end{align*}
\end{lemma}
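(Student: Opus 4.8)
The plan is to treat $G(k,r,e^{-1/k})$ as a function of the single variable $r$. Write
\begin{align*}
    g_k(r) := G(k,r,e^{-1/k}) = 1 - \exp\!\left(-kr\,e^{-(k-1)/k}\right) - e^{-1/k},
\end{align*}
using $(e^{-1/k})^{k-1} = e^{-(k-1)/k}$. This is analytic on $\mathbb{R}$, and by \eqref{eq:dG_dr} its derivative is $g_k'(r) = k\,e^{-(k-1)/k}\exp(-kr\,e^{-(k-1)/k}) > 0$, so $g_k$ is strictly increasing. Consequently $g_k$ has at most one zero on $[r_{core}(k),1]$ (note $r_{core}(k)<1$), and the trichotomy in Lemma~\ref{lemma:existence_r1} is equivalent to the two sign conditions $g_k(r_{core}(k)) < 0$ and $g_k(1) > 0$: granted these, the intermediate value theorem produces a unique $r_1(k) \in (r_{core}(k),1)$ with $g_k(r_1(k))=0$, and strict monotonicity immediately gives $g_k<0$ below it and $g_k>0$ above it.

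For the left endpoint, Lemma~\ref{lemma:G_k_rc_e1k_le_0} gives $g_k(r_{core}(k)) \le 0$, so it remains to rule out equality. Here I would argue that if $g_k(r_{core}(k))=0$, then $e^{-1/k}\in(0,1)$ is an interior maximizer of the $C^1$ map $Q\mapsto G(k,r_{core}(k),Q)$ (which is $\le 0$ on all of $(0,1)$ by continuity of $G$ and the definition of $r_{core}(k)$), so $\partial_Q G(k,r_{core}(k),e^{-1/k})=0$ as well; i.e. $e^{-1/k}$ would be the tangency point of $Q\mapsto 1-e^{-kr_{core}(k)Q^{k-1}}$ with the diagonal. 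Eliminating $r_{core}(k)$ between $G=0$ and $\partial_Q G = 0$ shows that the tangency point must solve $-(k-1)(1-Q)\ln(1-Q) = Q$, and one then checks that $Q = e^{-1/k}$ fails this identity for every integer $k\ge 3$ — e.g. by showing that $\psi(k) := -(k-1)(1-e^{-1/k})\ln(1-e^{-1/k}) - e^{-1/k}$ is strictly increasing in $k$ on $[3,\infty)$ and has its unique zero strictly between $k=3$ and $k=4$ (at $k=3$ it is a small negative number, at $k=4$ it is positive). Hence $g_k(r_{core}(k)) < 0$.

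For the right endpoint, $g_k(1) > 0$ is equivalent to $\exp(-k e^{-(k-1)/k}) < 1 - e^{-1/k}$, i.e. to $k\,e^{-(k-1)/k} > -\ln(1-e^{-1/k})$. Setting $t = 1/k \in (0,\tfrac13]$, I would use the elementary inequality $1-e^{-t} > t\,e^{-t}$ (equivalently $e^{t} > 1+t$) to bound the right-hand side by $-\ln t + t$, and then verify $t^{-1}e^{t-1} > -\ln t + t$ on $(0,\tfrac13]$ by a short one-variable monotonicity argument (the left side blows up like $1/(et)$ as $t\to 0$, and at $t=\tfrac13$ the inequality holds with room to spare).

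The main obstacle is not the structural part — strict monotonicity plus the intermediate value theorem — but the two scalar facts, and in particular the strictness $g_k(r_{core}(k))<0$: at $k=3$ the quantity $-(k-1)(1-e^{-1/k})\ln(1-e^{-1/k})$ is extremely close to $e^{-1/k}$, so the elimination and the sign bookkeeping for $\psi(k)$ near $k=3$ have to be carried out with some care, since an overly crude bound will not separate them. The inequality $g_k(1)>0$ is comparatively routine but still needs an explicit upper envelope for $-\ln(1-e^{-1/k})$ to make the uniform-in-$k$ claim rigorous.
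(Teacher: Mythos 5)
Your structural argument — strict monotonicity of $g_k(r) := G(k,r,e^{-1/k})$ in $r$ combined with the intermediate value theorem — is the same route the paper takes, built on the same derivative $\partial G/\partial r = e^{-krQ^{k-1}}\,kQ^{k-1} > 0$. Where you go further is in supplying the two endpoint verifications $g_k(r_{core}(k)) < 0$ and $g_k(1) > 0$, and in doing so you have caught a genuine gap in the paper's own proof: it cites Lemma~\ref{lemma:G_k_rc_e1k_le_0} only for the weak bound $g_k(r_{core}(k)) \le 0$, and for the upper end computes $\lim_{r\to\infty} g_k(r) = 1 - e^{-1/k} > 0$, neither of which actually places $r_1(k)$ strictly inside $(r_{core}(k),1)$ as the lemma asserts. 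The paper's final ``therefore'' is doing more work than its stated premises support. (For what it's worth, the looseness appears to be harmless downstream — Lemma~\ref{lemma:upper_bound_of_diameter} already restricts to $k \ge 4$, and the case split in Lemmas~\ref{lemma:lower_bound_q_kr_case_1}--\ref{lemma:lower_bound_q_kr_case_2} degrades gracefully even if one of $[r_{core},r_1]$ or $(r_1,1]$ is degenerate — but the lemma as written needs the two strict inequalities.)

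Both of your proposed fixes have the right shape. The right-endpoint argument after $t = 1/k$ reduces to $h(t) := t^{-1}e^{t-1} + \ln t - t > 0$ on $(0,\tfrac13]$, and since $h'(t) = (1-t)(t - e^{t-1})/t^2 \le 0$ by $e^{t-1}\ge t$, it suffices to check $h(\tfrac13) > 0$; that closes cleanly. The left-endpoint argument is also structurally sound: continuity of $G$ indeed gives $G(k,r_{core}(k),\cdot)\le 0$ on $(0,1)$, so a hypothetical equality at $e^{-1/k}$ forces the stationarity condition, and your elimination to $-(k-1)(1-Q)\ln(1-Q)=Q$ is correct. The fragile part, which you flag yourself, is the verification that $Q=e^{-1/k}$ is never a solution: numerically $\psi(3) \approx -0.002$, so the tangency point of the curve with the diagonal at $r_{core}(3)$ is very close to $e^{-1/3}$, and the separation must be established with care. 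As written, the monotonicity of $\psi$ on $[3,\infty)$ and the sign of $\psi(3)$ are only asserted, not proved, so the left-endpoint half of your proposal is an outline rather than a finished argument — but it is a sensible and likely-correct way to patch the lemma.
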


\begin{proof}
    Note that $\lim_{r \rightarrow \infty} G(k,r,e^{-1/k}) = \lim_{r \rightarrow \infty} 1-\exp(-kre^{-(k-1)/k})-e^{-1/k} = 1-e^{-1/k} > 0$.
    That means for sufficiently large $r$, $G(k,r,e^{-1/k}) > 0$.
    On the other hands, from Lemma \ref{lemma:G_k_rc_e1k_le_0}, we know that $G(k,r_{core}(k),e^{-1/k}) \le 0$ for any $k \ge 3$.
    Moreover, from (\ref{eq:dG_dr}), we have $\frac{\partial}{\partial r} G(k,r,e^{-k}) > 0$, which implies $G(k,r,e^{-1/k})$ is strictly increasing with $r$ for $r>0$.
    Therefore, there exists $r_1 = r_1(k) \in (r_{core}(k),1)$ such that $G(k,r,e^{-1/k}) < 0$ for $r_{core}(k) \le r < r_1(k)$, $G(k,r,e^{-1/k}) > 0$ for $r_1(k) < r \le 1$ and $G(k,r_1(k),e^{-1/k}) = 0$.
\end{proof}

With Lemma \ref{lemma:G_k_rc_e1k_le_0} and Lemma \ref{lemma:existence_r1}, we can prove the $e^{-1/k}$ is a lower bound for $Q_{k,r}$.
We split the proof into two cases: the case of $r \in [r_{core}(k), r_1(k)]$, and the case of $r \in (r_1(k), 1]$.
We first study the latter case, which is easier to be proved.

\begin{lemma}
    \label{lemma:lower_bound_q_kr_case_2}
    For any $k \ge 3$ and $r \in (r_1(k),1]$, we have $Q_{k,r} > e^{-1/k}$.
\end{lemma}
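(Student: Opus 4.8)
The plan is to apply the intermediate value theorem to the map $Q \mapsto G(k,r,Q)$ on the interval $[e^{-1/k},1]$ and then invoke the maximality of $Q_{k,r}$ among the solutions of the fixed point equation.

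First I would recall from Lemma \ref{lemma:existence_r1} that for $r \in (r_1(k),1]$ we have $G(k,r,e^{-1/k}) > 0$. Next I would evaluate $G$ at the right endpoint: since $G(k,r,Q) = 1 - \exp(-krQ^{k-1}) - Q$, plugging in $Q = 1$ gives $G(k,r,1) = 1 - \exp(-kr) - 1 = -\exp(-kr) < 0$. Because $G(k,r,\cdot)$ is continuous (indeed analytic) in its third argument, the intermediate value theorem produces a point $Q^* \in (e^{-1/k},1)$ with $G(k,r,Q^*) = 0$; equivalently, $Q^*$ is a solution of the fixed point equation $Q = 1 - \exp(-krQ^{k-1})$ that lies strictly above $e^{-1/k}$.

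Finally, since $Q_{k,r}$ is by definition the \emph{largest} solution of the fixed point equation, and the construction above shows that at least one solution exists, we get $Q_{k,r} \ge Q^* > e^{-1/k}$, which is the claim. (As noted just before the lemma in the main text, every solution of the fixed point equation automatically lies in $[0,1)$, so the notion of ``largest solution'' is unambiguous and the argument is self-consistent.)

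I do not expect a genuine obstacle here: the statement is a short consequence of continuity together with the sign information already packaged in Lemma \ref{lemma:existence_r1}. The only mild subtlety is making sure $Q_{k,r}$ is well-defined in the first place, but this is handled by the very intermediate-value argument that yields $Q^*$.
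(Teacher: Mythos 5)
Your proof is correct and matches the paper's argument step for step: sign of $G$ at $e^{-1/k}$ from Lemma \ref{lemma:existence_r1}, sign of $G$ at $1$ by direct computation, the intermediate value theorem to produce a fixed point in $(e^{-1/k},1)$, and maximality of $Q_{k,r}$. The closing remark about well-definedness is a nice touch but does not change the substance.
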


\begin{proof}
    From Lemma \ref{lemma:existence_r1}, we have $G(k,r,e^{-1/k}) > 0$ since $r > r_1(k)$.
    Note that $G(k,r,1) = 1 - \exp(-kr) - 1 < 0$.
    By the continuity of $G$, there exists at least one $Q' \in (e^{-1/k}, 1)$ such that $G(k,r,Q')=0$, which can be written as $Q'=1-\exp(-kr(Q')^{k-1})$.
    By the definition of $Q_{k,r}$, we then have $Q_{k,r} \ge Q' > e^{-1/k}$.
\end{proof}

Next, we study the case of $r \in [r_{core}(k), r_1(k)]$.
To prove that $Q_{k,r} > e^{-1/k}$, we need the two following facts from the basic calculus.
With these two facts, we can prove that $Q_{k,r} > e^{-1/k}$ for $r \in [r_{core}(k), r_1(k)]$ in Lemma \ref{lemma:lower_bound_q_kr_case_1}.
Note that the condition $k \ge 4$ is required in the calculation in the proof of Lemma \ref{lemma:lower_bound_q_kr_case_1}.
This is the reason why Lemma \ref{lemma:upper_bound_of_diameter} requires $k \ge 4$.

\begin{fact}
    \label{fact:calculus1}
    Let $f:[a,b]\rightarrow \mathbb{R}$ be an analytic function.
    If $f''(x) > 0$ for any $x \in (a,b)$, then $\max_{x \in [a,b]} f(x) = \max \{ f(a), f(b) \}$.
\end{fact}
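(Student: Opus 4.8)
The plan is to recognize \texttt{fact:calculus1} as the elementary statement that a strictly convex function on a compact interval attains its maximum only at an endpoint, and to prove it by ruling out an interior maximizer. First I would invoke compactness: since $f$ is analytic, hence continuous, on $[a,b]$, it attains its maximum at some point $c \in [a,b]$, and the whole task is to show $c \in \{a,b\}$.

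Next I would argue by contradiction. Suppose $c \in (a,b)$. Since $f$ is differentiable and $c$ is an interior maximizer, $f'(c) = 0$. The hypothesis $f''(x) > 0$ for all $x \in (a,b)$ makes $f'$ strictly increasing on $(a,b)$, so $f'(x) < f'(c) = 0$ for $x \in (a,c)$ and $f'(x) > 0$ for $x \in (c,b)$. By the mean value theorem this forces $f$ to be strictly decreasing on $[a,c]$ and strictly increasing on $[c,b]$; in particular $f(a) > f(c)$, contradicting the maximality of $c$. Hence $c \in \{a,b\}$, and therefore $\max_{x \in [a,b]} f(x) = f(c) = \max\{f(a), f(b)\}$.

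An equivalent and even shorter route, which I might present instead, is to use convexity directly: $f'' \ge 0$ on $(a,b)$ together with continuity on $[a,b]$ gives convexity, so any $x \in (a,b)$ can be written $x = \lambda a + (1-\lambda) b$ with $\lambda \in (0,1)$, and then $f(x) \le \lambda f(a) + (1-\lambda) f(b) \le \max\{f(a), f(b)\}$, while the endpoints themselves trivially satisfy the bound.

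There is no genuine obstacle here: the statement is a textbook fact, and the only remark worth making is that the hypothesis of analyticity is far stronger than needed — continuity on $[a,b]$ together with $f'' > 0$ on the open interval $(a,b)$ already suffices — so the proof will not rely on any power-series or real-analytic machinery.
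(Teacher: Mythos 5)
The paper does not actually supply a proof of this statement: it is stated as a \emph{Fact} and invoked as elementary calculus, with no argument given. So there is nothing to compare against in the paper itself. That said, your proof is correct, and both routes you sketch are standard and sound. The contradiction argument (interior maximizer forces $f'(c)=0$, but $f''>0$ makes $f'$ strictly increasing, so $c$ would be a strict local minimum) and the direct convexity argument ($f'' \ge 0$ plus continuity gives convexity, hence $f(\lambda a + (1-\lambda)b) \le \lambda f(a) + (1-\lambda) f(b)$) are both complete. Your remark that analyticity is far stronger than needed is also accurate; the authors presumably stated it for analytic $f$ only because the function $G(k,r,\cdot)$ to which they apply the fact happens to be analytic, not because the proof requires it.
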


\begin{fact}
    \label{fact:calculus2}
    Let $f:[a,b]\rightarrow \mathbb{R}$ be an analytic function.
    If there exists $c \in (a,b)$ such that
    \begin{align*}
        f''(x)
        \begin{cases}
            >0 & \text{\quad for \;} x\in(a,c) \\
            =0 & \text{\quad for \;} x=c \\
            <0 & \text{\quad for \;} x\in(c,b)
        \end{cases}
    \end{align*}
    and $f'(b) \ge 0$, then $\max_{x \in [a,b]} f(x) = \max \{ f(a), f(b) \}$.
\end{fact}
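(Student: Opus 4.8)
\textbf{Proof proposal for Fact \ref{fact:calculus2}.}
The plan is to split the domain at the inflection point $c$ and treat $[a,c]$ and $[c,b]$ separately: on $[a,c]$ the hypothesis is exactly that of Fact \ref{fact:calculus1}, and on $[c,b]$ a one-line monotonicity argument applies. Along the way I will record the auxiliary inequality $f(c)\le f(b)$, which is what glues the two pieces together at the end.

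First I would analyze the right piece $[c,b]$. There $f''<0$, so $f'$ is strictly decreasing on $[c,b]$; hence $f'(x)\ge f'(b)\ge 0$ for every $x\in[c,b]$. Consequently $f$ is nondecreasing on $[c,b]$, which immediately yields both $\max_{x\in[c,b]}f(x)=f(b)$ and, in particular, $f(c)\le f(b)$.

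Next I would analyze the left piece $[a,c]$. On the open interval $(a,c)$ we have $f''>0$, so Fact \ref{fact:calculus1} applies verbatim — the vanishing of $f''$ at the right endpoint $c$ is harmless, since Fact \ref{fact:calculus1} only requires positivity on the open interval — giving $\max_{x\in[a,c]}f(x)=\max\{f(a),f(c)\}$. Combining the two pieces,
\[
\max_{x\in[a,b]}f(x)=\max\Big\{\max_{x\in[a,c]}f(x),\ \max_{x\in[c,b]}f(x)\Big\}=\max\{f(a),f(c),f(b)\}=\max\{f(a),f(b)\},
\]
where the final equality uses the auxiliary inequality $f(c)\le f(b)$ established above.

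I do not expect any real obstacle here; this is an elementary calculus fact. The only point needing a touch of care is confirming that the hypothesis of Fact \ref{fact:calculus1} is genuinely met on $[a,c]$ even though $f''(c)=0$, which is fine because $c$ is an endpoint of $[a,c]$ rather than an interior point. If one preferred to avoid invoking Fact \ref{fact:calculus1}, the same conclusion on $[a,c]$ can be obtained directly: $f'$ is strictly increasing there, so $\{x\in[a,c]: f'(x)<0\}$ and $\{x\in[a,c]: f'(x)>0\}$ are (possibly empty) subintervals abutting $a$ and $c$ respectively, whence $f$ first decreases then increases on $[a,c]$ and therefore attains its maximum at an endpoint.
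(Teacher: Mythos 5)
Your proof is correct. The paper states Fact \ref{fact:calculus2} without proof (it is presented as a known elementary calculus fact), so there is no proof to compare against; your decomposition at $c$, the monotonicity argument on $[c,b]$ using $f'(b)\ge 0$ to deduce $f(c)\le f(b)$, and the invocation of Fact \ref{fact:calculus1} on $[a,c]$ (where $f''>0$ on the open interior, as required) together give a clean and complete justification.
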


\begin{lemma}
    \label{lemma:lower_bound_q_kr_case_1}
    For any $k \ge 4$ and $r \in [r_c,r_1]$, where $r_c=r_{core}(k)$ and $r_1=r_1(k)$, we have $Q_{k,r} > e^{-1/k}$.
\end{lemma}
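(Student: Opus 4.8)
The plan is to reduce the statement to a single derivative computation, exploiting the shape of $Q\mapsto G(k,r,Q)$ already encoded in \eqref{eq:dG_dr}--\eqref{eq:d2G_dQ2}. From \eqref{eq:d2G_dQ2}, $G(k,r,\cdot)$ is strictly convex on $(0,Q^\ast)$ and strictly concave on $(Q^\ast,1)$, where $Q^\ast=Q^\ast(k,r)=\bigl(\tfrac{k-2}{kr(k-1)}\bigr)^{1/(k-1)}$; also $G(k,r,0)=0$, $\partial_Q G(k,r,0)=-1<0$, $G(k,r,1)<0$, and for $r>r_{core}(k)$ the standard $2$-core analysis of \cite{mezardInformationPhysicsComputation2009} gives $G(k,r,\cdot)$ a positive ``bump'' on an interval $(q_1,q_2)\subset(0,1)$, with $q_1$ an upward crossing of $0$, $q_2$ a downward crossing, and $q_2=Q_{k,r}$. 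I claim it is enough to prove
\[ \partial_Q G\bigl(k,r,e^{-1/k}\bigr)>0 \qquad\text{for all } k\ge 4,\ r\in[r_{core}(k),r_1(k)]. \]
Indeed, one checks $Q^\ast(k,r)<e^{-1/k}$ on this range (a short estimate on $r_{core}$, see below), so on $[0,e^{-1/k}]$ the function $G(k,r,\cdot)$ has second-derivative sign pattern $+,0,-$; since Lemma~\ref{lemma:G_k_rc_e1k_le_0} and Lemma~\ref{lemma:existence_r1} give $G(k,r,e^{-1/k})\le 0$, Fact~\ref{fact:calculus2} (applied with the just-claimed positivity of $\partial_Q G$ at the right endpoint) yields $\max_{[0,e^{-1/k}]}G(k,r,\cdot)=\max\{G(k,r,0),\,G(k,r,e^{-1/k})\}=0$, i.e. $G(k,r,\cdot)\le0$ on $[0,e^{-1/k}]$ and $G(k,r,\cdot)$ is strictly increasing at $e^{-1/k}$. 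Hence the upward crossing $q_1$ cannot lie in $[0,e^{-1/k})$ (a neighbourhood to its right would meet $\{G>0\}\cap[0,e^{-1/k}]$), and $q_1\neq e^{-1/k}$ when $G(k,r,e^{-1/k})<0$ while $q_1=e^{-1/k}$ when $G(k,r,e^{-1/k})=0$; in every case $q_1\ge e^{-1/k}$ and therefore $Q_{k,r}=q_2>q_1\ge e^{-1/k}$. The endpoint $r=r_{core}(k)$, where $q_1,q_2$ collapse to the tangency point $Q_c=Q_{k,r_{core}}$, is handled the same way: $Q_c>Q^\ast$, $\partial_Q G(k,r_{core},Q_c)=0$, and strict monotonicity of $\partial_Q G$ on $(Q^\ast,1)$ together with $\partial_Q G(k,r_{core},e^{-1/k})>0$ force $e^{-1/k}<Q_c$.

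It then remains to establish the displayed inequality. Writing $\partial_Q G(k,r,e^{-1/k})=(k-1)e^{-(k-2)/k}\,(kr)\,e^{-kr\,e^{-(k-1)/k}}-1=:h_k(r)$, the map $r\mapsto(kr)e^{-kr\,e^{-(k-1)/k}}$ is increasing while $kr\,e^{-(k-1)/k}<1$ and decreasing thereafter, so $h_k$ is non-increasing on $[r_{core}(k),r_1(k)]$ as soon as $k\,r_{core}(k)\ge e^{(k-1)/k}$. This bound follows from the identity $k\,r_{core}(k)=\min_{Q\in(0,1)}\frac{-\ln(1-Q)}{Q^{k-1}}$ (the definition of $r_{core}$ rewritten) together with an elementary one-variable estimate — the place where Fact~\ref{fact:calculus1} is convenient — and it simultaneously gives $Q^\ast(k,r_{core})<e^{-1/k}$. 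So it suffices to check $h_k(r_1(k))>0$. At $r=r_1(k)$ the defining relation $G(k,r_1,e^{-1/k})=0$ reads $e^{-kr_1 e^{-(k-1)/k}}=1-e^{-1/k}$, equivalently $kr_1 e^{-(k-1)/k}=-\ln(1-e^{-1/k})$; substituting both into $h_k(r_1)$ collapses it to
\[ h_k(r_1(k)) = -(k-1)\bigl(e^{1/k}-1\bigr)\ln\!\bigl(1-e^{-1/k}\bigr)-1, \]
which is positive exactly when $(k-1)(e^{1/k}-1)\ln\tfrac{1}{1-e^{-1/k}}>1$. Using $e^{1/k}-1\ge 1/k$ and $1-e^{-1/k}\le 1/k$, the left side is at least $\tfrac{k-1}{k}\ln k\ge\tfrac34\ln 4>1$ for $k\ge 4$, and this quantity fails to exceed $1$ at $k=3$, which is exactly why the lemma is restricted to $k\ge 4$.

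The main obstacle is this last step: keeping $h_k(r)>0$ uniform over the whole window $[r_{core}(k),r_1(k)]$, not merely near $r_1(k)$. This rests on the two quantitative inputs $k\,r_{core}(k)\ge e^{(k-1)/k}$ (which makes $h_k$ monotone on the window and pins $Q^\ast$ below $e^{-1/k}$, localizing the convex/concave split of $G$) and the sharp $k$-dependent inequality at $r=r_1(k)$ that is true precisely from $k\ge 4$ on. Everything else — the signs of $G$ at $0$, $1$, and $e^{-1/k}$; the convex-then-concave shape in $Q$; and the two-root picture with $Q_{k,r}$ the larger root — is either read directly off the displayed derivative formulas or imported from the $2$-core analysis cited earlier, with Facts~\ref{fact:calculus1}--\ref{fact:calculus2} serving only to reduce the relevant interval maxima/minima to their endpoints.
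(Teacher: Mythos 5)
Your endpoint computation at $r_1$ is correct and fills in a detail the paper itself omits (the paper asserts $\partial_Q G(k,r_1,e^{-1/k})>0$ for $k\ge4$ but does not show the elementary estimate); your identity $(k-1)(e^{1/k}-1)\ln\tfrac{1}{1-e^{-1/k}}>1$ and the bound via $\tfrac{k-1}{k}\ln k>1$ for $k\ge4$ are exactly the content hidden in the paper's ``$>0$ for $k\ge4$''. However, the overall route differs from the paper in a way that creates a genuine gap. The paper first uses $\partial_r G>0$ (equation \eqref{eq:dG_dr}) to push the problem to the single density $r=r_1$: since $G(k,r,Q)\le G(k,r_1,Q)$ for all $r\le r_1$, it is enough to bound $\max_{[0,e^{-1/k}]}G(k,r_1,\cdot)$, and the paper then splits into two cases according to whether the inflection $Q^\ast(k,r_1)$ lies below $e^{-1/k}$ (Fact~\ref{fact:calculus2}, needing the derivative sign only at $r_1$) or at/above it (Fact~\ref{fact:calculus1}, fully convex, needing nothing further). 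You instead work with each $r\in[r_{core},r_1]$ separately, which forces you to establish \emph{two} uniform statements across the whole window: (i) $Q^\ast(k,r)<e^{-1/k}$ for all $r$, and (ii) $\partial_Q G(k,r,e^{-1/k})>0$ for all $r$. Both of these you reduce to the claim $k\,r_{core}(k)\ge e^{(k-1)/k}$, which you do not prove — you say it ``follows from \ldots an elementary one-variable estimate'' and flag it yourself as the main obstacle. That claim is numerically true (e.g. $4\cdot0.77228=3.089>e^{3/4}\approx2.117$) and asymptotically clear, but it is not immediate: the obvious bound $-\ln(1-Q)\ge Q$ only gives $\min_Q\tfrac{-\ln(1-Q)}{Q^{k-1}}\ge1$, short of $e^{(k-1)/k}$, so a genuine argument is still required. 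The paper's $r$-monotonicity reduction sidesteps exactly this obligation: once you reduce to $r_1$, the derivative sign is needed at a single density, and the regime $Q^\ast\ge e^{-1/k}$ (which you suppress) is handled cheaply by convexity alone. So the proposal is not wrong in spirit, but as written it rests on an unproved lower bound for $r_{core}(k)$ that the paper's decomposition deliberately avoids, and it also discards the convex-only case that would have let you drop that bound for part of the parameter range.
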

\begin{proof}
    Given arbitrary $k \ge 3$ and $Q>0$, from (\ref{eq:dG_dr}), we know that $G(k,r,Q)$ is strictly increasing with $r \in [r_c,r_1]$, and thus we have $Q(k,r,Q) < Q(k,r_1,Q)$ for any $r \in [r_c,r_1]$.

    From (\ref{eq:d2G_dQ2}), when we view $G$ as a function of $Q$ with fixed $k \ge 3$ and $r>0$ there are two points of inflection:
    \begin{align*}
        Q=0 \text{\quad and \quad} Q=\left(\frac{k-2}{kr(k-1)}\right)^{1/(k-1)}
    \end{align*}
    Now, we consider the following two cases separately:
    \begin{enumerate}
        \item $\left(\frac{k-2}{kr(k-1)}\right)^{1/(k-1)} < e^{-1/k}$
        \item $e^{-1/k} \le \left(\frac{k-2}{kr(k-1)}\right)^{1/(k-1)}$
    \end{enumerate}

    \noindent \textbf{Case 1.} Assume $(\frac{k-2}{kr(k-1)})^{1/(k-1)} < e^{-1/k}$.
    Then we have
    \begin{align*}
        \frac{\partial^2}{\partial Q^2} G(k,r_1,Q)
        \begin{cases}
            >0 & \text{\quad for \;} 0 < Q < \left(\frac{k-2}{kr(k-1)}\right)^{1/(k-1)} \\
            =0 & \text{\quad for \;} Q = \left(\frac{k-2}{kr(k-1)}\right)^{1/(k-1)} \\
            <0 & \text{\quad for \;} \left(\frac{k-2}{kr(k-1)}\right)^{1/(k-1)} < Q < e^{-1/k}
        \end{cases}
    \end{align*}
    Moreover, since $G(k,r_1,e^{-1/k})=0$, we have $e^{-1/k} = 1 - \exp(-kr_1e^{-(k-1)/k})$.
    Therefore, we have
    \begin{align*}
        \left. \frac{\partial}{\partial Q} G(k,r_1,Q) \right|_{Q=e^{-1/k}}
        &= \exp(-kr_1e^{-(k-1)/k}) \cdot kr_1(k-1)e^{-(k-2)/k} - 1 \\
        &= \exp(-kr_1e^{-(k-1)/k}) \cdot kr_1e^{-(k-1)/k} \cdot (k-1)e^{1/k} - 1 \\
        &= (1-e^{-1/k}) \cdot [-\ln (1-e^{-1/k})] \cdot (k-1) e^{1/k} - 1 \\
        &> 0
    \end{align*}
    for $k \ge 4$.
    By Fact \ref{fact:calculus2}, we have $G(k,r_1,Q) < \max \{ G(k,r_1,0), G(k,r_1,e^{-1/k}) \}$ for any $G \in (0,e^{-1/k})$.
    Note that both $G(k,r_1,0)$ and $G(k,r_1,e^{-1/k})$ are less than 0.
    So, $G(k,r,Q) \le G(k,r_1,Q) < 0$ for any $k \ge 4$, $r \in [r_c,r_1]$ and $Q \in [0,e^{-1/k}]$.
    Therefore, $Q_{k,r}$ cannot be in $[0,e^{-1/k}]$ by its definition, and hence $Q_{k,r} > e^{-1/k}$.

    \bigskip

    \noindent \textbf{Case 2.} Assume $e^{-1/k} \le (\frac{k-2}{kr(k-1)})^{1/(k-1)}$.
    From (\ref{eq:d2G_dQ2}), we have
    \begin{align*}
        \frac{\partial^2}{\partial Q^2} G(k,r_1,Q) > 0
    \end{align*}
    for $Q \in [0,e^{-1/k}]$.
    By Fact \ref{fact:calculus1}, we know that for $G \in [0,e^{-1/k}]$
    \begin{align*}
        G(k,r_1,Q) \le \max \{ G(k,r_1,0), G(k,r_1,e^{-1/k}) \}.
    \end{align*}
    Since both $G(k,r_1,0)$ and $G(k,r_1,e^{-1/k})$ are less than 0, we have $G(k,r,Q) \le G(k,r_1,Q) < 0$ for any $k \ge 3$, $r \in [r_c,r_1]$ and $Q \in [0,e^{-1/k}]$.
    Therefore, $Q_{k,r}$ cannot be in $[0,e^{-1/k}]$ by its definition, and hence $Q_{k,r} > e^{-1/k}$.
\end{proof}

Now, we can complete the proof for Lemma \ref{lemma:upper_bound_of_diameter}.

\begin{proof}[Proof of Lemma \ref{lemma:upper_bound_of_diameter}.]
    Let $k \ge 4$ and $r \in (r_{core}(k),r_{sat}(k))$.
    From the fixed point equation $Q = 1-\exp(-krQ^{k-1})$, we have $krQ_{k,r}^{k-1}=-\ln(1-Q_{k,r})$.
    Note that the real-valued function $(1-x)-(1-x)\ln(1-x)$ is strictly decreasing on $[0,1)$, and $Q_{k,r} \in [0,1)$.
    By Lemma \ref{lemma:lower_bound_q_kr_case_2} and Lemma \ref{lemma:lower_bound_q_kr_case_1}, we have $Q_{k,r} > e^{-1/k}$.
    Combining all these, the result is followed by
    \begin{align*}
        \mu(k,r)
        &= \exp(-krQ_{k,r}^{k-1}) + krQ_{k,r}^{k-1} \exp(-krQ_{k,r}^{k-1}) \\
        &= (1-Q_{k,r}) - (1-Q_{k,r}) \ln (1-Q_{k,r}) \\
        &< (1-e^{-1/k}) - (1-e^{-1/k}) \ln (1-e^{-1/k}) \\
        &= \mu_u(k).
    \end{align*}
\end{proof}

\section{Terminologies}

In this section, we cover the definition of some terminologies used in Appendices \ref{sec:uc-free-steps} and \ref{sec:bp-free-steps}.

\subsection{Degree profile}
\label{sec:term-degree-profile}

The distribution of the degrees of nodes in a factor graph can be described by \emph{degree profile}, which plays an important role in our analysis.
Given a factor graph $G$, let $n_i$ be the number of variable nodes of degree $i$ for all $i$.
Similarly, let $m_i$ be the number of equations nodes of degree $i$ for all $i$.
Furthermore, let $\widehat{n}$ be the total number of variable nodes and $\widehat{m}$ be the total number of equation nodes.
It is obvious that $\widehat{n}=n$ and $\widehat{m}=m$ for the factor graph of a $k$-XORSAT instance, but keep in mind that the total number $\widehat{n}$ of variable nodes and the total number $\widehat{m}$ of equation nodes decrease during the process of the algorithm.
The \textbf{degree distribution of variable nodes} is given by the sequence $\Lambda=\{\Lambda_i\}_{i\ge 0}$, where $\Lambda_i = n_i/\widehat{n}$, and the \textbf{degree distribution of equation nodes} is given by the sequence $P=\{P_i\}_{i\ge 0}$, where $P_i = m_i/\widehat{m}$.
Then, the \textbf{degree profile} of the factor graph $G$ is given by $(\Lambda,P)$.
Sometimes, the degree distributions $\Lambda$ and $P$ can also be represented by the polynomials $\Lambda(x)=\sum_{i\ge 0}\Lambda_i x^i$ and $P(x)=\sum_{i\ge 0}P_i x^i$, respectively.
With this representation, we can write $\sum_{i \ge 1} i\Lambda_i = \Lambda'(1)$ and $\sum_{i \ge 1}i P_i = P'(1)$.

Given the factor graph $G$ of a random $k$-XORSAT instance $\mathbf{\Phi} \sim \mathbf{\Phi}_k(n,rn)$, $G$ is uniformly distributed over the ensemble of $k$-uniform factor graph $\mathbb{G}_n(k,rn)$.
It is clear that the degree distribution of equation nodes is given by $P(x)=x^k$.
The degree distribution of variable nodes is also known to converge in distribution to independent Poisson random variables with mean $kr$.
Precisely speaking, w.h.p. for any $0 \le i \le rn$, we have
\begin{align*}
    \Lambda_i = e^{-kr}\frac{(kr)^i}{i!} + o(1).
\end{align*}

We can also describe the degree profile from \emph{edge perspective}.
The \textbf{edge perspective degree distribution of variable nodes} is given by $\lambda=\{\lambda_i\}_{i\ge 1}$, where $\lambda_i=i\Lambda_i/\sum_{j\ge 1}j\Lambda_j$, and the \textbf{edge perspective degree distribution of equation nodes} is given by $\rho=\{\rho_i\}_{i\ge 1}$, where $\rho_i=iP_i/\sum_{j\ge 1}jP_j$.
Then, the \textbf{edge perspective degree profile} of the factor graph $G$ is given by $(\lambda,\rho)$.
Similar to $\Lambda$ and $P$, the edge perspective degree distribution $\lambda$ and $\rho$ can be written as the polynomials  $\lambda(x)=\sum_{i\ge 1}\lambda_i x^{i-1}$ and $\rho(x)=\sum_{i\ge 1}\rho_i x^{i-1}$, respectively.

The ensemble of factor graphs with prescribed degree profile is called the \textbf{ensemble of degree constrained factor graphs} $\mathbb{D}_n(\Lambda,P)$, which is the set of all factor graphs of $n$ variable nodes with degree profile $(\Lambda,P)$ with the uniform distribution.
Note that the number $m$ of the function nodes is restricted to satisfy the equation $\Lambda'(1)n=P'(1)m$.

\subsection{Local tree-like structure}

In a factor graph $G$, we can define the \textbf{length} of a path $(v_1,v_2,...,v_l)$ to be the number of edges in the path.
Then, the \textbf{distance} between two nodes is defined to be the length of the shortest path between them.
By convention, we set the length to be $+\infty$ between two nodes if there is no path connecting them.
With this notion of \emph{distance}, we can define the \emph{local neighborhood} of a node.
Given a factor graph $G$, the \textbf{local neighborhood} (or simply \textbf{neighborhood}) $B_G(x,R)$ of a node $x$ of radius $R \ge 0$ is defined to be the subgraph of $G$ induced by all nodes of distances at most $R$ from $x$ and all edges between those nodes.
The local neighborhood $B_G(x,R)$ also represents an XORSAT instance with the variables and clauses inside the neighborhood.

The local neighborhood of a variable in a random $k$-uniform factor graph looks like a tree.
In particular, it looks similar to a \emph{random $R$-generation tree}.
For any non-negative even number $R \ge 0$, the \textbf{$R$-generation tree ensemble} $\mathbb{T}_R(\Lambda,P)$ of a given degree profile $(\Lambda,P)$ is defined as follows.
When $R=0$, the ensemble contains only one element, a single isolated node, and call it the \textbf{variable node of the generation} 0.
Assume $R>0$.
We first generate a tree $T$ from the $(R-2)$-generation tree ensemble $\mathbb{T}_{R-2}(\Lambda,P)$.
For each variable node $x$ of generation $R-2$, we draw an independent integer $i \ge 1$ distributed according to $\lambda_i$ (or $\Lambda_i$ if $R=2$), and add $i-1$ function nodes, which are connected to $x$ as its children.
Then, for each of these function nodes $a$, we draw an independent integer $j \ge 1$ distributed according to $\rho_j$, and add $j-1$ variable nodes, which are connected to $a$ as its children and called the \textbf{variable nodes of the generation $R$}.

In particular, Mézard and Montanari \cite{mezardInformationPhysicsComputation2009} shows that the local structure of a variable node of a random factor graph from $\mathbb{D}_n(\Lambda,P)$ and $\mathbb{G}_n(k,m)$ converges to this tree ensemble.
The more details of the following theorem can be found in \cite{mezardInformationPhysicsComputation2009}.

\begin{theorem}
    \label{thm:tree_like_neighborhood}
    Let $(\Lambda,P)$ be a fixed degree profile, $G$ be a random factor graph in the $\mathbb{D}_n(\Lambda,P)$ ensemble (and $\mathbb{G}_n(k,m)$ respectively), $x$ be a variable node chosen uniformly at random from $G$, and $R$ be a non-negative even number.
    Then, the local neighborhood $B_G(x,R)$ of the factor graph $G$ converges in distribution to $\mathbb{T}_{R}(\Lambda,P)$ (and $\mathbb{T}_R(e^{kr(x-1)},x^k)$ respectively) as $n \rightarrow \infty$.
\end{theorem}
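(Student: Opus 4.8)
The plan is to prove this by the standard exploration argument on the \emph{configuration model}, which underlies both ensembles $\mathbb{D}_n(\Lambda,P)$ and $\mathbb{G}_n(k,m)$. First I would represent a random factor graph by attaching to each variable node of degree $d$ a set of $d$ labelled half-edges (and, according to $P$, half-edges to each equation node; in the $\mathbb{G}_n(k,m)$ case exactly $k$ per equation node), and then taking a uniformly random perfect matching between the variable half-edges and the equation half-edges. Since "converges in distribution" here means convergence in the local topology on rooted graphs, it suffices to show that for every fixed finite rooted tree $T$ in the support of $\mathbb{T}_R(\Lambda,P)$ we have $\proba{B_G(x,R) \cong T} \to \proba{\mathbb{T}_R(\Lambda,P) = T}$ as $n \to \infty$, where $x$ is the uniformly random root.

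Next I would run a breadth-first exploration of $B_G(x,R)$, revealing the matching one half-edge at a time: starting from $x$, each time an unmatched half-edge is processed it is paired with a uniformly random half-edge among those not yet used on the opposite side. Two facts must be established. (i) With probability $1-o(1)$ the exploration up to radius $R$ never pairs a half-edge with a node already discovered, so the explored subgraph is a tree. The key point is that $R$ is a fixed constant and the degree distributions have finite moments --- for $\mathbb{G}_n(k,m)$ the variable degrees are asymptotically $\mathrm{Poisson}(kr)$, which has all moments finite, and equation degrees are exactly $k$ --- so the number of nodes and half-edges revealed within radius $R$ is $O_P(1)$, while the total number of half-edges on each side is $\Theta(n)$; hence at each of the $O_P(1)$ matching steps the probability of hitting an already-used node is $O(1/n)$, and a union bound gives total collision probability $o(1)$. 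A minor technical point is that $\Lambda$ has unbounded support, so one should first truncate degrees at some slowly growing $\omega(n)$, or invoke concentration to bound the maximum degree appearing in the neighborhood; this does not affect the limit. (ii) Conditioned on the tree property, the degrees encountered have the right law.

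For (ii), I would track the degree of each freshly revealed node. The root $x$ is a uniformly random variable node, so its degree converges to $\Lambda$ (for $\mathbb{G}_n(k,m)$, to $\mathrm{Poisson}(kr)$, whose pgf is $e^{kr(x-1)}$). When the exploration follows a matched edge to a previously unseen variable node, that node is reached through one of its half-edges chosen essentially uniformly among all remaining variable half-edges; a size-biasing computation shows its degree converges to the edge-perspective distribution $\lambda$, so it contributes $(\deg-1)$ further equation children, exactly as in the recursive construction of $\mathbb{T}_R(\Lambda,P)$; symmetrically an unseen equation node reached this way has degree distributed as $\rho$ and contributes $(\deg-1)$ further variable children. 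In the $\mathbb{G}_n(k,m)$ case equation nodes all have degree $k$ (so $\rho(x)=x^{k-1}$, giving exactly $k-1$ variable children each), and the size-biased version of $\mathrm{Poisson}(kr)$ is $1+\mathrm{Poisson}(kr)$, so each non-root variable node has a $\mathrm{Poisson}(kr)$-distributed number of further equation children --- precisely $\mathbb{T}_R(e^{kr(x-1)}, x^k)$. Multiplying the asymptotically independent (by the tree property) choices over the finitely many nodes within radius $R$ yields the product formula for $\proba{B_G(x,R) \cong T}$, matching $\proba{\mathbb{T}_R(\Lambda,P)=T}$.

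The main obstacle is really step (i): making rigorous that the explored neighborhood is a tree with probability $1-o(1)$, uniformly over the random shape of the exploration, while handling the unbounded support of the Poisson degree distribution. Once the tree property and the local degree laws are in hand, the identification with $\mathbb{T}_R$ is bookkeeping. A cleaner alternative --- and the route taken in \cite{mezardInformationPhysicsComputation2009}, from which we quote this result --- is to invoke the general local-weak-convergence machinery for configuration models; I would cite that and include the exploration sketch above only for completeness.
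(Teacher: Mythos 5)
The paper does not prove this theorem: it quotes it as a known fact from M\'ezard and Montanari \cite{mezardInformationPhysicsComputation2009} and explicitly defers to that reference for details, so there is no in-paper proof to compare your sketch against. That said, your configuration-model exploration argument is the standard route to such local-weak-convergence statements, and the outline is sound. Step (i), bounding the collision probability by $O(1/n)$ per matching step over $O_P(1)$ steps, and step (ii), tracking size-biased degrees along the exploration, together give convergence of the rooted $R$-ball to the $R$-generation tree ensemble $\mathbb{T}_R(\Lambda,P)$; you also correctly identify the main technical wrinkle (unbounded support of $\Lambda$, handled by truncation or by bounding the maximum degree in the explored ball) and correctly compute the offspring laws in the $\mathbb{G}_n(k,m)$ case, namely that the variable degree is asymptotically $\mathrm{Poisson}(kr)$, whose size-biased-minus-one version is again $\mathrm{Poisson}(kr)$, while equation nodes contribute exactly $k-1$ further variable children, matching $\mathbb{T}_R(e^{kr(x-1)},x^k)$. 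Your closing remark --- that the cleaner route is simply to cite the general machinery in \cite{mezardInformationPhysicsComputation2009} --- is in fact what the paper does.
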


\section{Proof of Lemma \ref{lemma:uc-free}: Finding the number of free steps in \texttt{DEC\textsubscript{UC}}}
\label{sec:uc-free-steps}

In this section, we will show the calculation for finding the number of free steps run by the $\texttt{UC}$-decimation \texttt{DEC\textsubscript{UC}} on the random $k$-XORSAT instance $\mathbf{\Phi} \sim \mathbf{\Phi}_k(n,rn)$, which implies that \texttt{DEC\textsubscript{UC}} is $w_1(k,r)$-free.
To be specific, we will show the proof of Lemma \ref{lemma:uc-free}, by using the Wormald's method of differential equations \cite{wormaldDifferentialEquationsRandom1995}.

We start from approximating the degree profile of the factor graph of the remaining instance after $t$ iterations.
The degree profiles can help us to calculate the probability of the next iteration being a free step.
Note that the notations for describing the degree profile are defined in Appendix \ref{sec:term-degree-profile}.
For each of those notations, we append "$(t)$" to them to specify the values right after $t$ iterations.

\begin{lemma}
    \label{lemma:degree-profile}
    For any local rule $\tau$, after $t$ iterations of \texttt{DEC\textsubscript{$\tau$}}, w.h.p.the number of variables of degree $i$ is given by
    \begin{align}
        \label{eq:n_i}
        n_i(t) &= \binom{rn}{i} \left( \frac{k}{n} \right)^i \left( 1-\frac{k}{n} \right)^{rn-i} (n-t) + o(n), \text{\quad for $i=0,1,...,m$},
    \end{align}
    and the number of equations of degree $i$ is given by
    \begin{align}
        \label{eq:m_i}
        m_i(t) &= \binom{k}{i} \left(\frac{t}{n}\right)^{k-i} \left(1-\frac{t}{n}\right)^i rn + o(n), \text{\quad for $i=1,2,...,k$}.
    \end{align}
\end{lemma}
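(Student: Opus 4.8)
The plan is to track the degree profile of the shrinking factor graph via Wormald's method of differential equations \cite{wormaldDifferentialEquationsRandom1995}, after first making a structural observation that renders the one-step drifts transparent. The observation is that the evolution of the \emph{factor graph} under \texttt{DEC$_\tau$} does not depend on the local rule $\tau$ at all: step (ii) of the update only alters the right-hand-side vector $b$, whereas steps (i) and (iii) — delete the selected variable node, then delete any equation node that has become isolated — depend only on the current graph and on the choice of $x^*$. Since $x^*$ is uniform among the remaining variables, after $t$ iterations the set of surviving variable nodes is a uniformly random $(n-t)$-subset of $[n]$, and the surviving graph is the subgraph of the factor graph of $\mathbf{\Phi}$ induced by that subset together with every equation still retaining at least one of its variables. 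Crucially, a surviving variable is still incident to \emph{all} of its original equations — each such equation contains that variable, hence has not been deleted — so the degree of a surviving variable never changes; only equation degrees shrink. Thus the process is exactly ``delete variables one at a time, uniformly at random, from a random $k$-uniform factor graph,'' independently of $\tau$.

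First I would write the one-step drift for $n_i$. Deleting $x^*$ decreases $n_i$ by one precisely when $\deg(x^*)=i$ and leaves every other count $n_j$ unchanged (a surviving variable keeps all its equations, and the deleted isolated equation was incident only to $x^*$), so $|n_i(t+1)-n_i(t)|\le 1$ and $\mathbb{E}[n_i(t+1)-n_i(t)\mid\mathcal{F}_t]=-n_i(t)/(n-t)$. With $\hat n_i(s)=n_i(sn)/n$ this is the differential equation $\hat n_i'(s)=-\hat n_i(s)/(1-s)$, solved by $\hat n_i(s)=\hat n_i(0)(1-s)$. The initial count satisfies $n_i(0)=\binom{rn}{i}(k/n)^i(1-k/n)^{rn-i}\,n+o(n)$ w.h.p., since the degree of a fixed variable is exactly $\mathrm{Binomial}(rn,k/n)$ — the $rn$ clauses are i.i.d.\ and each contains a given variable with probability $\binom{n-1}{k-1}/\binom{n}{k}=k/n$ — and $n_i(0)$ concentrates around its mean (McDiarmid on the $rn$ clause choices, or Chebyshev). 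Substituting $s=t/n$ recovers (\ref{eq:n_i}).

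Next, for $m_i$: deleting $x^*$ turns each incident equation of current degree $i$ into one of degree $i-1$, so $m_i(t+1)-m_i(t)=-\#\{\text{incident equations of degree }i\}+\#\{\text{incident equations of degree }i+1\}$. Double counting gives $\sum_v\#\{\text{degree-}i\text{ equations incident to }v\}=i\,m_i(t)$, so averaging over the $n-t$ remaining variables yields $\mathbb{E}[m_i(t+1)-m_i(t)\mid\mathcal{F}_t]=\big(-i\,m_i(t)+(i+1)\,m_{i+1}(t)\big)/(n-t)$, with $m_{k+1}\equiv 0$. The system $\hat m_i'(s)=\big(-i\,\hat m_i(s)+(i+1)\,\hat m_{i+1}(s)\big)/(1-s)$ with $\hat m_k(0)=r$ and $\hat m_i(0)=0$ for $i<k$ (all equations start at degree $k$, matching $P(x)=x^k$ of Appendix~\ref{sec:term-degree-profile}) is solved by $\hat m_i(s)=r\binom{k}{i}s^{k-i}(1-s)^i$, as one checks directly using $(i+1)\binom{k}{i+1}=(k-i)\binom{k}{i}$; this is (\ref{eq:m_i}).

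It remains to verify the hypotheses of Wormald's theorem. The $n_i$-increments are bounded by $1$; the $m_i$-increments are bounded by $\deg(x^*)\le\Delta$, the maximum variable degree, and a first-moment bound gives $\Delta=O(\log n/\log\log n)=n^{o(1)}$ w.h.p., after conditioning on which all increments are $n^{o(1)}$ and the drifts are Lipschitz in the rescaled state away from $s=1$. Two bookkeeping points need attention: there are infinitely many variable-degree classes, which I would handle by tracking $n_0,\dots,n_L$ for a large constant $L$ and bounding the tail $\sum_{i>L}n_i(t)\le\sum_{i>L}n_i(0)=o(n)$ for $L$ large (Binomial/Poisson tail, cf.\ Appendix~\ref{sec:term-degree-profile}); and the drift denominator $n-t$ degenerates near $t=n$, which I would handle by running the approximation only up to $t=(1-\eta)n$ and noting that the remaining $\eta n$ steps move every $n_i$ and $m_i$ by at most $\eta n=o(n)$ as $\eta\to 0$. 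The main obstacle is not conceptual but purely one of bookkeeping: arranging a single $o(n)$ error term to absorb the truncation, the near-boundary slack, and the Wormald approximation error simultaneously and uniformly in $t\in\{0,\dots,n-1\}$ and in $i$. The $\tau$-independence observation of the first paragraph does essentially all the real work.
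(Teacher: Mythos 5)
Your proof is correct and follows the same overall route as the paper (Wormald's ODE method, the same one-step drifts, the same ODE system and the same closed-form solution). The differences are in how the drifts are derived and in how much bookkeeping is made explicit, and I think both are worth noting.

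For the variable-degree counts $n_i$, your key structural observation — that the factor-graph evolution under \texttt{DEC$_\tau$} is independent of $\tau$, that the surviving variables at time $t$ form a uniformly random $(n-t)$-subset, and that a surviving variable retains \emph{all} of its original incident equations (so its degree never changes) — makes the drift $-n_i(t)/(n-t)$ immediate and also makes the solution $n_i(t)=n_i(0)\,(n-t)/n+o(n)$ transparent without even running the ODE. The paper states the same drift but does not make the ``surviving variables keep their degree'' observation explicit. For the equation-degree counts $m_i$, your derivation is cleaner than the paper's: you obtain the exact conditional expectation $\mathbb{E}[m_i(t+1)-m_i(t)\mid\mathcal{F}_t]=\bigl(-i\,m_i(t)+(i+1)m_{i+1}(t)\bigr)/(n-t)$ by a one-line double count, whereas the paper routes through the edge-perspective degree distribution $\rho$ and cites the local tree-like structure (its Theorem on convergence to the tree ensemble). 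In fact, as your calculation shows, that citation is superfluous at this step: after the algebraic cancellation $\sum_j j n_j = \sum_l l m_l$ the paper's expression collapses to exactly the same exact drift, so no tree approximation is being used — the drift identity holds deterministically given $\mathcal{F}_t$. You also spell out the hypotheses of Wormald's theorem (bounded increments via the $n^{o(1)}$ maximum degree, Lipschitz drifts away from $s=1$, truncation of the infinitely many variable-degree classes, and stopping at $t=(1-\eta)n$), which the paper leaves implicit. Net effect: same theorem, same ODEs, same answer, but your derivation of the drifts is more elementary and self-contained and your verification of Wormald's hypotheses is more careful.
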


\begin{proof}
    We are going to track the changes in the numbers of variable nodes and equation nodes of different degrees throughout the process of the algorithm, by using the method of differential equations from \cite{wormaldDifferentialEquationsRandom1995}.

    To apply the method of differential equations, we need to compute the expected changes in the numbers of variable nodes and equation nodes of different degrees.
    Note that exactly one variable node is removed in each iteration, so the total number $\widehat{n}(t)$ of variable nodes is $n-t$ after $t$ iterations.

    Suppose we are at time $t \ge 0$ (i.e. after $t$ iterations).
    The selected variable node $x_{s(t)}$ is going to be removed from the factor graph $G_{\mathbf{\Phi_t}}$.
    The selected variable node $x_{s(t)}$ is of degree $i$ with probability $n_i(t)/\widehat{n}(t)$.
    So, for $i=0,1,2,...,m$, the expected change in the number $n_i$ of variable nodes of degree $i$ is $-1\times n_i(t)/\widehat{n}(t)$.
    This yields
    \begin{align}
        \expected{n_i(t+1)-n_i(t)} &= -\frac{n_i(t)}{\widehat{n}(t)} = -\frac{n_i(t)}{n-t} \label{eqt:change_in_ni}
    \end{align}
    for $i=0,1,...,m$.
    By Theorem \ref{thm:tree_like_neighborhood}, the local neighborhood $B_{\Phi_t}(x_{s(t)},R)$ of the selected variable $x_{s(t)}$ of radius $R$ in $G_{\mathbf{\Phi_t}}$ converges in distribution to $\mathbb{T}_R(\Lambda',P')$, where $\Lambda'_i=n_i(t)/\widehat{n}(t)$ for all $i$ and $P'_j=m_j(t)/\widehat{m}(t)$ for all $j$.
    Therefore, with probability $n_j/\widehat{n}$ for $j=0,1,2,...,m$, there are $j$ equation nodes directly adjacent to the selected variable node.
    For those equation nodes, the degree of each equation node is $l$ with probability $\rho_l(t)$, and decreases by 1 due to the removal of the selected variable node.
    In the other words, the expected changes in the numbers of equation nodes of different degree are given by
    \begin{align}
        \expected{m_k(t+1)-m_k(t)} &= \sum_{j=1}^m j\frac{n_j(t)}{\widehat{n}(t)} \cdot (\rho_k(t) \cdot (-1)) \nonumber \\
        &= \left( \sum_{j=1}^m j\frac{n_j(t)}{n-t} \right) \frac{-km_k(t)}{\sum_{j=1}^k jm_j(t)} \nonumber \\
        &= \left( \sum_{j=1}^m j\frac{n_j(t)}{n-t} \right) \frac{-km_k(t)}{\sum_{j=1}^m jn_j(t)} \nonumber \\
        &= \frac{-km_k(t)}{n-t} \label{eqt:change_in_mk}
    \end{align}
    and
    \begin{align}
        \expected{m_i(t+1)-m_i(t)} &= \sum_{j=1}^m j\frac{n_j}{\widehat{n}} \cdot [\rho_{i+1}(t)\cdot(+1)+\rho_i(t)\cdot(-1)] \nonumber \\
        &= \left( \sum_{j=1}^m j\frac{n_j(t)}{n-t} \right) \frac{(i+1)m_{i+1}(t)-im_{i}(t)}{\sum_{j=1}^k jm_j(t)} \nonumber \\
        &= \left( \sum_{j=1}^m j\frac{n_j(t)}{n-t} \right) \frac{(i+1)m_{i+1}(t)-im_{i}(t)}{\sum_{j=1}^m jn_j(t)} \nonumber \\
        &= \frac{(i+1)m_{i+1}(t)-im_{i}(t)}{n-t} \text{\quad for\;} i=1,2,...,k-1. \label{eqt:change_in_mi}
    \end{align}
    In the above calculation, we use the fact that $\sum_{j=1}^k jm_j(t) = \sum_{j=1}^m jn_j(t)$, which holds because both $\sum_{j=1}^k jm_j(t)$ and $\sum_{j=1}^m jn_j(t)$ are equal to the number of edges at time $t$.

    Let $x=t/n$ be the normalized time.
    Furthermore, let
    \begin{align*}
        y_i &= y_i(x) = m_i(xn)/n \text{\quad for\;} i=1,2,3,...,k, \\
        z_i &= z_i(x) = n_i(xn)/n \text{\quad for\;} i=0,1,2,...,m, \text{\quad and} \\
        z &= z(x)=\widehat{n}(xn)/n.
    \end{align*}
    Then, the equations (\ref{eqt:change_in_ni}), (\ref{eqt:change_in_mk}) and (\ref{eqt:change_in_mi}) suggest the following different equations:
    \begin{align}
        \frac{dz_i}{dx} &= -\frac{z_i}{1-x} &\text{for\;\;} i=0,1,2,...,m  \label{eq:differential_equation1} \\
        \frac{dy_k}{dx} &= \frac{-ky_k}{1-x} \label{eq:differential_equation2} \\
        \frac{dy_i}{dx} &= \frac{(i+1)y_{i+1}-iy_{i}}{1-x} &\text{for\;\;} i=1,2,3,...,k-1 \label{eq:differential_equation3}
    \end{align}

    At time $t=0$ (i.e. before running the algorithm), the number $n_i(0)$ of variable nodes of degree $i$ is $\binom{rn}{i} \left( \frac{k}{n} \right)^i \left( 1-\frac{k}{n} \right)^{rn-i} n + o(n)$ for all $i$.
    Since all $rn$ equation nodes are of degree $k$ at time $t=0$, $m_k(0)=rn$ and $m_i(0)=0$ for all $i$.
    These suggest the following initial conditions for those different equations above:
    \begin{align}
        z_i(0) &= \binom{rn}{i} \left( \frac{k}{n} \right)^i \left( 1-\frac{k}{n} \right)^{rn-i} &\text{for\;\;} i=0,1,2,...,m \label{eq:initial_condition1} \\
        y_k(0) &= r \label{eq:initial_condition2} \\
        y_i(0) &= 0 &\text{for\;\;} i=1,2,3,...,k-1 \label{eq:initial_condition3}
    \end{align}
    The solution of the different equations with the initial conditions above is given by
    \begin{align*}
        z_i(x) &= \binom{rn}{i} \left( \frac{k}{n} \right)^i \left( 1-\frac{k}{n} \right)^{rn-i} (1-x) &\text{for\;\;} i=0,1,2,...,m \\
        y_i(x) &= r \binom{k}{i} x^{k-1}(1-x)^i &\text{for\;\;} i=1,2,3,...,k
    \end{align*}
    for any $0 \le x \le 1$. (The steps of solving the differential equations are shown at Appendix \ref{sec:solving_differential_equations}.)
    By the Wormald's method of differential equations, at time $t \ge 0$, w.h.p. the number $n_i(t)$ of variable nodes of degree $i$ is equal to $z_i(t/n)n + o(n)$ for all $i$, and the number $m_i(t)$ of equation nodes of degree $i$ is equal to $y_i(t/n)n + o(n)$.
\end{proof}

In an iteration of the \texttt{UC}-decimation algorithm \texttt{DEC\textsubscript{UC}}, the local rule \texttt{UC} gives $1/2$ when there is no unit clause containing the selected variable $x_{s(t)}$.
Therefore, an iteration is a free step if and only if there is no equation node of degree 1 adjacent to the selected variable node $x_{s(t)}$.
With the degree profiles from Lemma \ref{lemma:degree-profile}, we can calculate the probability of an iteration being a free step, and thus approximate the total number of free steps.

\begin{proof}[Proof of Lemma \ref{lemma:uc-free}]
    We track the number of free steps after $t$ iterations, by using the Wormald's method of differential equations.
    Let $q(t)$ be the number of free steps after $t$ iterations.
    Note that $q(0)=0$.
    At time $t \ge 0$, if there exists at least one equation node of degree 1 adjacent to the selected variable node $x_{s(t)}$, then the $(t+1)$-st iteration is not a free step and $q(t+1)=q(t)$.
    Conversely, if all equation nodes adjacent to the selected variable node $x_{s(t)}$ are of degree not equal to 1, then the $(t+1)$-st iteration is a free step and $q(t+1)=q(t)+1$.
    The probability that all equation nodes adjacent to the selected variable node are of degree not equal to 1 is given by $\sum_{j=0}^{m} (n_j(t)/\widehat{n}(t)) (1-\rho_1(t))^i$.
    Therefore, we have
    \begin{align}
        \expected{q(t+1)-q(t)} &= \sum_{i=0}^{m} \frac{n_i(t)}{\widehat{n}(t)} (1-\rho_1(t))^i \cdot (+1) \nonumber \\
        &= \sum_{i=0}^{rn} \frac{n_i(t)}{n-t} \left( 1-\frac{m_1(t)}{\sum_{j=1}^{k} jm_j(t)} \right)^i.
    \end{align}
    From (\ref{eq:m_i}) and the polynomial identity $\sum_{i=1}^n i \binom{n}{i} (1-x)^i x^{n-i} = n(1-x)$, we can simplify $\sum_{j=1}^{k} jm_j(t)$ by
    \begin{align*}
        \sum_{j=1}^{k} jm_j(t)
        &= \sum_{j=1}^{k} j \binom{k}{i} \left(\frac{t}{n}\right)^{k-i} \left(1-\frac{t}{n}\right)^i rn + o(n) \\
        &= \left( k \cdot \left( 1-\frac{t}{n} \right) \right) rn + o(n) \\
        &= kr(n-t) + o(n).
    \end{align*}
    Then, by (\ref{eq:n_i}), (\ref{eq:m_i}) and the fact that $\lim_{n \rightarrow \infty} (1-\frac{kx^{k-1}}{n})^{rn} = e^{-krx^{k-1}}$, we have
    \begin{align}
        \expected{q(t+1)-q(t)}
        &= \sum_{i=0}^{m} \frac{n_i(t)}{n-t} \left( 1-\frac{\binom{k}{1} \left(\frac{t}{n}\right)^{k-1} \left(1-\frac{t}{n}\right) rn}{kr(n-t)} \right)^i + o(1) \nonumber \\
        &= \sum_{i=0}^{m} \frac{n_i(t)}{n-t} \left( 1-\left(\frac{t}{n}\right)^{k-1} \right)^i + o(1) \nonumber \\
        &= \sum_{i=0}^{m} n_i(0) \left( 1-\left(\frac{t}{n}\right)^{k-1} \right)^i + o(1) \nonumber \\
        &= \sum_{i=0}^{m} \binom{rn}{i} \left(\frac{k}{n}\right)^i \left(1-\frac{k}{n}\right)^{rn-i} \left( 1-\left(\frac{t}{n}\right)^{k-1} \right)^i + o(1) \nonumber \\
        &= \sum_{i=0}^{m} \binom{rn}{i} \left[\frac{k}{n} - \frac{k}{n}\left(\frac{t}{n}\right)^{k-1} \right]^i \left(1-\frac{k}{n}\right)^{rn-i} + o(1) \nonumber \\
        &= \left[ \frac{k}{n} - \frac{k}{n}\left(\frac{t}{n}\right)^{k-1} + 1 -\frac{k}{n} \right]^{rn} + o(1) \nonumber \\
        &= \left[ 1 - \frac{k}{n}\left(\frac{t}{n}\right)^{k-1} \right]^{rn} + o(1) \nonumber \\
        &= e^{-kr(t/n)^{k-1}} + o(1) \label{eqt:change_in_q}
    \end{align}
    By letting $x=t/n$ and $w = w(x) = q(xn)/n$, the equation (\ref{eqt:change_in_q}) and the fact that $q(0)=0$ suggest the differential equation
    \begin{align}
        \frac{dw}{dx} &= e^{-krx^{k-1}}
    \end{align}
    and the initial condition
    \begin{align}
        w(0)=0.
    \end{align}
    By the Wormald's method of differential equations, w.h.p. the number $q(t)$ of free steps after $t$ iterations is equal to $w(t/n)n + o(n)$.

    By the second fundamental theorem of calculus, we can write them as a definite integral
    \begin{align*}
        w(x) - w(0) = \int_{0}^{x} e^{-krt^{k-1}}dt
    \end{align*}
    Note that $\frac{d}{dt} (krt^{k-1}) = kr(k-1)t^{k-2}$.
    Using integration by substitution, we have
    \begin{align*}
        w(x)
        &= 0 + \int_{0}^{x} e^{-krt^{k-1}} dt \\
        &= \int_{0}^{x} e^{-krx^{k-1}} \frac{1}{kr(k-1)}t^{2-k} \cdot kr(k-1)t^{k-2} dt \\
        &= \int_{0}^{krx^{k-1}} e^{-krt^{k-1}} \frac{1}{kr(k-1)}t^{2-k} d(krt^{k-1}) \\
        &= \frac{(kr)^{\frac{1}{1-k}}}{k-1} \int_{0}^{krx^{k-1}} e^{-krt^{k-1}} (kr)^{\frac{2-k}{k-1}} t^{2-k} d(krt^{k-1}) \\
        &= \frac{(kr)^{\frac{1}{1-k}} }{k-1} \int_{0}^{krx^{k-1}} (krt^{k-1})^{\frac{1}{k-1}-1} e^{-krt^{k-1}} d(krt^{k-1}) \\
        &= \frac{(kr)^{\frac{1}{1-k}} }{k-1} \; \gamma \left( \frac{1}{k-1}, krx^{k-1} \right)
    \end{align*}
    where $\gamma$ is the lower incomplete gamma function defined by
    \begin{align*}
        \gamma(a,x) \equiv \int_{0}^{x} t^{a-1} e^{-t} dt.
    \end{align*}
    Hence, w.h.p. the total number of free steps run by the algorithm is $w_1(k,r)n + o(n)$, where
    \begin{align*}
        w_1(k,r) = \frac{(kr)^{\frac{1}{1-k}} }{k-1} \; \gamma \left( \frac{1}{k-1}, kr \right) = w(1).
    \end{align*}
    Hence, \texttt{DEC\textsubscript{UC}} is $w_1(k,r)$-free.
\end{proof}

\subsection{Solving differential equations (\ref{eq:differential_equation1}), (\ref{eq:differential_equation2}) and (\ref{eq:differential_equation3})}
\label{sec:solving_differential_equations}

In this section, we show how to solve the differential equations (\ref{eq:differential_equation1}), (\ref{eq:differential_equation2}) and (\ref{eq:differential_equation3}) with the initial conditions (\ref{eq:initial_condition1}), (\ref{eq:initial_condition2}) and (\ref{eq:initial_condition3}).
For $i=0,1,...,m$, the differential equations in (\ref{eq:differential_equation1}) can be written as
\begin{align*}
    \frac{1}{z_i} dz_i = \frac{-1}{1-x} dx.
\end{align*}
By separation of variable in integration and the initial condition (\ref{eq:initial_condition1}), we have
\begin{align}
    z_i(x) &= (1-x)z_i(0) \nonumber \\
    &= \binom{rn}{i} \left(\frac{k}{n}\right)^i \left(1-\frac{k}{n}\right)^{rn-j} (1-x) \text{\quad for all\;} i=0,1,...,m.
\end{align}

Next, we are going to use induction from $k$ to $1$ to prove that
\begin{align*}
    y_i(x) = r \binom{k}{i} x^{k-i}(1-x)^i
\end{align*}
First, the differential equation (\ref{eq:differential_equation2}) can be written as
\begin{align*}
    \frac{1}{y_k} dy_k = k\cdot \frac{-1}{1-x} dx
\end{align*}
By separation of variable in integration and the initial condition (\ref{eq:initial_condition2}), we have\
\begin{align*}
    y_k(x) = (1-x)^k y_k(0) = r (1-x)^k
\end{align*}
Now we assume $y_{i+1}(x) = r \binom{k}{i+1} x^{k-i-1}(1-x)^{i+1}$ for some $1 \le i < k$.
From the differential equations in (\ref{eq:differential_equation3}), we have
\begin{align*}
    \frac{dy_i}{dx} + \frac{i}{1-x}y_i = \frac{i+1}{1-x}y_{i+1}
\end{align*}
By applying the standard method for the first order linear differential equations and the induction assumption, we have 
\begin{align*}
    y_i(x) 
    &= \frac{ \int \mu(x) \frac{i+1}{1-x}y_{i+1} dx + c }{\mu(x)} \\
    &= \frac{ \int \frac{1}{(1-x)^i} \frac{i+1}{1-x} r \binom{k}{i+1} x^{k-i-1}(1-x)^{i+1} dx + c }{ \frac{1}{(1-x)^i}} \\
    &= \frac{i+1}{k-i} r \binom{k}{i+1} x^{k-1} (1-x)^i + c(1-x)^i \\
    &= r \binom{k}{i} x^{k-1} (1-x)^i + c(1-x)^i
\end{align*}
where $c$ is a constant and $\mu(x) = \exp \left( \int \frac{i}{1-x} dx \right) = \frac{1}{(1-x)^i}$.
Since $1 \le i < k$, $y_i(0)=0$ and thus $c=0$. Hence, we have
\begin{align}
    y_i(x) = r \binom{k}{i} x^{k-i} (1-x)^i,
\end{align}
which completes the induction.

\section{Proof of Lemma \ref{lemma:bp-free}: Finding the number of free steps in \texttt{DEC$_\tau$}}
\label{sec:bp-free-steps}

Now we assume the local rule $\tau$ can give the marginal probabilities of variables for any factor graph that is a tree.
We then show the calculation for the number of free steps run by the $\tau$-decimation algorithm \texttt{DEC$_\tau$} on the random $k$-XORSAT instance $\mathbf{\Phi} \sim \mathbf{\Phi}_k(n,rn)$.
To be specific, we show the proof of Lemma \ref{lemma:bp-free}, by using the Wormald's method of differential equations \cite{wormaldDifferentialEquationMethod1999}.

In this section, the approach is slightly different from the proof in Appendix \ref{sec:uc-free-steps} since we do not have the details on how the local rule $\tau$ calculates its output value.
However, since we assume that the local rule $\tau$ can give the exact marginals, we can determine the value output by the local rule $\tau$, by studying the structure of the local neighborhood $B_{\mathbf{\Phi_t}}(x_{s(t)},R)$ at time $t \ge 0$.

Assume we are at time $t \ge 0$.
We consider the local neighborhood $B_{\mathbf{\Phi_t}}(x_{s(t)},R)$ of the selected variable $x_{s(t)}$ of radius $R \ge 0$.
By Theorem \ref{thm:tree_like_neighborhood}, the local neighborhood $B_{\mathbf{\Phi_t}}(x_{s(t)},R)$ converges in distribution to the tree ensemble $\mathbb{T}_R(\Lambda', P')$, where $(\Lambda', P')$ is the degree distribution of the factor graph $\mathbf{\Phi_t}$ at time $t$.
The values of $\Lambda_i$ and $P_i$ are given by (\ref{eq:n_i}) and (\ref{eq:m_i}) since Lemma \ref{lemma:degree-profile} can be applied to $\tau$-decimation with any local rule $\tau$.
By our assumption, the value output by the local rule $\tau$ is equal to the marginal probability of the selected variable $x_{s(t)}$ on a random solution for the XORSAT instance induced by the local neighborhood $B_{\mathbf{\Phi_t}}(x_{s(t)},R)$.
So, we can obtain the value output by the local rule by calculating the marginal probability of $x_{s(t)}$, using the tree ensemble $\mathbb{T}_R(\Lambda',P')$.

For any $0 \le l \le R$, we denote by $T_l(x')$ a subtree of $B_{\mathbf{\Phi_t}}(x_{s(t)},R)$ rooted at some node $x'$ whose distance from $x_{s(t)}$ is $R-l$.
According to the tree ensemble $\mathbb{T}_R(\Lambda',P')$, the tree $T_l(x)$ is i.i.d. for all variable nodes $x'$ of same distance $R-l$ from $x_{s(t)}$.
By abusing the notation, we can simply omit "$(x)$" and write $T_l=T_l(x)$.
For any factor graph $T$ which is a tree rooted at a variable node $x'$, we say the tree $T$ \textbf{has a free root} if in the XORSAT instance induced by $T$ the marginal distribution of the root variable $x'$ is an even distribution over $\{0,1\}$.
Now, we are going to prove that for any $0 \le l \le R-1$ the tree $T_l$ has a free root with probability at least $S_l(t/n) + o(1)$, where the sequence $\{S_l(x)\}_{l \ge 0}$ is given by
\begin{align}
    S_0(x) = 1 \text{,\quad and \quad}
    S_{l}(x) = \exp \left( -kr \left[ \left( 1-x \right)(1-S_{l-1}(x)) + x \right]^{k-1} \right) \text{\quad for any\;}  l \ge 1
\end{align}
for any $x \in \mathbb{R}$.

\begin{lemma}
    \label{lemma:T_l_property_F}
    For $0 \le l \le R-1$, the tree $T_l$ has a free root with probability at least $S_l(t/n) + o(1)$.
\end{lemma}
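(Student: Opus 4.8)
The plan is to prove the bound by induction on $l$, after first pinning down the combinatorial meaning of ``has a free root'' for a tree-structured XORSAT instance and then matching the two factors in the recursion for $S_l$ to the two layers (root variable $\to$ its equations, equation $\to$ its other variables) of the local tree.

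First I would record a deterministic fact about trees: if $T$ is an XORSAT instance whose factor graph is a tree rooted at a variable node $x'$, then $T$ has a free root if and only if for every equation node $a$ adjacent to $x'$ at least one of the other \emph{present} variable children of $a$ is itself the root of a free subtree. The reason is that the solution space of $T$ is an affine subspace, so the marginal of $x'$ is either uniform or a point mass; if the condition holds one can, independently for each equation $a$, flip the value of $x'$ and compensate inside the free sub-subtree rooted at the designated child of $a$ (no conflicts arise because $T$ is a tree), so $x'$ genuinely takes both values; conversely, if some $a$ has all of its other present variable children fixed, then $a$ determines $x'$. Crucially, the variables of $a$ that were decimated away by \texttt{DEC}$_\tau$ only contribute a constant to the right-hand side of $a$, hence do not affect whether $a$ forces $x'$. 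This immediately settles the base case: $T_0$ is a single boundary variable node carrying no equations inside $B_{\mathbf{\Phi}_t}(x_{s(t)},R)$, so it is trivially free and the probability is $1 = S_0(t/n)$.

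For the inductive step I would invoke Theorem \ref{thm:tree_like_neighborhood} together with the degree profiles of Lemma \ref{lemma:degree-profile} (equations \eqref{eq:n_i}, \eqref{eq:m_i}): at time $t$, writing $x=t/n$, a uniformly chosen variable node has degree converging to $\mathrm{Poisson}(kr)$, and from the edge perspective a surviving equation has $j-1$ other incident variables with $j-1\sim\mathrm{Binomial}(k-1,1-x)$, the remaining $k-j$ slots having been decimated and absorbed into the right-hand side. Thus, in the $n\to\infty$ limit, the root of $T_l$ has $\mathrm{Poisson}(kr)$ equation children, each of which has $\mathrm{Binomial}(k-1,1-x)$ variable children rooting i.i.d.\ copies of $T_{l-1}$. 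Letting $f_{l-1}$ denote the probability that $T_{l-1}$ has a free root, a slot of a fixed equation child is ``fixed'' with probability $x + (1-x)(1-f_{l-1})$, so by the binomial p.g.f.\ that equation forces the root with probability $\bigl[(1-x)(1-f_{l-1})+x\bigr]^{k-1}+o(1)$; since the equation children are asymptotically independent, by the Poisson p.g.f.\ the root is free with probability $\exp\!\bigl(-kr[(1-x)(1-f_{l-1})+x]^{k-1}\bigr)+o(1)$. The map $u\mapsto \exp(-kr[(1-x)(1-u)+x]^{k-1})$ is increasing on $[0,1]$, so substituting the inductive hypothesis $f_{l-1}\ge S_{l-1}(x)-o(1)$ yields $f_l\ge S_l(x)-o(1)$, which is the claim.

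The main difficulty will be making the ``asymptotically independent'' and ``converges in distribution'' statements into a genuine proof: one must justify that $B_{\mathbf{\Phi}_t}(x_{s(t)},R)$, \emph{together with} the right-hand-side bits of its surviving clauses, is jointly distributed up to $o(1)$ error (uniformly over the constantly many generations $l\le R$) like the relevant part of the tree ensemble $\mathbb{T}_R(\Lambda(t),P(t))$. This breaks into (i) bounding by $o(1)$ the probability that the radius-$R$ neighborhood fails to be a tree or has an atypical degree, which comes from Lemma \ref{lemma:degree-profile} and the standard estimates underlying Theorem \ref{thm:tree_like_neighborhood}, and (ii) noting that, conditioned on the tree shape, the right-hand sides of the surviving clauses are independent uniform bits because decimation only XORs in a fixed value, so the free-root event has exactly the product structure used in the computation. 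A secondary point is to check that the $o(1)$ is uniform in $t$ over the range needed for the later integration in the proof of Lemma \ref{lemma:bp-free}; since $x=t/n\in[0,1]$ and all functions involved are continuous, this is routine.
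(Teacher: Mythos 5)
Your proof follows essentially the same route as the paper's: induction on $l$ through the limiting tree ensemble, with $T_0$ trivially free and the inductive step reducing the free-root probability for $T_l$ to that for $T_{l-1}$ via the offspring distributions (your $\mathrm{Poisson}(kr)$ at variable nodes and $\mathrm{Binomial}(k-1,1-x)$ at surviving clauses are exactly the $n\to\infty$ limits of the paper's $\lambda_i(t)$ and $\rho_j(t)$, and your p.g.f.\ evaluations reproduce the paper's sums $\sum_j\rho_j(t)(1-S_{l-1})^{j-1}$ and $\sum_i\lambda_i(t)(S_l^*)^{i-1}$). Your explicit appeal to the monotonicity of $u\mapsto\exp(-kr[(1-x)(1-u)+x]^{k-1})$ when substituting the inductive lower bound is a welcome bit of extra care the paper glosses over; conversely, your point (ii) about the right-hand-side bits being uniform is not needed, since the free-root event depends only on the factor-graph structure of $B_{\mathbf{\Phi}_t}(x_{s(t)},R)$ and not on the surviving clauses' constants.
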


\begin{proof}
    First, we consider the tree $T_0$.
    The root variable node $x'$ of $T_0$ has distance $R$ from the selected variable $x_{s(t)}$.
    Since the radius of $B_{\mathbf{\Phi_t}}(x_{s(t)},R)$ is $R$, the variable $x'$ does not have any child node.
    Thus, $T_0$ consists of only one variable node, namely $x'$, and no equation node.
    We can assign either 0 or 1 to $x$ without violating any equation.
    Hence, the tree $T_0$ has a free root with probability $S_0(t/n)=1$.
    
    For $1 \le l \le R-1$, consider the subtree $T_{l}$ of local neighborhood $B_{\mathbf{\Phi_t}}(x_{s(t)},R)$, rooted at the variable node $x_a$ of distance $R-l$ from $x_{s(t)}$.
    The variable node $x_a$ has $i-1$ child equation nodes with probability $\lambda_i(t)$ for $1 \le i \le m$.
    Each of those equation nodes $e_a$ has $j-1$ child variable nodes $x_b$ with probability $\rho_j(t)$ for $1 \le j \le k$.
    We also know that each of these child variable nodes $x_b$ is the root of an i.i.d. subtree $T_{l-1}$.
    In other words, each of those equation nodes $e_a$ is connected to the roots of $j-1$ i.i.d. subtrees $T_{l-1}$ as its children.
    
    For an equation node $e_a$ mentioned above, if at least one of its child subtree $T_{l-1}$ has a free root, then there are at least two solutions for the subtree $T_{l-1}$, one assigns 0 to the root $x_b$ of subtree $T_{l-1}$, and another one assigns 1 to the root $x_b$ of subtree $T_{l-1}$.
    Therefore, no matter what value we assign to $x_a$, we are able to choose a suitable assignment for the variables in that subtree $T_{l-1}$ so that the equation of $e_a$ and all equations in $T_{l-1}$ are satisfied. 

    Note that, from (\ref{eq:n_i}) and (\ref{eq:m_i}), we know that at time $t$
    \begin{align*}
        \lambda_i(t) &= i \binom{rn}{i} \left( \frac{k}{n} \right)^{i} \left( 1 - \frac{k}{n} \right)^{rn-i} \frac{1}{kr} + o(1) \text{\quad for $1 \le i \le m$, and} \\
        \rho_j(t) &= j \binom{k}{j} \left( \frac{t}{n} \right)^{k-j} \left( 1 - \frac{t}{n} \right)^j \frac{1}{k \left( 1-\frac{t}{n} \right)} + o(1) \text{\quad for $1 \le j \le k$.}
    \end{align*}
    So, the probability of the equation node $e_a$ having at least one of its child subtree $T_{l-1}$ having a free root is given by
    \begin{align*}
        &1 - \sum_{j=1}^{k} \rho_j(t) (1-S_{l-1})^{j-1} \\
        &= 1 - \sum_{j=1}^{k} j \binom{k}{j} \left( \frac{k}{n} \right)^{k-j} \left( 1 - \frac{t}{n} \right)^j (1-S_{l-1})^{j} \cdot \frac{1}{k \left( 1-\frac{t}{n} \right) (1-S_{l-1})} + o(1) \\
        &= 1 - \sum_{j=1}^{k} j \binom{k}{j} \left( \frac{k}{n} \right)^{k-j} \left[ \left( 1 - \frac{t}{n} \right) (1-S_{l-1}) \right]^j \cdot \frac{1}{k \left( 1-\frac{t}{n} \right) (1-S_{l-1})} + o(1) \\
        &= 1 - k \left[ \left( 1-\frac{t}{n} \right) (1-S_{l-1}) + \frac{t}{n} \right]^{k-1} \left( 1-\frac{t}{n} \right) (1-S_{l-1}) \\
        & \quad\quad\quad\quad\quad\quad\quad\quad\quad\quad\quad\quad\quad\quad\quad\quad\quad\quad\quad\quad\quad \cdot \frac{1}{k \left( 1-\frac{t}{n} \right) (1-S_{l-1})} + o(1) \\
        &= 1 - \left[ \left( 1-\frac{t}{n} \right) (1-S_{l-1}) + \frac{t}{n} \right]^{k-1} + o(1) \\
        &\equiv S_l^*,
    \end{align*}
    where $S_{l-1}=S_{l-1}(t/n)$.
    Note that $\lim_{n \rightarrow \infty} (1+\alpha/n)^n = e^\alpha$ for all $\alpha \in \mathbb{R}$.
    Then, the subtree $T_{l}$ has a free root with probability
    \begin{align*}
        \sum_{i=1}^{rn} \lambda_i(t) (S_l^*)^{i-1}
        &= \sum_{i=1}^{rn} i \binom{rn}{i} \left( \frac{k}{n} \right)^{i} \left( 1 - \frac{k}{n} \right)^{rn-i} (S_l^*)^{i} \cdot \frac{1}{krS_l^*} + o(1) \\
        &= \sum_{i=1}^{rn} i \binom{rn}{i} \left( \frac{k}{n} S_l^* \right)^{i} \left( 1 - \frac{k}{n} \right)^{rn-i} \cdot \frac{1}{krS_l^*} + o(1) \\
        &= rn \left[ \frac{k}{n}S_l^* + \left( 1 - \frac{k}{n} \right) \right]^{rn-1} \left( \frac{k}{n} S_l^* \right) \cdot \frac{1}{krS_l^*} + o(1) \\
        &= \left[ \frac{k}{n}S_l^* + \left( 1 - \frac{k}{n} \right) \right]^{rn-1} + o(1) \\
        &= \left( 1 + \frac{k(S_l^*-1)}{n} \right)^{rn-1} + o(1) \\
        &= \exp(kr(S_l^*-1)) + o(1) \\
        &= \exp\left( -kr \left[ \left( 1-\frac{t}{n} \right) (1-S_{l-1}) + \frac{t}{n} \right]^{k-1} \right) + o(1) \\
        &= S_{l}(t/n) + o(1).
    \end{align*}
\end{proof}

Now, we can calculate the probability that the local neighborhood $B_{\mathbf{\Phi_t}}(x_{s(t)},R)$ of the selected variable $x_{s(t)}$ of radius $R>0$ has a free root with probability at least $S_R(t/n)$.
The proof is similar to the proof of Lemma \ref{lemma:T_l_property_F}, except replacing $\lambda_i(t)$ with $\Lambda_i(t)$.
\begin{lemma}
    \label{lemma:B_property_F}
    At time $t \ge 0$, the local neighborhood $B_{\mathbf{\Phi_t}}(x_{s(t)},R)$ of the selected variable $x_{s(t)}$ of radius $R \ge 0$ has a free root with probability at least $S_R(t/n) + o(1)$.
\end{lemma}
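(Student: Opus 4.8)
The plan is to follow the proof of Lemma~\ref{lemma:T_l_property_F} almost verbatim, the only structural difference being that the root $x_{s(t)}$ of the full neighborhood has no parent edge, so its degree is governed by the vertex degree distribution $\Lambda_i(t)$ rather than by the edge-perspective distribution $\lambda_i(t)$; accordingly ``$i-1$ child equation nodes with probability $\lambda_i(t)$'' is replaced throughout by ``$i$ child equation nodes with probability $\Lambda_i(t)$''.

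First I would dispose of the trivial case $R=0$: the neighborhood is then a single isolated variable node with no equation, its root can be assigned either value, and the bound $S_0(t/n)=1$ holds. For the remaining (even) $R\ge 2$, Theorem~\ref{thm:tree_like_neighborhood} gives that $B_{\mathbf{\Phi_t}}(x_{s(t)},R)$ converges in distribution to the tree ensemble $\mathbb{T}_R(\Lambda',P')$, whose degree profile $\Lambda'_i=n_i(t)/\widehat{n}(t)$, $P'_j=m_j(t)/\widehat{m}(t)$ is supplied by Lemma~\ref{lemma:degree-profile} (which holds for $\tau$-decimation with any local rule). The event ``has a free root'' --- the root variable being uniform over $\{0,1\}$ on the solution space of the induced instance --- is, for a tree factor graph, determined by the rooted-isomorphism type alone and is independent of the right-hand-side vector: this is visible from the exact belief-propagation recursion on a tree, where whether a message is uniform or forced does not depend on $b$. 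Hence it is a continuity point of the limit, and it suffices to evaluate its probability in $\mathbb{T}_R(\Lambda',P')$ up to $o(1)$.

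In that tree ensemble the root $x_{s(t)}$ is adjacent to $i$ equation nodes with probability $\Lambda_i(t)$; each such equation node has $j-1$ further child variables with probability $\rho_j(t)$, and each of those is the root of an independent copy of the subtree $T_{R-1}$. As in the proof of Lemma~\ref{lemma:T_l_property_F}, call an equation node \emph{good} if at least one of its child subtrees $T_{R-1}$ has a free root; a good equation can be satisfied whatever value the root takes, whereas an equation all of whose child subtrees are ``forced'' pins the root's value. By Lemma~\ref{lemma:T_l_property_F} (applicable with $l=R-1$), each $T_{R-1}$ has a free root with probability at least $S_{R-1}(t/n)+o(1)$, and inserting the expressions for $\rho_j(t)$ from Lemma~\ref{lemma:degree-profile} into the same binomial-to-exponential simplification as in that proof shows an equation node is good with probability at least
\begin{equation*}
  1 - \Bigl[(1-t/n)\bigl(1-S_{R-1}(t/n)\bigr)+t/n\Bigr]^{k-1} + o(1) \;=:\; S_R^*.
\end{equation*}
The root then has a free root whenever all $i$ of its incident equations are good, so, using independence of the subtrees, the probability of a free root is at least $\sum_{i}\Lambda_i(t)(S_R^*)^i+o(1)$; substituting $\Lambda_i(t)=\binom{rn}{i}(k/n)^i(1-k/n)^{rn-i}+o(1)$ from (\ref{eq:n_i}) and collapsing the binomial sum gives $\bigl[(k/n)S_R^*+1-k/n\bigr]^{rn}+o(1)=\exp\!\bigl(kr(S_R^*-1)\bigr)+o(1)=S_R(t/n)+o(1)$, as claimed.

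The only point needing care beyond transcribing Lemma~\ref{lemma:T_l_property_F} is the propagation of the one-sided (``at least'') bounds: one checks that ``probability an equation is good'' is monotone increasing in the free-root probabilities of its subtrees, and that ``probability the root is free'' is monotone increasing in the good-probabilities of its incident equations, so that the lower bounds compose through both stages; the independence needed for the product over incident equations is built into the definition of $\mathbb{T}_R$. I expect this bookkeeping --- rather than any genuinely new idea --- to be essentially the whole content of the argument.
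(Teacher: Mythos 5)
Your proof is correct and follows essentially the same route as the paper's: replace the edge-perspective degree distribution $\lambda_i(t)$ at the root by the vertex degree distribution $\Lambda_i(t)$, otherwise transcribe Lemma~\ref{lemma:T_l_property_F}, and collapse the binomial sum to an exponential. (You also implicitly fix a typo in the paper's displayed expression for $S_R^*$, which in the published proof is written as $\exp(-kr[\cdots]^{k-1})$ but, as your version and the subsequent algebra both make clear, should read $1-[(1-t/n)(1-S_{R-1})+t/n]^{k-1}+o(1)$; your explicit handling of $R=0$, of the continuity point when invoking Theorem~\ref{thm:tree_like_neighborhood}, and of the monotone propagation of the one-sided bounds are correct refinements that the paper passes over in silence.)
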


\begin{proof}
    The root variable node $x_{s(t)}$ has $i$ child equation node with probability $\Lambda_i(t)$ for $0 \le i \le m$.
    Each of those equation nodes $e_a$ has $j-1$ child variable nodes $x_b$ with probability $\rho_j(t)$, and each of these child variable nodes $x_b$ is the root of an i.i.d. subtree $T_{R-1}$.
    In other words, each of those equation nodes $e_a$ is connected to the roots of $j-1$ i.i.d. subtrees $T_{R-1}$ as its children.

    For an equation node $e_a$ mentioned above, if at least one of its child subtree $T_{R-1}$ has a free root, then we are able to obtain a satisfying assignment for all variable nodes in the child subtree $T_{R-1}$ which assigns either 1 or 0 to the root $x_b$ of subtree $T_{R-1}$.
    Therefore, no matter what value we assign to $x_a$, we are able to choose a suitable assignment for the variables in that subtree $T_{R-1}$ so that the equation of $e_a$ and all equations in $T_{R-1}$ are satisfied.

    Similar to the proof of Lemma \ref{lemma:T_l_property_F}, the probability that the equation node $e_a$ has at least one of its child subtree $T_{R-1}$ having a free root is given by
    \begin{align*}
        1 - \sum_{j=1}^k \rho_j(t) (1-S_{R-1})^{j-1}
        &= \exp\left( -kr \left[ \left( 1-\frac{t}{n} \right) (1-S_{R-1}) + \frac{t}{n} \right]^{k-1} \right) + o(1)
        \equiv S_{R}^*,
    \end{align*}
    where $S_{R-1}=S_{R-1}(t/n)$.
    From (\ref{eq:n_i}), at time $t$ we have
        $\Lambda_i(t) = \binom{rn}{i} \left( \frac{k}{n} \right)^{i} \left( 1 - \frac{k}{n} \right)^{rn-i} + o(1)$
    for $0 \le i \le m$.
    Note that $\lim_{n \rightarrow \infty} (1+\alpha/n)^n = e^\alpha$ for all $\alpha \in \mathbb{R}$.
    Hence, the probability that the local neighborhood $B_{\mathbf{\Phi_t}}(x_{s(t)},R)$ has a free root is given by
    \begin{align*}
        \sum_{i=0}^{rn} \Lambda_i(t) (S_{R}^*)^{i-1}
        &= \sum_{i=0}^{rn} \binom{rn}{i} \left( \frac{k}{n} \right)^{i} \left( 1 - \frac{k}{n} \right)^{rn-i} (S_{R}^*)^{i} + o(1) \\
        &= \sum_{i=0}^{rn} \binom{rn}{i} \left( \frac{k}{n}S_{R}^* \right)^{i} \left( 1 - \frac{k}{n} \right)^{rn-i} + o(1) \\
        &= \left[ \frac{k}{n} S_{R}^* + \left( 1-\frac{k}{n} \right) \right]^{rn} + o(1) \\
        &= \left[ 1 + \frac{k(S_{R}^*-1)}{n} \right]^{rn} + o(1) \\
        &= \exp \left( kr(S_{R}^*-1) \right) + o(1) \\
        &= \exp \left( -kr \left[ \left( 1-\frac{t}{n} \right) (1-S_{R-1}) + \frac{t}{n} \right]^{k-1} \right) + o(1) \\
        &= S_{R}(t/n) + o(1),
    \end{align*}
\end{proof}

\begin{proof}[Proof of Lemma \ref{lemma:bp-free}]
    Lemma \ref{lemma:B_property_F} implies that at time $t \ge 0$ the local rule $\tau$ gives the value $1/2$ to the decimation algorithm with probability at least $S_R(t/n)$.
    Now we can calculate the number of free steps run by the $\tau$-decimation algorithm by tracking the number throughout the process of the algorithm.
    We know that the $(t+1)$-st iteration is a free step with probability at least $S_R(t/n) + o(1)$.
    Let $q(t)$ be a lower bound of the number of free steps after $t$ iterations, with $q(0) = 0$ and
    \begin{align}
        \expected{q_e(t+1) - q_e(t)} = S_R(t/n) + o(1). \label{eq:change_in_q_bp}
    \end{align}
    By letting $x=t/n$ and $w = w(x) = q(xn)/n$, the equation (\ref{eq:change_in_q_bp}) suggests the differential equation
    \begin{align}
        \frac{dw}{dx} = S_R(x)
    \end{align}
    with the initial condition
    \begin{align}
        w(x) = 0
    \end{align}
    since $q(0)=0$.
    By the Wormald's method of differential equations, w.h.p. the lower bound $q(t)$ of the number of free steps after $t$ iterations is given by
    \begin{align*}
        \left( \int_{0}^{t/n} S_R(x) dx \right) n + o(n).
    \end{align*}
    Hence, w.h.p. the total number of free steps run by the $\tau$-decimation algorithm is lower bounded by $w_e(k,r)n + o(n)$, where $w_e(k,r)$ is given by
    \begin{align*}
        w_e(k,r) = \int_{0}^{1} S_R(x) dx,
    \end{align*}
    Hence, \texttt{DEC$_\tau$} is $w_e(k,r)$-free.
\end{proof}

\section{Proof of Lemma \ref{lemma:lower-bound-w-1}}
\label{sec:proof-of-lower-bound-w-1}

In this section, we prove Lemma \ref{lemma:lower-bound-w-1}, which gives a lower bound for $w_1(k,r)$.
To do that, we first prove that $w_1(k,r)$
is lower bounded by
\begin{align*}
    w_1^*(k) = \frac{k^{\frac{1}{1-k}}}{k-1} \gamma \left( \frac{1}{k-1}, k \left(\frac{k}{k+1}\right)^{k-1} \right),
\end{align*}
for $k \ge 3$ and $r \in [0,1]$.
We then prove that $w_1^*(k)$ is decreasing with integer $k \ge 3$, and thus $w_1(k,r)$ has a lower bound $w_1^*(k_0)$ for any $k \ge k_0 \ge 3$.

We start from proving that $w_1(k,r)$ is decreasing with $r \in [0,1]$, which implies $w_1(k,r) \ge w_1(k,1)$ for any $r \in [0,1]$ and $k \ge 3$.
\begin{lemma}
    \label{lemma:w1_decreasing_with_r}
    $w_1(k,r)$ is decreasing with $r \in [0,1]$ for any $k \ge 3$.
\end{lemma}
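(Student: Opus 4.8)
The plan is to discard the (implicit) closed form and instead work with the integral representation of $w_1(k,r)$ that already appeared in the proof of Lemma~\ref{lemma:uc-free}. Recall from Appendix~\ref{sec:uc-free-steps} that $w_1(k,r) = w(1)$, where $w$ solves $\frac{dw}{dx} = e^{-krx^{k-1}}$ with $w(0)=0$; equivalently,
\begin{align*}
    w_1(k,r) = \int_0^1 e^{-krt^{k-1}}\,dt .
\end{align*}
In this form the dependence on $r$ is completely transparent, and the whole lemma reduces to differentiating under the integral sign.

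Concretely, I would first note that $(r,t)\mapsto e^{-krt^{k-1}}$ is smooth on $[0,\infty)\times[0,1]$ with partial derivative in $r$ bounded on compact sets, which legitimizes differentiation under the integral sign, giving
\begin{align*}
    \frac{\partial}{\partial r}\, w_1(k,r) = \int_0^1 \frac{\partial}{\partial r}\, e^{-krt^{k-1}}\,dt = -k\int_0^1 t^{k-1} e^{-krt^{k-1}}\,dt .
\end{align*}
Since $t^{k-1}e^{-krt^{k-1}} > 0$ for every $t\in(0,1]$ (and every $k\ge 3$, $r\ge 0$), the integral on the right is strictly positive, so $\frac{\partial}{\partial r}w_1(k,r) < 0$. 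Hence $w_1(k,\cdot)$ is strictly decreasing on $[0,1]$ (in fact on all of $[0,\infty)$), which in particular yields $w_1(k,r)\ge w_1(k,1)$ for all $r\in[0,1]$, the inequality that Lemma~\ref{lemma:lower-bound-w-1} will build on.

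If one prefers to stay with the closed form $w_1(k,r) = \frac{(kr)^{1/(1-k)}}{k-1}\,\gamma\!\left(\tfrac{1}{k-1}, kr\right)$, the same conclusion follows by setting $s = kr$, using $\frac{d}{ds}\gamma(a,s) = s^{a-1}e^{-s}$ together with the strict bound $\gamma(a,s) > e^{-s}\int_0^s t^{a-1}\,dt = e^{-s}s^{a}/a$, and checking that the resulting two terms cancel with the favourable sign; here the only thing to be careful about is that $\tfrac{1}{1-k}+\tfrac{1}{k-1}=0$ and $\tfrac{k-1}{1-k}=-1$, and that multiplying the bound on $\gamma$ by the negative quantity $\tfrac{1}{1-k}s^{1/(1-k)-1}$ reverses the inequality. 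In either route there is no real obstacle: the only technical point is justifying the differentiation step (respectively tracking the exponent bookkeeping), both of which are routine.
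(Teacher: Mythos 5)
Your proof is correct and in fact more transparent than the paper's. The paper works directly from the closed form $w_1(k,r)=\frac{(kr)^{1/(1-k)}}{k-1}\gamma\bigl(\frac{1}{k-1},kr\bigr)$, differentiates in $r$, and then has to introduce an auxiliary function $h_k(r)=\frac{1}{1-k}\gamma\bigl(\frac{1}{k-1},kr\bigr)+(kr)^{1/(k-1)}e^{-kr}$, show $h_k(0)=0$, and verify $h_k'(r)\le 0$ in order to conclude $h_k\le 0$ and hence $\partial w_1/\partial r\le 0$. You instead rewind to the representation $w_1(k,r)=\int_0^1 e^{-krt^{k-1}}\,dt$ already established in the derivation of Lemma~\ref{lemma:uc-free}, differentiate under the integral sign, and read off the sign of the derivative immediately from the positivity of the integrand. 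This bypasses the incomplete gamma bookkeeping entirely and makes the monotonicity manifest; the only thing to justify is the exchange of $\partial/\partial r$ and $\int_0^1$, which you correctly note is routine since the integrand is smooth on a compact domain. The paper's route does have the minor virtue of working entirely with the stated closed form of $w_1$, but your integral-based argument is shorter, less error-prone, and delivers a strict inequality. Your second paragraph sketching the gamma-function route is a bit loose (the cancellation it alludes to is exactly the algebra the paper does carefully, and the claimed bound on $\gamma$ is not actually needed for the paper's argument), but since your primary argument is complete and self-contained this is harmless.
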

\begin{proof}
    We obtain the derivative of $w_1(k,r)$ with respect to $r$ by the followings.
    \begin{align*}
        \frac{\partial w_1}{\partial r}
        &= \frac{\frac{1}{1-k} (kr)^{\frac{1}{1-k}-1}k}{k-1} \gamma \left( \frac{1}{k-1}, kr \right) + \frac{(kr)^{\frac{1}{1-k}}}{k-1} (kr)^{\frac{1}{1-k}-1} e^{-kr} k \\
        &= \frac{k(kr)^{\frac{1}{1-k}-1}}{k-1} \left[ \frac{1}{1-k} \gamma\left( \frac{1}{k-1}, kr \right) + (kr)^{\frac{1}{k-1}} e^{-kr} \right] \\
        &= \frac{k(kr)^{\frac{1}{1-k}-1}}{k-1} h_k(r)
    \end{align*}
    where $h_k(r)$ is given by
    \begin{align*}
        h_k(r) = \frac{1}{1-k} \gamma\left( \frac{1}{k-1}, kr \right) + (kr)^{\frac{1}{k-1}} e^{-kr}.
    \end{align*}
    Note that $h_k(0) = 0$ and its derivative is given by
    \begin{align*}
        \frac{dh_k}{dr}
        &= \frac{1}{1-k} (kr)^{\frac{1}{k-1}-1} e^{-kr} k + \frac{1}{k-1} (kr)^{\frac{1}{k-1}-1} k e^{-kr} + (kr)^{\frac{1}{k-1}} e^{-kr} (-k) \\
        &= -k (kr)^{\frac{1}{k-1}} e^{-kr} \\
        &\le 0.
    \end{align*}
    Therefore, $h_k(r)$ is decreasing with $r$ and $h_k(r) \le h_k(0) = 0$ for any $r \in [0,1]$.
    Hence, $\frac{\partial w_1}{\partial r} \le 0$ and thus $w_1$ is decreasing with $r \in [0,1]$ for any $k \ge 3$.
\end{proof}

By the above lemma, we know that $w_1(k,r) \ge w_1(k,1)$ for any $r \in [0,1]$.
Next we will prove that $w_1(k,1)$ is lower bounded by $w_1^*(k)$.
\begin{lemma}
    \label{lemma:w1_and_w1_star}
    For $k \ge 3$, $w_1(k,1) \ge w_1^*(k)$.
\end{lemma}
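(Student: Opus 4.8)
The plan is to reduce the claimed inequality to the elementary monotonicity of the lower incomplete gamma function, since $w_1(k,1)$ and $w_1^*(k)$ differ only in the upper limit of integration of $\gamma$. Concretely, specializing the formula in Lemma \ref{lemma:uc-free} at $r=1$ gives
\begin{align*}
    w_1(k,1) = \frac{k^{\frac{1}{1-k}}}{k-1}\,\gamma\left(\frac{1}{k-1},\, k\right)
    \qquad\text{and}\qquad
    w_1^*(k) = \frac{k^{\frac{1}{1-k}}}{k-1}\,\gamma\left(\frac{1}{k-1},\, k\left(\tfrac{k}{k+1}\right)^{k-1}\right),
\end{align*}
so both quantities carry the same strictly positive prefactor $\frac{k^{1/(1-k)}}{k-1}$, and it suffices to compare the two values of $\gamma\!\left(\frac{1}{k-1},\cdot\right)$.

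First I would recall that $\gamma(a,x)=\int_0^x t^{a-1}e^{-t}\,dt$, which for $a=\frac{1}{k-1}>0$ (valid since $k\ge 3$) converges at the lower endpoint and is strictly increasing in $x>0$, its derivative in $x$ being $x^{a-1}e^{-x}>0$. Next I would note that $\frac{k}{k+1}<1$ and $k-1\ge 2$, hence $\left(\frac{k}{k+1}\right)^{k-1}<1$, so the upper limit $k\left(\frac{k}{k+1}\right)^{k-1}$ appearing in $w_1^*(k)$ is strictly smaller than the upper limit $k$ appearing in $w_1(k,1)$. By monotonicity, $\gamma\!\left(\frac{1}{k-1}, k\left(\tfrac{k}{k+1}\right)^{k-1}\right)\le \gamma\!\left(\frac{1}{k-1}, k\right)$, and multiplying through by the common positive constant yields $w_1^*(k)\le w_1(k,1)$, which is the assertion.

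Equivalently — and this is the intuition that makes the choice of $w_1^*(k)$ natural — the change of variables used in Appendix \ref{sec:uc-free-steps} identifies $w_1(k,1)=\int_0^1 e^{-k t^{k-1}}\,dt$ and $w_1^*(k)=\int_0^{k/(k+1)} e^{-k t^{k-1}}\,dt$, after which the inequality is just the statement that a positive integrand integrated over $[0,1]$ dominates its integral over the shorter interval $[0,k/(k+1)]$. There is essentially no obstacle in this lemma: the only point worth a word of care is the integrability of $\gamma$ at $0$, which holds for all $k\ge 3$. The substantive work of Section \ref{sec:proof-of-lower-bound-w-1} lies not here but in the remaining step, namely showing that the cleaner closed form $w_1^*(k)$ is monotone in the integer $k$, which is precisely the reason for passing from $w_1(k,1)$ to this surrogate.
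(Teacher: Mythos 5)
Your argument is correct and matches the paper's proof: both reduce the inequality to the monotonicity of $\gamma\!\left(\frac{1}{k-1},\cdot\right)$ in its second argument, together with the observation that $k\left(\frac{k}{k+1}\right)^{k-1}<k$. The alternative integral interpretation you mention at the end (comparing $\int_0^1 e^{-kt^{k-1}}dt$ with $\int_0^{k/(k+1)}e^{-kt^{k-1}}dt$) is not used for this lemma in the paper, but it is indeed the change of variables the paper employs in the next lemma (Lemma \ref{lemma:w1_increasing_with_k}), so your intuition for why $w_1^*(k)$ is the natural surrogate is well aligned with the paper's development.
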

\begin{proof}
    For any $k \ge 3$ and any $t \ge 0$ we have
    \begin{align*}
        k > k \left( \frac{k}{k+1} \right)^{k-1} \text{\quad and \quad} t^{\frac{1}{k-1}-1} e^{-t} > 0,
    \end{align*}
    This implies that
    \begin{align}
        w_1(k,1)
        = \frac{k^{\frac{1}{1-k}}}{k-1} \gamma \left( \frac{1}{k-1}, k \right)
        > \frac{k^{\frac{1}{1-k}}}{k-1} \gamma \left( \frac{1}{k-1}, k \left(\frac{k}{k+1}\right)^{k-1} \right)
        = w_1^*(k).
    \end{align}
\end{proof}

Next, we prove that $\{w_1^*(k)\}_{k \ge 3}$ is an increasing sequence, which implies that $w_1^*(k) \ge w_1^*(k_0)$ for any $k \ge k_0$.
\begin{lemma}
    \label{lemma:w1_increasing_with_k}
    $\{w_1^*(k)\}_{k \ge 3}$ is an increasing sequence.
\end{lemma}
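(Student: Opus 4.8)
The plan is to reduce the problem to an elementary pointwise inequality between two integrands, exploiting that the cutoff $\tfrac{k}{k+1}$ appearing in $w_1^*(k)$ is chosen exactly so the comparison is sharp. First I would rewrite $w_1^*(k)$ in integral form: recall from the proof of Lemma~\ref{lemma:uc-free} that the substitution $v = krt^{k-1}$ yields the identity
\[
    \frac{(kr)^{\frac{1}{1-k}}}{k-1}\,\gamma\!\left(\frac{1}{k-1},\, krx^{k-1}\right) \;=\; \int_0^{x} e^{-krt^{k-1}}\,dt .
\]
Specializing to $r=1$ and $x=\frac{k}{k+1}$, so that $kx^{k-1}=k\left(\frac{k}{k+1}\right)^{k-1}$, this becomes
\[
    w_1^*(k) \;=\; \int_0^{k/(k+1)} e^{-kt^{k-1}}\,dt .
\]

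Next I would fix an integer $k\ge 3$ and compare $w_1^*(k)$ with $w_1^*(k+1) = \int_0^{(k+1)/(k+2)} e^{-(k+1)t^{k}}\,dt$. Since $\frac{k}{k+1}<\frac{k+1}{k+2}$, I can split
\[
    w_1^*(k+1) \;=\; \int_0^{k/(k+1)} e^{-(k+1)t^{k}}\,dt \;+\; \int_{k/(k+1)}^{(k+1)/(k+2)} e^{-(k+1)t^{k}}\,dt ,
\]
where the second integral is strictly positive. On the interval $[0,\frac{k}{k+1}]$ we have $(k+1)t\le k$, hence $(k+1)t^{k}=(k+1)t\cdot t^{k-1}\le k\,t^{k-1}$, so $e^{-(k+1)t^{k}}\ge e^{-kt^{k-1}}$ there (strictly on $(0,\frac{k}{k+1})$). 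Integrating gives $\int_0^{k/(k+1)} e^{-(k+1)t^{k}}\,dt \ge w_1^*(k)$, and adding the positive tail yields $w_1^*(k+1)>w_1^*(k)$. Since $k$ is arbitrary, $\{w_1^*(k)\}_{k\ge 3}$ is (strictly) increasing, as required; in particular $w_1^*(k)\ge w_1^*(k_0)$ for all $k\ge k_0\ge 3$.

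The argument is short and I do not foresee a genuine obstacle: the only point needing a little care is the first step, namely verifying that the closed form of $w_1^*(k)$ in Lemma~\ref{lemma:lower-bound-w-1} is precisely the value of the integral from Lemma~\ref{lemma:uc-free} at the parameters $r=1$, $x=\frac{k}{k+1}$. Once that is in place, the whole proof rests on the observation that $(k+1)t^{k}\le k\,t^{k-1}$ holds exactly on $[0,\frac{k}{k+1}]$, which is the range of integration defining $w_1^*(k)$ — so no further estimates are needed.
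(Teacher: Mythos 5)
Your proof is correct and follows essentially the same route as the paper: both express $w_1^*(k)=\int_0^{k/(k+1)} e^{-kt^{k-1}}\,dt$, use the pointwise bound $e^{-kt^{k-1}}\le e^{-(k+1)t^k}$ on $[0,\tfrac{k}{k+1}]$ (equivalent to $(k+1)t\le k$), and then gain the remaining slack from extending the upper limit to $\tfrac{k+1}{k+2}$. The only cosmetic difference is that the paper invokes monotonicity of the lower incomplete gamma function where you split the integral directly, but these are the same step.
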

\begin{proof}
    Note that for $k, x > 0$ we have
    \begin{align*}
        \int e^{-kt^{k-1}} dx
        &= - \frac{k^{\frac{1}{1-k}}}{k-1} \; \Gamma \left( \frac{1}{k-1}, kx^{k-1} \right) + \text{constant.}
    \end{align*}
    By replacing $k$ with $k+1$, we also have
    \begin{align*}
        \int e^{-(k+1)t^{k}} dx
        &= - \frac{(k+1)^{\frac{1}{-k}}}{k} \; \Gamma \left( \frac{1}{k}, (k+1)x^{k} \right) + \text{constant.}
    \end{align*}
    Therefore, we have
    \begin{align*}
        \int_{0}^{\frac{k}{k+1}} e^{-kt^{k-1}} dx
        &= \frac{k^{\frac{1}{1-k}}}{k-1} \; \left[ \Gamma \left( \frac{1}{k-1} \right) - \Gamma \left( \frac{1}{k-1}, k\left( \frac{k}{k+1} \right)^{k-1} \right) \right] \\
        &= \frac{k^{\frac{1}{1-k}}}{k-1} \; \gamma \left( \frac{1}{k-1}, k\left( \frac{k}{k+1} \right)^{k-1} \right) \\
        &= w_1^*(k)
    \end{align*}
    and 
    \begin{align*}
        \int_{0}^{\frac{k}{k+1}} e^{-(k+1)t^{k}} dx
        &= \frac{(k+1)^{\frac{1}{-k}}}{k} \; \left[ \Gamma \left( \frac{1}{k} \right) - \Gamma \left( \frac{1}{k}, (k+1)\left( \frac{k}{k+1} \right)^{k} \right) \right] \\
        &= \frac{(k+1)^{\frac{1}{-k}}}{k} \; \gamma \left( \frac{1}{k}, (k+1)\left( \frac{k}{k+1} \right)^{k} \right) \\
        &< \frac{(k+1)^{\frac{1}{-k}}}{k} \; \gamma \left( \frac{1}{k}, (k+1)\left( \frac{k+1}{k+2} \right)^{k} \right) \\
        &= w_1^*(k+1).
    \end{align*}
    The above inequality is based on the fact that the lower incomplete gamma function $\gamma(a,x)$ is strictly increasing with $x \ge 0$ and $\frac{k}{k+1} < \frac{k+1}{k+2}$.

    For $0 \le x \le \frac{k}{k+1}$, we have $e^{-kx^{k-1}} \le e^{-(k+1)x^k}$.
    Therefore, we have
    \begin{align*}
        w_1^*(k)
        = \int_{0}^{\frac{k}{k+1}} e^{-kt^{k-1}} dx
        \le \int_{0}^{\frac{k}{k+1}} e^{-(k+1)t^{k}} dx
        \le  w_1^*(k+1)
    \end{align*}
    and thus $\{w_1^*(k)\}_{k \ge 3}$ is an increasing sequence.
\end{proof}

Combining above lemmas, we can complete the proof of Lemma \ref{lemma:lower-bound-w-1}.
\begin{proof}[Proof of Lemma \ref{lemma:lower-bound-w-1}]
    From Lemma \ref{lemma:w1_decreasing_with_r}, \ref{lemma:w1_and_w1_star} and \ref{lemma:w1_increasing_with_k},
    we have $w_1(k,r) \ge w_1(k,1) \ge w_1^*(k) \ge w_1^*(k_0)$ for any $k \ge k_0 \ge 3$ and $r \in [0,1]$.
\end{proof}

\section{Proof of Lemma \ref{lemma:lower-bound-w-b}}
\label{sec:proof-of-lower-bound-w-b}

In this section, we prove Lemma \ref{lemma:lower-bound-w-b}, which gives a lower bound for $w_e(k,r)$.
Recall that
\begin{align}
    S_0(x)=1 \text{\;\;and\;\;} S_l(x) = \exp \left( -kr [ (1-x)(1-S_{l-1}(x)) + x]^{k-1} \right) \text{\quad for any\;} l \ge 1 \text{\;and\;} x \in \mathbb{R}.
    \label{eq:def-sl}
\end{align}
We first show that $\{S_l(x)\}_{l\ge 0}$ is decreasing.

\begin{lemma}
    \label{lemma:s_l_decreasing}
    For any $k \ge 3$, $r \in [0,1]$ and $x \in [0,1]$, we have $0 < S_l(x) \le 1$ for any $l \ge 0$, and the sequence $\{S_l(x)\}_{l \ge 0}$ is decreasing.
\end{lemma}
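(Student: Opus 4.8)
The plan is to prove both assertions simultaneously by induction on $l$, because the quantity that needs to be controlled is the ``message'' $M_{l-1}(x) = (1-x)(1-S_{l-1}(x)) + x$ sitting inside the exponential in the recursion $S_l(x) = \exp(-kr\,M_{l-1}(x)^{k-1})$; keeping $M_{l-1}(x)$ in $[0,1]$ is exactly what makes everything work.

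First I would handle the bound $0 < S_l(x) \le 1$. The base case $S_0(x) = 1$ is immediate. For the step, suppose $0 < S_{l-1}(x) \le 1$. Then $1 - S_{l-1}(x) \in [0,1)$, and since $x \in [0,1]$ we get $M_{l-1}(x) = (1-x)(1-S_{l-1}(x)) + x \in [x,1] \subseteq [0,1]$. Hence $M_{l-1}(x)^{k-1} \in [0,1]$ (using $k-1 \ge 1$), so the exponent $-kr\,M_{l-1}(x)^{k-1}$ lies in $[-kr,0]$, which gives $S_l(x) \in [e^{-kr},1] \subseteq (0,1]$. This proves the first claim for all $l \ge 0$, and as a byproduct shows that $M_l(x) \in [0,1]$ for every $l$ — the fact I need for monotonicity.

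Next I would prove $S_{l+1}(x) \le S_l(x)$ for all $l \ge 0$ by a second induction. The base case $S_1(x) \le 1 = S_0(x)$ follows from the bound just established. For the step, assume $S_l(x) \le S_{l-1}(x)$. Then $1 - S_l(x) \ge 1 - S_{l-1}(x) \ge 0$, so $M_l(x) \ge M_{l-1}(x) \ge 0$; since $t \mapsto t^{k-1}$ is nondecreasing on $[0,\infty)$ (as $k \ge 3$), this gives $M_l(x)^{k-1} \ge M_{l-1}(x)^{k-1}$, whence $-kr\,M_l(x)^{k-1} \le -kr\,M_{l-1}(x)^{k-1}$ (here $kr \ge 0$), and applying the increasing function $\exp$ yields $S_{l+1}(x) \le S_l(x)$.

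I do not expect any genuine obstacle; the statement is essentially a soft consequence of monotonicity of $\exp$ and of $t \mapsto t^{k-1}$ on $[0,\infty)$. The only point requiring care is to keep the two inductions coupled, so that whenever the bracket $M_l(x)$ is raised to the power $k-1$ or compared across levels, its membership in $[0,1]$ (in particular its nonnegativity) is already known from the bound established in the same inductive pass. (Note the argument actually uses only $r \ge 0$ and $k \ge 2$.)
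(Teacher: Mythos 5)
Your proof is correct and follows essentially the same two-stage induction as the paper: first $0 < S_l(x) \le 1$, then monotonicity, with the second step relying on the first. The only difference is cosmetic — you name the bracket $M_l(x)$ and spell out why $M_l(x)^{k-1}$ is monotone in $M_l(x)$, which the paper leaves implicit.
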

\begin{proof}
    Without loss of generality, we write $S_l=S_l(x)$.
    First, we prove that $0 < S_l \le 1$ for all $l \ge 0$, by induction.
    Note that $S_0 = 1$.
    If $0 < S_l \le 1$ for some $l \ge 0$, then it is easy to see from (\ref{eq:def-sl}) that we also have $0 < S_{l+1} \le 1$.
    Therefore, we know that $0 < S_l \le 1$ for all $l \ge 0$.

    Then, we prove $S_{l+1} \le S_{l}$ for any $l \ge 0$ by induction again.
    For $l=0$, we have $S_1 = \exp(-kr[(1-x)(1-S_0)+x]^{k-1}) = \exp(-krx^{k-1}) \le 1 = S_0$.
    Assume that $S_{l+1} \le S_{l}$ for some $l \ge 0$.
    Then, we have
    \begin{align*}
        S_{l+2} 
        &= \exp ( -kr [(1-x)(1-S_{l+1})+x]^{k-1} ) \\
        &\le \exp ( -kr [(1-x)(1-S_{l})+x]^{k-1} ) \\
        &= S_{l+1}.
    \end{align*}
    Therefore, $S_{l+1} \le S_{l}$ is true for all $l \ge 0$.
    The result follows.
\end{proof}

Since the sequence $\{S_l(x)\}_{l \ge 0}$ is decreasing and bounded from below by $0$, by monotone convergence theorem, it converges as $l \rightarrow \infty$.
In particular, $\{S_l(x)\}_{l \ge 0}$ converges to $\hat{S}(x)$ as $l \rightarrow \infty$, where $\hat{S}(x)$ is the largest solution of the fixed point equation
\begin{align}
    \label{eq:s_fixed_point_equation}
    S = \exp ( -kr [(1-x)(1-S) + x]^{k-1} ),
\end{align}
and $S_l(x) \ge \hat{S}(x)$ for any $l \ge 0$.

\begin{lemma}
    \label{lemma:hat_s_lower_bound}
    Given $k \ge 3$ and $r \in [0,1]$, we have $\hat{S}(x) \ge 1-(kr)^2 x^{k-1}$ for any $0 \le x < x^-(k,r)$, where
    \begin{align}
        \label{eq:x_pm}
        x^\pm(k,r) = \left( \frac{1 \pm \sqrt{1-4(kr)^{-2}[(kr)^{\frac{1}{k-1}}-1]}}{2} \right)^{\frac{1}{k-2}}.
    \end{align}
\end{lemma}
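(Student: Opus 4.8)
The plan is to bound the largest fixed point $\hat S(x)$ of the map $g(S):=\exp\!\big(-kr[(1-x)(1-S)+x]^{k-1}\big)$ from below by $a:=1-(kr)^2x^{k-1}$. Since the right-hand side of the fixed-point equation \eqref{eq:s_fixed_point_equation} always lies in $(0,1]$, every solution lies in $(0,1]$; hence if $a\le 0$ there is nothing to prove, and I may assume $0<a\le 1$. The map $g$ is continuous on $[0,1]$ and satisfies $g(1)=\exp(-krx^{k-1})\le 1$. Therefore, once I show $g(a)\ge a$, the function $S\mapsto g(S)-S$ is $\ge 0$ at $S=a$ and $\le 0$ at $S=1$, so by the intermediate value theorem it has a zero in $[a,1]$; that zero is a solution of the fixed-point equation lying in $[a,1]$, and since $\hat S(x)$ is the \emph{largest} such solution we conclude $\hat S(x)\ge a$. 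Thus the whole lemma reduces to the single inequality $g(a)\ge a$.

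To prove $g(a)\ge a$ I would use the elementary bound $e^{-t}\ge 1-t$, which gives $g(a)\ge 1-kr\,[1-(1-x)a]^{k-1}$; hence it suffices to check $kr\,[1-(1-x)a]^{k-1}\le (kr)^2x^{k-1}$, equivalently (after dividing by $kr$ and taking $(k-1)$-th roots of the nonnegative quantities $1-(1-x)a$ and $(kr)^{1/(k-1)}x$) the inequality $1-(1-x)a\le (kr)^{1/(k-1)}x$. Substituting $a=1-(kr)^2x^{k-1}$ yields $1-(1-x)a=x+(1-x)(kr)^2x^{k-1}$, so for $x>0$ (the case $x=0$ being the equality $a=1=g(1)$) dividing by $x$ turns the required inequality into
\[
(1-x)(kr)^2x^{k-2}\le (kr)^{1/(k-1)}-1.
\]

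It remains to see that this holds for $0\le x<x^-(k,r)$. Writing $z=x^{k-2}$, the defining relation of $x^{\pm}(k,r)$ from \eqref{eq:x_pm}, namely $2x^{k-2}=1\pm\sqrt{1-4(kr)^{-2}[(kr)^{1/(k-1)}-1]}$, is equivalent (by squaring $2z-1=\pm\sqrt{\,\cdot\,}$) to $(kr)^2z(1-z)=(kr)^{1/(k-1)}-1$; thus $z^{\pm}:=(x^{\pm})^{k-2}$ are the two roots of this quadratic in $z$, with $0\le z^-\le \tfrac12$ whenever the discriminant is nonnegative (the only case in which $x^-$ is real and the statement is non-vacuous; it holds throughout the range $kr\ge 1$ relevant to Lemma~\ref{lemma:lower-bound-w-b}). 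Because $x\mapsto x^{k-2}$ is increasing on $[0,1]$ for $k\ge 3$, the hypothesis $x<x^-$ gives $z=x^{k-2}<z^-\le\tfrac12$, and because $t\mapsto t(1-t)$ is increasing on $[0,\tfrac12]$ this yields $(kr)^2x^{k-2}(1-x^{k-2})<(kr)^2z^-(1-z^-)=(kr)^{1/(k-1)}-1$. Finally, since $x^{k-2}\le x$ for $x\in[0,1]$ and $k\ge 3$, we have $1-x\le 1-x^{k-2}$, and multiplying by the nonnegative quantity $(kr)^2x^{k-2}$ gives $(1-x)(kr)^2x^{k-2}\le (1-x^{k-2})(kr)^2x^{k-2}<(kr)^{1/(k-1)}-1$, which is exactly what was needed.

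The argument is a chain of elementary estimates, so the main obstacle is purely organizational: getting the equivalence $g(a)\ge a\iff 1-(1-x)a\le(kr)^{1/(k-1)}x$ correct, and, crucially, recognizing that the quadratic defining $x^{\pm}$ lives in the variable $z=x^{k-2}$ rather than in $x$ — so that the crude step $1-x\le 1$ (which would produce a weaker threshold) must be replaced by the sharper $1-x\le 1-x^{k-2}$ that exactly matches the $z(1-z)$ shape of the threshold relation.
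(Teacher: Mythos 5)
Your proof is correct and follows essentially the same line of attack as the paper's: apply $e^{-t}\ge 1-t$, replace $1-x$ by the larger $1-x^{k-2}$, and recognize that the resulting threshold condition is a quadratic in $z=x^{k-2}$ whose roots are $(x^{\pm})^{k-2}$. The only cosmetic difference is that you extract a $(k-1)$-th root early on and work with the linear inequality $1-(1-x)a\le(kr)^{1/(k-1)}x$, whereas the paper keeps the $(k-1)$-th powers and factors $X^{k-1}-Y^{k-1}$ instead; your version is in fact slightly cleaner, and it sidesteps a reversed inequality sign that appears in the paper's displayed chain (the line passing from $1-x$ to $1-x^{k-2}$ is written with $\ge$ when the quantity actually decreases — the conclusion survives because the paper ultimately proves the smaller quantity is nonnegative, but your phrasing makes the logical direction explicit).
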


\begin{proof}
    Define the analytic real-valued function $F(k,r,x,s)$ by
    \begin{align*}
        F(k,r,x,s) = \exp( -kr [(1-x)(1-s)+x]^{k-1})-s
    \end{align*}
    for any $k \ge 3$, $r \in [0,1]$, $x \in [0,1]$ and $s \in \mathbb{R}$.
    Note that we have
    \begin{align*}
        F(k,r,x,s)
        &= \exp( -kr [(1-x)(1-s)+x]^{k-1} ) - s \\
        &\ge 1 - kr [(1-x)(1-s)+x]^{k-1} - s \\
        &\ge 1 - kr [(1-x^{k-2})(1-s)+x]^{k-1} - s
    \end{align*}
    since $\exp(-y) \ge 1 - y$ for any $y \in \mathbb{R}$, and $x \in [0,1]$.
    Now we set $s = 1-(kr)^2 x^{k-1}$.
    With the polynomial identity $X^{k-1}-Y^{k-1} = (X-Y)\sum_{i=1}^{k-1} X^{k-1-i}Y^{i-1}$, we then have
    \begin{align*}
        &F(k,r,x,1-(kr)^2 x^{k-1}) \\
        &\ge (kr)^2 x^{k-1} - kr [(1-x^{k-2})(kr)^2x^{k-1}+x]^{k-1} \\
        &= kr \left( [(kr)^{\frac{1}{k-1}} x]^{k-1} - [(1-x^{k-2})(kr)^2x^{k-1}+x]^{k-1} \right) \\
        &= kr \left( [(kr)^{\frac{1}{k-1}} x] - [(1-x^{k-2})(kr)^2x^{k-1}+x] \right) \cdot F_2(k,r,x) \\
        &= kr \left( (kr)^{\frac{1}{k-1}} x - (kr)^2x^{k-1} + (kr)^2x^{2k-3} - x \right) \cdot F_2(k,r,x) \\
        &= krx \left( (kr)^2(x^{k-2})^2 - (kr)^2(x^{k-2}) + ((kr)^{\frac{1}{k-1}} - 1) \right) \cdot F_2(k,r,x) \\
        &= krx \cdot F_1(k,r,x) \cdot F_2(k,r,x)
    \end{align*}
    where
    \begin{align*}
        F_1(k,r,x) &= (kr)^2(x^{k-2})^2 - (kr)^2(x^{k-2}) + ((kr)^{\frac{1}{k-1}} - 1) \text{\quad and} \\
        F_2(k,r,x) &= \sum_{i=1}^{k-1} [(kr)^{\frac{1}{k-1}} x]^{k-1-i} [(1-x^{k-2})(kr)^2x^{k-1}+x]^{i-1}.
    \end{align*}
    By the quadratic formula and (\ref{eq:x_pm}), we know that $F_1(k,r,x) \ge 0$ if $x \le x^-(k,r)$ or $x \ge x^+(k,r)$.
    In particular, $F_1(k,r,x) = 0$ if $x = x^-(k,r)$ or $x = x^+(k,r)$.
    For $x \ge 0$, we have $(kr)^{\frac{1}{k-1}}x \ge 0$ and $(1-x)^{k-2}(kr)^2x^{k-1}+x \ge 0$, and thus $F_2(k,r,x) \ge 0$.
    Hence, we have $F(k,r,x,1-(kr)^2x^{k-1}) \ge 0$ for any $0 \le x \le x^-(k,r)$.

    Now we view $F$ as a function of $s$.
    For $0 \le x \le x^-(k,r)$, we know that $F(k,r,x,1-(kr)^2x^{k-1}) \ge 0$ and $F(k,r,x,1) = \exp(-krx^{k-1}) - 1 \le 0$.
    By the continuity of $F$ as a function of $s$, there exists at least one $s_0 \in [1-krx^{k-1},1]$ such that $F(k,r,x,s_0) = 0$, which implies
    \begin{align*}
        s_0 = \exp( -kr [(1-x)(1-s_0)+x]^{k-1} ).
    \end{align*}
    By definition, $\hat{S}(k,r,x)$ is the largest solution of the fixed point equation (\ref{eq:s_fixed_point_equation}), so we have $\hat{S}(x) \ge s_0 \ge 1-krx^{k-1}$.
\end{proof}

\begin{proof}[Proof of Lemma \ref{lemma:lower-bound-w-b}]
    From Lemma \ref{lemma:s_l_decreasing}, with $x \in (r_{core}(k),r_{sat}(k))$, we know the sequence $\{S_l(x)\}_{l \ge 0}$ is decreasing and lower bounded by 0.
    By the monotone convergence theorem, the sequence $\{S_l(x)\}_{l \ge 0}$ converges as $l \rightarrow \infty$.
    In particular, it converges to $\hat{S}(x)$ which is the largest solution of the fixed point equation in (\ref{eq:s_fixed_point_equation}), and $S_l(x) \ge \hat{S}(x)$ for all $l \ge 0$.
    Therefore, we have
    \begin{align*}
        w_e(k,r)
        = \int_{0}^{1} S_R(x) dx
        \ge \int_{0}^{1} \hat{S}(x) dx.
    \end{align*}
    Furthermore, since $\{S_l(x)\}_{l \ge 0}$ is lower bounded by $0$, $\hat{S}(x)$ is non-negative for any $x \in [0,1]$.
    Thus, we have 
    \begin{align*}
        \int_{0}^{1} \hat{S}(x) dx
        \ge \int_{0}^{x^-(k,r)} \hat{S}(x) dx.
    \end{align*}
    Then, by Lemma \ref{lemma:hat_s_lower_bound}, we have $\hat{S}(x) \ge 1 - (kr)^2x^{k-1}$ for any $0 \le x \le x^-(k,r)$, which implies
    \begin{align*}
        \int_{0}^{1} \hat{S}(x) dx
        &\ge \int_{0}^{x^-(k,r)} (1 - (kr)^2x^{k-1}) dx \\
        &\ge x^-(k,r) - kr^2 (x^-(k,r))^{k} \\
        &= w_e^*(k,r).
    \end{align*}
    By directly differentiating $w_e^*(k,r)$ with respect to $k$ and $r$, we can check that $w_e^*(k,r)$ is increasing with $k$ for $k \ge 3$ and decreasing with $r$ for $r \in (r_{core}(k),r_{sat}(k))$.
    Therefore, we have $w_e(k,r) \ge w_e^*(k_0,r_{sat}(k_0))$.
\end{proof}

\section{Proof of Lemma \ref{lemma:r-1}}
\label{sec:proof-of-ogp-of-whole-instance}

The natural way to reveal the overlap gap property of the random $k$-XORSAT problem is to use the first moment method.
Given a random $k$-XORSAT instance $\mathbf{\Phi} \sim \mathbf{\Phi}_n(k,rn)$ of $n$ variables and $rn$ clauses, we first calculate the expected number of pairs of solutions with distance $\alpha n$ between them, for any $\alpha \in [0,1]$.
Let $Z(\alpha n)$ be the number of pairs of solutions with distance $\alpha n$ between them.
The expected value of $Z(\alpha n)$ is given by the following lemma.

\begin{lemma}
    \label{lemma:first-moment-method}
    Let $k \ge 3$ and $r > 0$.
    For any $\alpha \in [0,1]$, the expected value of $Z(\alpha n)$ is given by
    \begin{align*}
        \expected{Z(\alpha n)} = \frac{1}{\sqrt{2\pi n}} \frac{1}{\sqrt{\alpha(1-\alpha)}} f(k,r,\alpha)^n + o(n),
    \end{align*}
    where the real-valued function $f$ is defined by
    \begin{align*}
        f(k,r,\alpha) \equiv \frac{2}{\alpha^\alpha(1-\alpha)^{1-\alpha}} \left( \frac{1+(1-2\alpha)^k}{4} \right)^r.
    \end{align*}
    For convenience, we simply assume $\alpha^{\alpha}(1-\alpha)^{1-\alpha}=1$ when $\alpha=0$ or $\alpha=1$.
\end{lemma}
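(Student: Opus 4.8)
The plan is a direct first-moment computation: split $\expected{Z(\alpha n)}$ into a combinatorial factor counting the ordered pairs $(\sigma,\sigma')$ at Hamming distance $\alpha n$ times the probability that one fixed such pair is simultaneously satisfying, then evaluate both asymptotically. First I would count the pairs: choosing $\sigma\in\{0,1\}^n$ freely and then choosing the set of $\alpha n$ coordinates on which $\sigma'$ disagrees with $\sigma$ produces exactly $2^n\binom{n}{\alpha n}$ ordered pairs at distance $\alpha n$ (assuming $\alpha n\in\mathbb Z$; the ordered/unordered convention only affects a factor $2$ and hence not the exponential rate). By linearity of expectation and the symmetry of the random-clause distribution, the joint satisfaction probability of a pair depends on the pair only through $\alpha$, so
\[
\expected{Z(\alpha n)} = 2^n\binom{n}{\alpha n}\,\proba{\sigma,\sigma'\in\mathcal S(\mathbf{\Phi})},
\]
for any fixed pair $(\sigma,\sigma')$ with $d(\sigma,\sigma')=\alpha n$.

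Next I would evaluate the per-clause probability. The $rn$ clauses are independent, so $\proba{\sigma,\sigma'\in\mathcal S(\mathbf{\Phi})}=p(\alpha)^{rn}$ where $p(\alpha)$ is the probability that a single random clause $x_{j_1}\oplus\cdots\oplus x_{j_k}=b$ is satisfied by both $\sigma$ and $\sigma'$. Setting $\delta=\sigma\oplus\sigma'$ and $S=\{i:\delta_i=1\}$ with $|S|=\alpha n$, both are satisfied iff $\bigoplus_l\delta_{j_l}=0$ (i.e. $|T\cap S|$ is even, $T=\{j_1,\dots,j_k\}$) and $b$ is then chosen to match, which is independent and has probability $1/2$; hence $p(\alpha)=\tfrac12\proba{|T\cap S|\text{ even}}$. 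A standard alternating-sum identity gives
\[
\proba{|T\cap S|\text{ even}} = \frac12\left(1 + \frac{[z^k](1-z)^{\alpha n}(1+z)^{(1-\alpha)n}}{\binom{n}{k}}\right).
\]
Since $k$ is fixed, extracting the coefficient of $z^k$ and using $\binom{\alpha n}{j}=\frac{(\alpha n)^j}{j!}(1+O(1/n))$, $\binom{(1-\alpha)n}{k-j}=\frac{((1-\alpha)n)^{k-j}}{(k-j)!}(1+O(1/n))$ and $\binom nk=\frac{n^k}{k!}(1+O(1/n))$ collapses the sum to $\sum_j(-1)^j\binom kj\alpha^j(1-\alpha)^{k-j}=(1-2\alpha)^k$ up to a relative error $O(1/n)$, so $p(\alpha)=\frac{1+(1-2\alpha)^k}{4}+O(1/n)$.

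Finally I would assemble: Stirling gives $\binom{n}{\alpha n}=\frac{1}{\sqrt{2\pi n\,\alpha(1-\alpha)}}\,\alpha^{-\alpha n}(1-\alpha)^{-(1-\alpha)n}(1+o(1))$ for $\alpha\in(0,1)$, and $p(\alpha)^{rn}=\bigl(\tfrac{1+(1-2\alpha)^k}{4}\bigr)^{rn}$ up to a bounded factor; multiplying by $2^n$ and regrouping the base as $\frac{2}{\alpha^\alpha(1-\alpha)^{1-\alpha}}\bigl(\frac{1+(1-2\alpha)^k}{4}\bigr)^r=f(k,r,\alpha)$ yields $\expected{Z(\alpha n)}=\frac{1}{\sqrt{2\pi n}}\frac{1}{\sqrt{\alpha(1-\alpha)}}f(k,r,\alpha)^n(1+o(1))$, with the boundary cases $\alpha=0,1$ treated separately under the stated convention $\alpha^\alpha(1-\alpha)^{1-\alpha}=1$.

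The main obstacle is controlling the lower-order corrections in the coefficient-extraction step: one must verify they are of relative order $O(1/n)$ \emph{uniformly in the clause index}, so that raising $p(\alpha)$ to the power $rn$ changes the answer only by a bounded multiplicative factor and leaves the exponential rate $f(k,r,\alpha)^n$ intact — this rate is all that Lemma \ref{lemma:r-1} requires, since $r_1(k)$ is precisely the density at which $f(k,r,\alpha)=1$. A secondary point is confirming $1+(1-2\alpha)^k\ge 0$ on $[0,1]$ (immediate from $|1-2\alpha|\le 1$), with equality only at $\alpha=1$ and $k$ odd, where both sides of the asserted formula vanish because a solution and its bitwise complement can never both satisfy an odd-arity XOR clause.
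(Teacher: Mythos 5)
Your proof is correct and follows essentially the same route as the paper: linearity of expectation over ordered pairs, clause independence, the even-intersection computation for the per-clause probability, and Stirling for $\binom{n}{\alpha n}$. You are in fact more careful than the paper on one point — the paper silently replaces $\binom{\alpha n}{2i}\binom{(1-\alpha)n}{k-2i}/\binom{n}{k}$ by $\binom{k}{2i}\alpha^{2i}(1-\alpha)^{k-2i}$ as if it were exact, whereas you correctly flag that this is a $(1+O(1/n))$ approximation which, raised to the power $rn$, contributes a bounded (not $1+o(1)$) multiplicative factor; this does not affect the exponential rate $f(k,r,\alpha)^n$, which is all Lemma~\ref{lemma:r-1} needs, but it does mean the leading constant in the stated asymptotics is not quite exact.
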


\begin{proof}
    Let $Z=Z(\alpha n)$ be the number of pairs of solutions, with distance $\alpha n$ between the two solutions, for a random $k$-XORSAT instance.
    In the other words, given a random instance with linear system representation $Ax=b$,
    \begin{equation*}
        Z = \sum_{\substack{\sigma,\sigma' \in \{0,1\}^n \\ d(\sigma,\sigma')=\alpha n}} \mathbbm{1}(A\sigma=b \text{\;\;and\;\;} A\sigma'=b).
    \end{equation*}
    By linearity of expectation, the expected value of $Z$ is given by
    \begin{align*}
        \expected{Z} &= \sum_{\substack{\sigma,\sigma' \in \{0,1\}^n \\ d(\sigma,\sigma')=\alpha n}} \proba{A\sigma=b \text{\;\;and\;\;} A\sigma'=b}.
    \end{align*}
    Now we consider the calculation of $\proba{A\sigma=b \text{\;and\;}A\sigma'=b}$.
    Since each equation in the linear system are chosen identically and independently, the summand can be written as $(\proba{A_1\sigma=b_1 \text{\;and\;}A_1\sigma'=b_1})^{rn}$, where $A_1x=b_1$ is the first equation in the linear system.
    In addition, by condition probability formula we have
    \begin{align*}
        \proba{A\sigma=b \text{\;and\;}A\sigma'=b}
        &= \left( \proba{A_1\sigma=b_1 \text{\;\;and\;\;} A_1\sigma'=b_1} \right)^{rn} \\
        &= \left( \proba{A_1\sigma=b_1 \text{\;\;and\;\;} A_1\sigma=A_1\sigma'} \right)^{rn} \\
        &= \left( \proba{A_1\sigma=b_1 \;|\; A_1\sigma=A_1\sigma'} \proba{A_1\sigma=A_1\sigma'} \right)^{rn} \\
        &= \left( \frac{1}{2} \proba{A_1\sigma=A_1\sigma'} \right)^{rn}
    \end{align*}

    Since $d(\sigma,\sigma')=\alpha n$, there are $\alpha n$ variables having different values and $(1-\alpha)n$ variables having same values when we compare the assignments $\sigma$ and $\sigma'$.
    The random equation $A_1\sigma=A_1\sigma'$ holds if and only if the equation chooses even number of variables from those $\alpha n$ variables having different values in $\sigma$ and $\sigma'$.
    So, we have
    \begin{align*}
        &\proba{A_1\sigma = A_1\sigma'} \\
        &= \sum_{i=0}^{\lfloor k/2 \rfloor} \proba{2i \text{\;variables with different values in $\sigma$ and $\sigma'$ are chosen by $A_1$}} \\
        &= \sum_{i=0}^{\lfloor k/2 \rfloor} \frac{ \binom{\alpha n}{2i} \cdot \binom{(1-\alpha)n}{k-2i} }{ \binom{n}{k} } \\
        &= \sum_{i=0}^{\lfloor k/2 \rfloor} \binom{k}{2i} \alpha^{2i} (1-\alpha)^{k-2i} \\
        &= \frac{1}{2} (1+(1-2\alpha)^k)
    \end{align*}
    From above formula, we can see that the value of $\proba{A_1\sigma = A_1\sigma'}$ is independent of the choices of $\sigma$ and $\sigma'$, and only depends on the distance between $\sigma$ and $\sigma'$.
    So we have
    \begin{align*}
        \expected{Z}
        &= \sum_{\substack{\sigma,\sigma' \in \{0,1\}^n \\ d(\sigma,\sigma')=\alpha n}} \proba{A_1\sigma=b_1 \text{\;\;and\;\;} A_1\sigma'=b_1}^{rn} \\
        &= \sum_{\substack{\sigma,\sigma' \in \{0,1\}^n \\ d(\sigma,\sigma')=\alpha n}} \left( \frac{1}{2} \proba{A_1\sigma=A_1\sigma'} \right)^{rn} \\
        &= 2^n \binom{n}{\alpha n} \left( \frac{1+(1-2\alpha)^k}{4} \right)^{rn}
    \end{align*}
    By applying the Stirling's approximation for $\binom{n}{\alpha n}$ we have
    \begin{align*}
        \expected{Z}
        &= 2^n \frac{1}{\sqrt{2\pi n}} \frac{1}{\sqrt{\alpha(1-\alpha)}} \left( \frac{1}{\alpha^\alpha(1-\alpha)^{1-\alpha}} \right)^n \left( \frac{1+(1-2\alpha)^k}{4} \right)^{rn} + o(n) \\
        &= \frac{1}{\sqrt{2\pi n}} \frac{1}{\sqrt{\alpha(1-\alpha)}} \left( \frac{2}{\alpha^\alpha(1-\alpha)^{1-\alpha}} \left( \frac{1+(1-2\alpha)^k}{4} \right)^r \right)^n + o(n) \\
        &= \frac{1}{\sqrt{2\pi n}} \frac{1}{\sqrt{\alpha(1-\alpha)}} f(k,r,\alpha)^n + o(n)
    \end{align*}
    where $f(k,r,\alpha)$ is defined by
    \begin{align*}
        f(k,r,\alpha) \equiv \frac{2}{\alpha^\alpha(1-\alpha)^{1-\alpha}} \left( \frac{1+(1-2\alpha)^k}{4} \right)^r.
    \end{align*}
    For convenience, we simply assume $\alpha^{\alpha}(1-\alpha)^{1-\alpha}=1$ when $\alpha=0$ or $\alpha=1$.
\end{proof}

Fix $k \ge 3$ and $r > 0$.
If $f(k,r,\alpha') < 1$ for some $\alpha' \in [0,1]$, then the expectation $\expected{Z(\alpha' n)}$ converges to 0 as $n \rightarrow \infty$.
By Markov’s inequality, we have $\proba{Z(\alpha' n) > 0} \le \expected{\alpha' n}$, and thus $\proba{Z(\alpha' n) > 0}$ also converges to 0 as $n \rightarrow \infty$.
That means the probability of having at least one pair of solutions with distance $\alpha' n$ between them converges to 0.
In the other words, w.h.p. there is no such pair of solutions.
So, if we can find an interval $(u_1,u_2) \subset [0,1]$ such that $f(k,r,\alpha) < 1$ for any $\alpha \in (u_1,u_2)$, we can say that w.h.p. there is no pair of solutions with distance $\alpha n$ between them, for any $\alpha \in (u_1,u_2)$.
In such case, w.h.p. this distance between every pair of solutions is either $\le u_1n$, or $\ge u_2n$, that is, w.h.p. the random instance $\mathbf{\Phi}$ exhibits the overlap gap property with $v_1=u_1n$ and $v_2=u_2n$.

From the formula of the function $f$, we can see that it decreases with $r$, for any fixed $k$ and $\alpha$, illustrated in Figure \ref{fig:moving-f}.
It is more likely to have the OGP if the clause density $r$ is large.
We can further determine the minimal clause density $r_1(k)$ for having the OGP.
This leads to Lemma \ref{lemma:r-1}, with the following proof.

\begin{figure}[ht]
    \centering
    \includegraphics[scale=0.68]{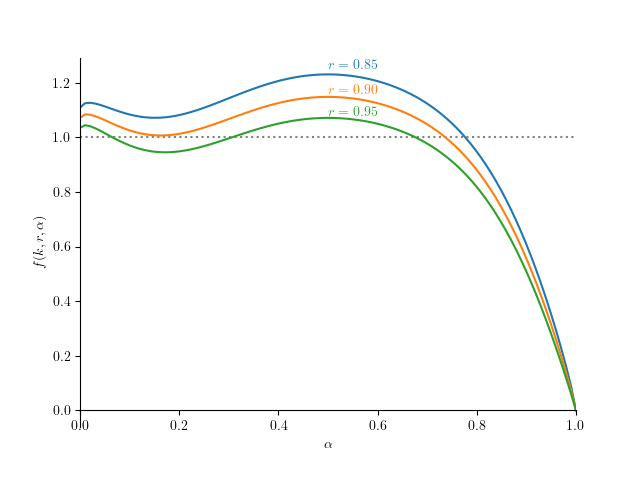}
    \caption{The graph of $f(k,r,\alpha)$ against $\alpha$, with $k=3$ and different values of $r$: (1) $r=0.85$ in blue at the top, (2) $r=0.90$ in brown in the middle, and (3) $r=0.95$ in green at the bottom.}
    \label{fig:moving-f}
\end{figure}

\begin{proof}[Proof of Lemma \ref{lemma:r-1}]
    If $k$ is odd, then we have $f(k,r,1) = 0$.
    In this case, by the continuity of $f$, there exist $u_1 < \frac{1}{2} \le u_2 = 1$ such that $f(k,r,\alpha) < 1$ for any $\alpha \in (u_1,u_2)$.

    Now we assume $k$ is even.
    When $r=0$, it is easy to see that $f(k,r,\alpha) = \frac{2}{\alpha^\alpha(1-\alpha)^{1-\alpha}} > 1$ for any $\alpha \in [0,1]$.
    If we fix $k$ and $\alpha$, then the function $f$ is strictly decreasing with $r$ since $(1+(1-2\alpha)^k)/4 < 1$.
    We aim at finding the minimal positive value of $r$ that guarantees there exists an interval $(u_1,u_2) \subset [0,1]$ such that $f(k,r,\alpha) < 1$ for any $\alpha \in (u_1,u_2)$.
    Furthermore, we know that $f(k,r,\alpha) = f(k,r,1-\alpha)$ for even $k \ge 3$, so it suffices to find the minimal positive value of $r$ that guarantees the existence of such interval $(u_1,u_2)$ within $[0,\frac{1}{2}]$.

    By fixing even $k \ge 3$ and $\alpha \in [0,\frac{1}{2}]$, we can treat $f$ as a strictly decreasing continuous function $f_{k,\alpha}(r)$ of $r$, with $f_{k,\alpha}(0)=2>1$ and $\lim_{r \to \infty}f_{k,\alpha}(r)=0<1$.
    Therefore, for any even $k \ge 3$ and $\alpha \in [0,\frac{1}{2}]$, there exists a unique $r^*(k,\alpha) > 0$ given by
    \begin{align*}
        r^*(k,\alpha) = \frac{1+H(\alpha)}{2-\log_2(1+(1-2\alpha)^k)},
    \end{align*}
    such that
    \begin{align*}
        f_{k,\alpha}(r) &> 1 \text{\quad for } r < r^*(k,\alpha), \\
        f_{k,\alpha}(r) &= 1 \text{\quad for } r = r^*(k,\alpha), \text{ and} \\
        f_{k,\alpha}(r) &< 1 \text{\quad for } r > r^*(k,\alpha)
    \end{align*}
    where $H$ is the binary entropy function $H(x)=-x\log_2(x)-(1-x)\log_2 (1-x)$.
    Then, we can define $r_1(k)$ and $\alpha_1(k)$ by
    \begin{align*}
        r_1(k) = \min_{0\le\alpha\le \frac{1}{2}}r^*(k,\alpha)
        \text{\quad and \quad}
        \alpha_1(k) = \argmin_{0\le\alpha\le \frac{1}{2}}r^*(k,\alpha)
    \end{align*}
    with $r^*(k,\alpha_1(k)) = r_1(k)$.
    Suppose $r > r_1(k)$.
    Since $f$ is strictly decreasing with $r$, we have $f(k,r,\alpha_1(k)) < f(k,r_1(k),\alpha_1(k)) = f_{k,\alpha_1(k)}(r_1(k)) = f_{k,\alpha_1(k)}(r^*(k,\alpha_1(k))) = 1$.
    By the continuity of $f$, there exist $0 \le u_1 < \alpha_1(k)$ and $\alpha_1(k) < u_2 \le \frac{1}{2}$ such that we have $f(k,r,\alpha) < 1$ for any $\alpha \in (u_1,u_2)$.

    Therefore, with Lemma \ref{lemma:first-moment-method}, we know that the expected number $\expected{Z(\alpha n)}$ of pairs of solutions with distance $\alpha n$ between them converges to zero for any $\alpha\in(u_1,u_2)$.
    By the first moment method, w.h.p. there is no pair of solutions with distance $\alpha n$ between them, for any $\alpha\in(u_1,u_2)$.
    In the other words, w.h.p. for any pair of solutions $\sigma$ and $\sigma'$ of $\mathbf{\Phi}$ the distance $d(\sigma,\sigma')$ between them is either smaller than $u_1 n$ or larger than $u_2 n$ for some $u_1$ and $u_2$ with $0 \le u_1 < u_2$.
\end{proof}

\end{appendices}


\end{document}